\documentclass{article}
\usepackage{kr}
\usepackage{times}
\usepackage{soul}
\usepackage{url}
\usepackage[hidelinks]{hyperref}
\usepackage[utf8]{inputenc}
\usepackage[small]{caption}
\usepackage{graphicx}
\usepackage{amsmath,dsfont}
\usepackage{stmaryrd} % for an arrow
\usepackage{booktabs}
\usepackage{algorithm}
\usepackage{algorithmic}
\usepackage{mathabx}
\usepackage{amsmath,verbatim,bussproofs}

\newcommand{\eg}{\emph{e.g.}, }
\newcommand{\ie}{\emph{i.e.}, }
\newcommand{\rr}{{\mathit rn}}
\newcommand{\dob}{\bigcirc}
\newcommand{\kb}{\mathbf{R}}
\newcommand{\kbc}{\mathbf{R}^{\Rightarrow}}
\newcommand{\kbo}{\mathbf{R}^{\mathrm{o}}}

\usepackage{gn-logic14}
\DeclareSymbolFont{symbolsC}{U}{txsyc}{m}{n}
\DeclareMathSymbol{\strictif}{\mathrel}{symbolsC}{74}
\DeclareMathSymbol{\boxright}{\mathrel}{symbolsC}{128}
\DeclareMathSymbol{\boxRight}{\mathrel}{symbolsC}{136} % Lewis’s stronger ‘would’ counterfactual
\DeclareMathSymbol{\diamondRight}{\mathrel}{symbolsC}{140} % Lewis’s stronger ‘might’ counterfactual
\DeclareMathSymbol{\diamonddot}{\mathord}{symbolsC}{144} % Lewis’s inner necessity

\newcommand{\m}{\mathrm{m}}
\newcommand{\PL}{\mathrm{PL}}

\newcommand{\maxf}{\mbox{\it maxf}}

\newcommand{\outf}{\mbox{\it outf}}
\newcommand{\modpl}{\models_{\rm PL}}

\usepackage{latexsym}
\usepackage{hyperref}
\usepackage{thmtools,amsthm} %restatable
\usepackage{thm-restate}
\newtheorem{example}{Example}
\newtheorem{theorem}{Theorem}
\newtheorem{definition}{Definition}
\newtheorem{fact}{Fact}
\newtheorem{assumption}{Assumption}
\newtheorem{remark}{Remark}
\newtheorem{proposition}{Proposition}
\newtheorem{lemma}{Lemma}
\usepackage{cancel}
\title{Defeasible Conditional Obligation in  a Two-tiered Preference-based Semantics (Extended Version)}

\author{% 
Xavier Parent
    \affiliations
    Institute of Logic and Computation, TU Wien, Vienna, Austria
    \emails
    xavier.parent@tuwien.ac.at
}
\begin{document}

\maketitle

\begin{abstract}

%This paper addresses a problem posed by Horty by providing a preference-based, two-tiered account of defeasible conditional obligation. The two-tiered structure is reflected in two respects. First, on the semantic level,
%a nonmonotonic reasoning mechanism is added to a monotonic logic, using a set of conditionals representing the normative system. 
%%a nonmonotonic reasoning mechanism is obtained by parameterizing an otherwise monotonic consequence relation with a set of conditionals  representing the normative system. 
%Second, ideality and normality orderings in the models to avoid problems in earlier approaches, advocating a distinct ranking method for each.
%%This paper introduces an ideality and normality ordering to avoid problems in earlier approaches.

%In response to a problem posed by Horty, this paper presents a two-tiered account of %defeasible conditional obligation within a preference-based framework. 

In response to a concern raised by Horty, this paper 
develops a two-tiered, preference-based semantic framework for modeling defeasible conditional obligations. 
The paper extends a Hansson–Lewis style preference semantics for dyadic deontic logic by incorporating a nonmonotonic reasoning mechanism that enables previously derived obligations to be withdrawn when new, potentially conflicting information comes in. 
%provides a two-tiered, preference-based semantic account of defeasible conditional obligation. It extends a dyadic deontic logic with a nonmonotonic reasoning mechanism that allows for the withdrawal of obligations in  light of new information.
%In response to a problem posed by Horty, this paper proposes a two-%tiered, preference-based semantic account of defeasible conditional %obligation.
%It extends a dyadic deontic logic with a nonmonotonic reasoning %mechanism,
%which permits the retraction of obligations in the light of new information.
The account is bi-preferential: two orderings$-$ideality and normality$-$on worlds are employed to address shortcomings in earlier approaches, with a separate ranking method for each. 
At the nonmonotonic layer, a number of postulates are considered, including antecedent strengthening, inclusion and no-drowning. A connection is established with so-called constrained input/output (I/O) logic$-$an existing standard for normative reasoning based on a different methodology.
\end{abstract}

\section{Introduction}

Obligations are inherently conditional and take the form of conditional statements. Their logical analysis is the focus of the logics of conditional obligation or dyadic deontic logics. These can be classified into two categories: preference-based systems and norm-based systems.
Preference-based systems analyse the conditional obligation operator within the framework of possible world
semantics.  A  preference relation ranks possible worlds in terms of betterness or ideality. These systems can be seen as the deontic cousin of the KLM systems for nonmonotonic inference  \cite{ddl:KLM90}.\footnote{\cite{Makinson93} describes them as a nonmonotonic formalism that was ``ahead of its time'' (\emph{avant la lettre}) and identifies them as the fifth—and ``most complex'' (p. 374)—face of minimality.} 
%They can be considered as the deontic analog of the KLM systems for nonmonotonic inference operations.  Makinson \cite{Makinson93} describes  them as a nonmonotonic formalism ahead of its
%time (``avant la lettre'').
 Notable systems in this category include: \cite{H69}'s DSDL systems; \cite{ddl:vanFraassen1972}'s system CD; \cite{L73}'s system VN; \cite{Gob03}'s system DP;  \cite{ddl:A02}'s systems E, F and G; and \cite{ddl:kr12}'s semantics.  In contrast, norm-based systems analyse the conditional obligation operator with reference to a set of explicit norms representing the normative system. The semantics is operational rather than model-theoretic. The meaning  of the conditional obligation operator is given through a  set of procedures that yield outputs based on given inputs. Prominent systems in this category are: \cite{horty12}'s  deontic default logic; \cite{MakinsonT00}'s input/output (I/O) logic; and \cite{Hansen08}'s and \cite{serg25}'s imperativist semantics for deontic logic.

The methodologies in these two groups differ substantially, leading to distinct semantic frameworks. Each type excels in different areas.  Preference-based systems are particularly effective at handling contrary-to-duty (CTD) reasoning, such as \cite{chisholm}’s paradox--a core problem in deontic logic dealing with norm violation. Norm-based systems are well-suited for defeasible, or nonmonotonic, reasoning: they can accommodate norms with exceptions, allowing conclusions to be revised or withdrawn as new information becomes available.
Unifying these approaches so as to combine their respective strengths remains a central challenge,  identified as an open %research question 
problem by \cite{Horty2014}.

This paper focuses on concerns raised by Horty
\cite{Horty1993,Horty1994,Horty07,horty12,Horty2014} concerning the ability of preference-based systems to accurately model defeasible normative reasoning. 
The objections are summarized, and a new ordering semantics is proposed to address them. The approach is two-tiered in two respects:
\begin{itemize}
\item[(i)]  A preference-based dyadic deontic logic is extended with a nonmonotonic reasoning mechanism; nonmonotonicity is obtained by (indirectly) parameterizing an otherwise monotonic consequence relation with a set of norms representing the normative system;
\item[(ii)] Two orderings--ideality and normality--on worlds are employed to address issues in previous implementations, notably the fallacy of the prohibited exception (see \S \ref{fal:exc}); each ordering uses a distinct ranking method.
%The paper introduces two orderings—one representing
%  ideality, the other normality—in the models to resolve problems  in %earlier implementations of (i), such as those of %\cite{Lehmann95,Delgrande20}, most notably the fallacy of the %prohibited exception (see \S \ref{fal:exc}). A different ranking method %for each type of ordering is advocated.
%namely the mistaken treatment of an exception to a norm as itself prohibited. (see \S %\ref{fal:exc}).
\end{itemize}
%The approach is two-tiered in two respects.
%\begin{itemize}
%\item [(i)] A nonmonotonic reasoning mechanism is obtained by parameterizing an otherwise monotonic consequence relation with a set of conditionals representing the normative system;
%\item [(ii)] Models are equipped with two orderings, capturing ideality and normality. This is needed to avoid problems in previous attempts for (i). 
%%A nonmonotonic mechanism is imposed on a monotonic framework by (indirectly) parameterizing the consequence relation with a set of conditionals, representing, for example, the underlying normative system.
%\end{itemize}
The deontic fragment of the proposed framework is  faithfully embedded into constrained input/output (I/O) logic,
a well-established norm-based system due to \cite{DBLP:journals/jphil/MakinsonT01}.\footnote{Overview in \cite{Parent2013b}.} 
Earlier approaches either focus on so-called unconstrained I/O logic, which does not support nonmonotonic nor CTD reasoning \cite{MakinsonT00,Boch05,ParentT14,Stolpe20,Parent21,CiabattoniR23}, or adopt a proof-theoretic perspective based on adaptive logic or argumentation \cite{StrasserBP16,Arieli0S24}.
This paper adopts a model-theoretic, KLM-style perspective on constrained I/O logic, extending previous findings for the unconstrained case. To my knowledge, it is the first endeavor of its kind.\footnote{This paper is an extended version of \cite{Parent26b}.}

\section{Preliminaries} \label{h:op}

%that combine the advantages
%of both. 
Horty's %skepticism touches on some intricate aspects of logic and semantics. The 
main point is that normative reasoning is defeasible, and thus nonmonotonic: a conclusion drawn can be revised by new information. %$-$a type of non-monotonic reasoning where conclusions can be invalidated by further evidence. 
Preference-based dyadic deontic logic was developed as a species of modal and conditional logic.  The entailment (or consequence) relation is monotonic: once conclusions are drawn, they cannot be revoked. 
%This makes this logic unsuitable for normative reasoning. 

Nonmonotonicity may be understood as a property of either the conditional, in the object language, or the entailment relation, in the meta-language. Following Horty and others, this paper adopts the latter perspective, justifying it with reference to what may be called the strengthening problem, a central issue in normative reasoning

%The property of nonmonotonicity can, at first glance, be located either in the object language, as a property of the conditional, or in the meta-language, as a property of the entailment relation. Following Horty and others, I adopt the second approach and justify this choice by appealing to what I will call the strengthening problem, which is central to normative reasoning.

%Following Horty and others, I adopt the second approach and justify this choice as follows, with reference to (as I will call it) the strengthening problem. I consider it as a central problem for normative reasoning.

%The property of nonmonotonicity can be either placed at the object-language (as a %property of the conditional) or in the meta-language (as a property of the %entailment relation). Following Horty and others, I go the second, and justify %this choice as follows.

%This section has three parts. First, it lays out Horty’s problem in detail. Next, %it looks at one possible response$-$the enthymematic approach$-$and explains why %it is not adopted. Finally, it gives an independent argument for using different %ranking methods for deontic and normality conditionals, which is an important part %of the solution being proposed.

\subsection{The Strengthening Problem} \label{fal:exc}
%Horty's work \cite{Horty1993,Horty1994,Horty07,Horty2014}, in particular some of his objections in \cite{Horty2014} to the classical semantics of deontic modals, as exemplified in %Kratzer \cite{Krat12}'s work. Horty's skepticism touches on some intricate aspects of logic and semantics. The main point is that defeasible reasoning can be revised or overridden by %new information$-$a type of non-monotonic reasoning where conclusions can be invalidated by further evidence. In contrast, the consequence relation of preference-based semantics for %dyadic deontic logic is monotonic: once conclusions are drawn, they cannot be revoked.
  
%It is time to clear up a potential misunderstanding. 

%Nonmonotonicity can be either placed at the object-language, as a property of the 

$\bigcirc (B/A)$ may be read as ``It ought to be the case that $B$ if $A$". The principle of \emph{Strengthening of the Antecedent}~(SA) allows one to infer $\bigcirc (B/A\wedge C)$ from $\bigcirc (B/A)$. 
This principle is central to normative reasoning: $\bigcirc (B/A)$ is general to the extent that $A$ can be ``freely'' strengthened. % and made more specific. 
   In practice, $A$ occurs with some background condition 
$C$. The question is whether $B$
remains obligatory. By default, it does$-$unless 
$C$ cancels the obligation.
Without the ability to strengthen antecedents, obligations would fail as cues for action. %For instance, a child told ``Do not eat with your fingers!" naturally assumes that the %prohibition  still holds in more specific contexts, i.e., ``if eating asparagus.''
%,’ for example—because he freely strengthens the antecedent. 

The strengthening problem concerns finding a middle ground between unrestricted strengthening and none
at all.
%balancing unrestricted strengthening of antecedents with no strengthening at all. 
 A child  told ``Do not eat with your fingers'' naturally assumes that the prohibition extends to more specific contexts, like ``if eating asparagus,'' unless  instructed otherwise. Accordingly, antecedents can be strengthened as much as possible, 
treating the obligation as generally as possible—two ways of expressing the same idea—unless doing so conflicts with available evidence.
%, which prevents deriving conclusions contradicted by known facts.
%which ensures the framework avoids drawing conclusions contradicted by known facts.
%For example, a child told ``Do not eat with your fingers'' naturally assumes the prohibition still holds in more specific contexts, such as ``if eating asparagus,'' unless explicitly instructed otherwise.
%unless overridden by evidence to the contrary. 
%The strengthening problem is the problem of finding the ``right'' balance between unrestricted strengthening and no strengthening at all. A natural approach is to permit the strengthening as much as possible and treat the obligation \emph{as generally as possible}, unless explicitly instructed otherwise. 
%The solution advocated by horty consists in  allowing it ``as much as possible", unless it generates trouble, particularly if there is evidence to the contrary. 
Implementing this idea requires a nonmonotonic entailment relation, which existing norm-based systems can handle easily. The question is whether preference-based dyadic deontic logic 
can accommodate the same kind of defeasible reasoning.

%applies when served asparagus; he ``freely'' strengthens the antecedent. It is only when a more specific overriding obligation is added  (``Eat with your fingers when eating asparagus'') that the default %inference is blocked.
%The solution to the strengthening problem developed in this paper is in line with established ideas in KRR.  It consists in allowing the strengthening ``as much as possible", unless it generates trouble, %particularly if there is evidence to the contrary. This necessitates a nonmonotonic entailment relation.  

The proposed  approach differs from the ``enthymematic'' treatment of the strengthening problem \cite{ddl:G04,BGL14}. The latter keeps a monotonic entailment relation, but restricts (SA) suitably.  A seemingly valid strengthening is accounted for by making a ``hidden'' premise$-$typically, a permission--explicit. Due to space limitation, a discussion of the enthymematic approach must be deferred to another occasion.

%%This example shows that the entailment relation must be nonmonotonic: the addition of new information may yield to the withdrawal of a previously drawn conclusion.   %I recall it briefly.
To illustrate the strengthening problem, Horty uses the following example.\footnote{As usual $\bigcirc A$ is short for  $\bigcirc (A/\top)$, where $\top$ is a tautology.}
%argue that a certain amount of strenghening is needed; the example is handled similarly using the previous defeasible form of antecedent strenghening. 

\begin{example}[The asparagus, \cite{Horty1993}] \label{asparagus} Consider the following rules of etiquette:
\begin{description}%\vspace{0.2cm}
\item [{\rm (i)}]  You ought not to eat with your fingers: $\bigcirc \neg \mathit{f}$;
\item [{\rm (ii)}]  You ought to put your napkin on your lap: $\bigcirc\mathit{n}$;
\item [{\rm (iii)}]  If you are served asparagus, you ought to eat with your fingers: $\bigcirc(\mathit{f}/\mathit{a})$.
\end{description}%\vspace{0.2cm}
Horty notes that, assuming (i) exhausts the premise set, (iv) should be derivable: 
%one should be able to derive 
\begin{description}%\vspace{0.2cm}
\item [{\rm (iv)}]  If you are served asparagus, you ought not to eat with your fingers: $\bigcirc (\neg\mathit{f}/\mathit{a})$.
\end{description}%\vspace{0.2cm}
He goes on to observe that, if the premise set is (i)-(iii), then (iv) should not be derivable$-$(iv) contradicts (iii). Accordingly, 
(SA) cannot be endorsed as it is.  However, some strengthening is necessary: just as one wants to infer (iv) from (i) alone, one  also wants to infer (v) from (i)–(iii).
\begin{description}%\vspace{0.2cm}
\item [{\rm (v)}]  If you are served asparagus, you ought  to put your napkin on your lap: $\bigcirc (n/\mathit{a})$.
\end{description}%\vspace{0.2cm}
The reason is that (iii) defeats or overrides (i), but not (ii). Hence, a conditionalization on
$a$ should be permitted for (ii), but not for (i).
%The intuition is that (iii) overrides (i), but not (ii), so that one should be able to %conditionalize
%(ii) to $\mathit{a}$, but not (i).
 %$A > (\cdot)$ 
\end{example}

Techniques were developed in Knowledge Representation and Reasoning (KR) to resolve a similar problem.\footnote{\label{myfootnote}
The asparagus scenario resembles what Pearl calls the problem of “property inheritance [blocking] from classes to subclasses” \cite[p. 128]{Pearl90}. If a subclass of $C$ (e.g., penguins as birds) is exceptional with respect to one property (flying), it may fail to inherit other typical properties of $C$ (such as having wings). This illustrates what \cite{Lehmann95} calls ``presumptive reasoning'': properties of a class are assumed to hold for its members unless there is evidence to the contrary. Intuitively, one wants to infer ``penguins have wings'' from ``birds have wings''. The formalization in \cite[Ex.~4]{CasiniMV19} further clarifies the analogy between the two cases.}
%The asparagus scenario resembles what Pearl calls the problem of “property inheritance [blocking] from classes to subclasses” \cite[p. 128]{Pearl90}. If a subclass of $C$ (e.g., penguins as birds) is exceptional with respect to one property (flying), it may fail to inherit other typical properties of $C$ (such as having wings).
%This illustrates what  \cite{Lehmann95} calls ``presumptive reasoning'': properties of a class are assumed to hold for its members unless there is evidence to the contrary. Ideally, one would like to infer from ``birds have wings'' that ``penguins have wings.'' The formalisation in \cite[Ex. 4]{CasiniMV19} makes the analogy between the two examples even more striking. } An example is \cite{Lehmann95}'s lexicographic entailment relation. The key idea is to  constrain the ideality relation in the models further using a set of designated conditionals.\footnote{A similar proposal was made in dyadic deontic logic by \cite{ddl:PS97}, but for a different purpose. 
 %A more thorough comparison between the account described in this paper and theirs is left as a topic for future research.
% } 
  %To cope with the later problem, he basic idea is to refine the ordering in a typical model of the premise set. This gives more control over the obligation sentences outputted as solutions. Other similar techniques have been developed in nonmonotonic logic. I'll focus on a variant recently put forth by Delgrande \cite{Delgrande20}. 
While one might consider applying them to deontic logic, Horty dismisses this idea: %dismisses this idea:
%One could think that they could be used in a deontic setting. However these solutions are dismissed by Horty:

%
%\begin{center}
%	\begin{minipage}{12cm}
\begin{quote}
``Although I can offer no argument to show that this cannot be done, I have not seen it done successfully, and I am skeptical: being served cold asparagus does not seem like the sort of thing that should force us to consider nonideal worlds.'' \cite[p. 449]{Horty2014}
\end{quote}
%\end{minipage}
%\end{center}

%\noindent
Horty's reference to ``being served cold asparagus'' highlights the following point:
%The sentence ``being served cold asparagus does not seem like the sort of thing that should force us to consider non-ideal worlds'' refers to the following fact:
% that in Kratzer's semantics (and in Hansson's semantics), $\{\bigcirc \neg f, \bigcirc (f/a)\}$ entails $\bigcirc \neg a$, and hence $a$ can only be true in nonideal worlds. 

\begin{quote}
``In the classical [viz., preference-based] framework one cannot have $\bigcirc \neg f$ and $\bigcirc (f/a)$ without  $\bigcirc \neg a$, because the first two statements entail the third." \cite[p. 446]{Horty2014}
\end{quote}
 The truth-conditions put $\bigcirc (B/A)$  true if all the best $A$-worlds are $B$-worlds. Accordingly, $\bigcirc \neg f$ states that $f$ is false in all the best $\top$-worlds, and $\bigcirc (f/a)$ states that  all the best $a$-worlds are $f$-worlds. It follows that $a$ is false in all the best $\top$-worlds; that is, $\bigcirc \neg a$.\footnote{Suppose not. Then
 a best $\top$-world $w$ is an $a$-world. Since $\bigcirc \neg f$,  $w$ makes $f$ false.  However, by $\bigcirc (f/a)$, no  best $a$-world makes $f$ false, so $w$ is not a best $a$-world.  It follows that $w$ is not a best $\top$-world, in contradiction with our assumption.  The argument  is informal, and can be verified more rigorously using the evaluation rule
 that identifies the best worlds with maximal ones (cf. \S \ref{account}).} Hence, $a$-worlds are ``non-ideal" overall. The exception (``being served asparagus'') is treated as a prohibition for the only reason that it is an exception.

% \vspace{-0.2cm}
%\begin{center}
%	\begin{figure}[h]
%		\includegraphics{pic/f-dis}\caption{The prohibited exception
%		%f-DIS ranking.  A bar ``$\bar{\;}$'' over a propositional letter means ``$\neg$''. Falsifying a more specific conditional does not reduce a world's normality.
%		%The exceptional worlds each falsify one conditional. 	
%		}\label{tab0}
%	\end{figure}
%\end{center}
%\vspace{-0.5cm}
%
%\bigskip
%\begin{minipage}{0.5\textwidth}
%\centering
%\begin{tabular}{ll}
%best: & $ \bar{f}$ \\
%2nd-best: & $f,a$% $\qquad $ $r, \bar{a}$   \qquad  $\bar{r}, a$
%\end{tabular}
%%\captionof{table}{f-DIS ranking.  The exceptional worlds each falsify one conditional. A bar ``$\bar{\;}$'' over a propositional letter means ``$\neg$''.}\label{tab1}
%\end{minipage}
%%\%end{center}
%\bigskip

%and hence its negation must be true there.  $a$ cannot be true in any of the best possible worlds, because $\bigcirc \neg f$ tells is that $\neg f$ is true there, and $\bigcirc (f/a)$ tells us that  the best $a$-worlds are $f$-worlds. 

% its negation holds inIf this were not the case, %$\bigcirc (f/a)$. 
%The classical semantics transforms the exception into a prohibition: being served asparagus ($a$, the exception) is forbidden, since its negation holds in all the best worlds. If this was not so, the best $a$ would not be $f$. 

In \cite{Ryle38}'s terminology, such an outcome constitutes a category mistake—akin to asserting that ``The number two is blue''—and may be termed the ``fallacy of the prohibited exception.''
 Given its centrality, particularly in \cite{Horty2014}, 
it is presented below as a separate observation.
%I will present it below as aobservation.
This problem affects the classical semantics, but also some of the more recent works in nonmonotonic logic mentioned above, like \cite{Lehmann95}'s lexicographic entailment relation.
%Both are already disqualified by the fact that their consequence relation is non-monotonic. 
%and a number of approaches developed in non-monotonic logic. 
However, it does not apply to \cite{Delgrande20}'s account. %As will be argued in Section \ref{excepviol}, 
But the cost  is significant.
% $\bigcirc \neg a$ is eliminated, but only because $\bigcirc \neg f$ disappears too. 
%The account cannot distinguish a violation from an exception.
The obligation $\bigcirc \neg a$ is lost only because the more general obligation $\bigcirc \neg f$ is also eliminated; in context $\top$, the general obligation simply disappears.
This is the drowning effect \cite{Broersen2012,ParentT17,ParentT18}.\footnote{
This is distinct from the ``drowning" problem described in \cite{ben93}, which is another term for Pearl's problem of property inheritance.}
%
%The drowning effect (for deontics) is discussed in \cite{PS96, ddl:PS97}, although they do not use this term. It is described as a problem for solutions to the contrary-to-duty (CTD) problem based on %nonmonotonic logics (\cite{RL94, McC94}). Prakken and Sergot write: ``in circumstances where a primary obligation is
%violated, consistency is maintained by regarding the derivation of the primary
%obligation as somehow blocked [...] What [this solution] fails to capture is that
%when the secondary obligation, say to kill gently, is being fulfilled, at the
%same time the primary obligation not to kill is being violated: violating an
%obligation in a situation does not make it inapplicable to that situation once violated." \cite[p. 227]{ddl:PS97}. The main difference is that the primary obligation ``disappears" when violated.}
This poses a problem for norm-compliance checking. In certain applications, it may be desirable to determine whether $\bigcirc (B/A)$
has been violated or fulfilled. Intuitively, such a check assumes that the obligation holds, but in the model, the obligation does not hold.\footnote{The drowning effect (for deontics) was originally discussed in \cite[p. 98ff]{PS96}, although they do not use this term. There it is described as a problem for solutions to the contrary-to-duty (CTD) problem based on nonmonotonic logics \cite{RL94,McC94}. In these approaches,
an obligation disappears when violated. 
For example, in the so-called cottage regulation example, if we can no longer infer that there should be no fence (because a fence exists), how do we formally represent that a violation has occurred?
%One may argue that if an obligation is to be preserved after violation, then, a fortiori, it ought to be preserved in a non-violation context as well.
%In a violation context, the violated obligation (e.g., not to kill) disappears. If one %wants it to remain in a violation context, one a fostiori wants it to remain when not violated. 
% if it is desirable it rem
 % They write: ``A crucial effect of letting the primary
% obligation [e.g., not to kill]  be defeated by the secondary obligation [e.g., to kill %gently] is that the primary one
%cannot be violated, since it is simply not applicable to the situation.'' \cite[p. 98]{PS96}
 }

\begin{example}[Prohibited exception, drowning effect]\label{prohib-excep}  From
\begin{description}%\vspace{0.2cm}
\item  [{\rm (i)}] You ought not to eat with your fingers: $\bigcirc \neg \mathit{f}$;
\item  [{\rm (ii)}] If you are served asparagus, you ought to eat with your fingers: $\bigcirc(\mathit{f}/\mathit{a})$;
\end{description}%\vspace{0.2cm}
one wants to be able to infer
\begin{description}%\vspace{0.2cm}
\item  [{\rm (i)}] You ought not to eat with your fingers: $\bigcirc \neg \mathit{f}$;
\end{description} %\vspace{0.2cm}
but not
\begin{description}%\vspace{0.2cm}
\item [{\rm (iv)}]   You ought not to be served asparagus: $\bigcirc\neg \mathit{a}$.
\end{description} %\vspace{0.2cm}
\end{example}

%

%This objection may be called the objection of the "Prohibition of %Exceptionalism". 
%It applies to Lehman's lexicographic entailment, but not to Delgrande's %variant account. However, the price to be paid is high:  $\bigcirc \neg %a$ goes away, but only because $\bigcirc \neg f$ goes away too. 

%The consequence relation is parametrised by two components: deontic and normality %conditionals. 

%The solution to the strengthening problem developed in this paper is in line with established ideas from the literature on default reasoning in logic and computer science. It consists in allowing the %strengthening ``as much as possible", unless it generates trouble, particularly if there is evidence to the contrary. This necessitates a nonmonotonic entailment relation. What's new, however, is its %implementation within a possible worlds semantic framework. This approach differs from the ``enthymematic'' treatment of the problem, which consists in restricting (SA) while keeping a monotonic %entailment relation. Seemingly valid strengthening are then accounted for by making a ``hidden'' premise explicit. Section \ref{enth} will discuss this method in more detail and explain why a different path %has been chosen.

A bi-ordering semantics is employed to address both problems, separating ideality and normality rankings.
%\footnote{This suggestion was previously made in \cite[\S 8]{CPvTT18}, though no details were provided.} 
The language  of dyadic deontic logic  will thus be supplemented with a normality conditional  $\Rightarrow$, expressing what typically holds, relative to a notion of normal or non-exceptional situation.\footnote{A classic reference on the modal approach to normality  is \cite{BOUTILIER199487}.} For $A\Rightarrow B$, read ``If $A$, then normally~$B$". The basic idea is to constrain  the orderings on worlds in a model further using a designated set of $\bigcirc$- and $\Rightarrow$-formulas. While the use of designated obligations to constrain an ordering on worlds is not new, our approach introduces a second ordering together with a separate set of (normality) conditionals associated with it.
%Different ranking methods will be used for normality and ideality. This will be motivated in Section \ref{rank}. 

\subsection{Ranking Worlds} \label{rank}

%For normality,  I will use \cite{Lehmann95}'s lexicographic ordering, while for betterness I will use \cite{Delgrande20}'s method.  I will call them LEX and DEL, respectively.

 %the basic idea is to constrain the ordering in the models to filter out conclusions, or enforce them in a principle way. 
%I will use different ranking methods for normality and betterness. 

 This subsection offers an independent justification for employing distinct ranking methods for deontic and normality conditionals, a key component of the solution to Horty's problem.
 % The two methods revolve around the notion of violation$-$which has become a first-class citizen in deontic logic since the seminal work of \cite{K57,And58}$-$and its counterpart in the modal logic of %normality conditionals. 
 In what follows, the generic term `rule' and the symbol $r$ are used to refer indistinctly to a conditional obligation and a normality conditional.

For normality,  \cite{Lehmann95}'s lexicographic ordering method will be used, while for ideality \cite{Delgrande20}'s method will be used.  They will be called LEX and DEL, respectively. Both are variations on a method of ranking that is familiar from conditional logic and deontic logic.
Assessing the   normality (resp. goodness) of a world involves examining how well it conforms to the normality conditionals (resp. obligations). Fewer violations correspond to a higher degree of normality (resp. ideality).
% It consists in evaluating the normality (resp. goodness) of a possible world  based on the number of conditionals (resp. obligations) this world falsifies (resp. violates). Specifically, a world 
%$w_1$ is considered at least as normal (resp. good) as another world $w_2$ if  it  falsifies (resp. violates) fewer or an equal number of conditionals (resp. obligations) compared to 
% $w_2$.  
This ranking method is called f-DIS, for flat distance-based. The relation captures an abstract notion of distance to ``full'' normality (resp. ideality). The adjective ``flat'' is meant to emphasize the absence of hierarchy among conditionals (resp. obligations).\footnote{Three accounts based on f-DIS are: the semantics of the inference relation of a Poole system \cite[\S 3.1]{FREUND1998209},  \cite{ddl:kr12}'s semantics  and \cite{BGL14}'s theory of P-graphs (for the special case of a ``flat'' P-graph). 
\cite{mak99} discusses how f-DIS relates to \cite{Jorgensen1937}'s dilemma$-$the problem of whether norms bear a truth-value and whether a logic of norms can exist.}  
%f-DIS aligns with the broader possible worlds semantics used in analyzing counterfactuals, as developed by philosophers like \cite{Stalnaker1968-STAATO-5}, \cite{L73}, and others. In their theories, the truth %of a counterfactual depends on the ``closeness'' or similarity of possible worlds to the actual world, with preference given to worlds that minimally diverge from the actual world.  For obligations, what is %measured  is not  the closeness of possible worlds to the actual world, but their closeness to ``full'' ideality, with preference given to worlds that minimally diverge from a world where all the obligations are %fulfilled.  For  normality conditionals, what is measured  is the distance of possible worlds to a world that is as normal as possible. 
%Thus, assessing the  goodness or normality of a world involves examining how well it conforms to the established obligations or conditionals$-$fewer violations correspond to a higher degree of ideality or %normality. This comparison may be made by counting violations (i.e., by cardinality) or via set-theoretic inclusion, or, equivalently, by comparing the set of satisfied conditionals or fulfilled obligations rather %than the falsified or violated ones.\footnote{Like in \cite{ddl:kr12}'s semantics or in \cite{BGL14}'s theory of P-graphs (in the particular case of a ``flat'' P-graph, see Section \ref{p-graph} below).}
% Regardless of the chosen method, 

 A problem arises with rules allowing for exceptions.
 %which are
%commonly formalized by a more specific rule %overriding a more general one.
% an exception invokes a specificity relation: when one antecedent is strictly more specific than another, the specificity rule dictates that the conditional with the
%a rule with a more specific antecedent overrides one with a less specific antecedent. 
Consider the pair of conditionals ``you normally do not eat asparagus'' ($\top \Rightarrow\neg a$) and ``if you are at the restaurant, you normally eat asparagus'' ($r \Rightarrow a$).   A normality conditional (resp. an obligation) $r$ is said to be falsified (resp. violated)  in a world if its antecedent is true and 
its consequent false in this world. In this example, f-DIS yields that a $r\wedge a$-world and  a $r\wedge\neg a$-world are equally exceptional. This is shown in Fig.~\ref{tab1}. Paradoxically, falsifying a more specific conditional does not make a world less normal compared to falsifying a more general one--the two are treated equally.
%reduce a world’s normality. It  
%does not make a world less normal compared to falsifying a more general conditional%—the two are treated equally.
  \vspace{-0.2cm}
\begin{center}
	\begin{figure}[h]
		\includegraphics{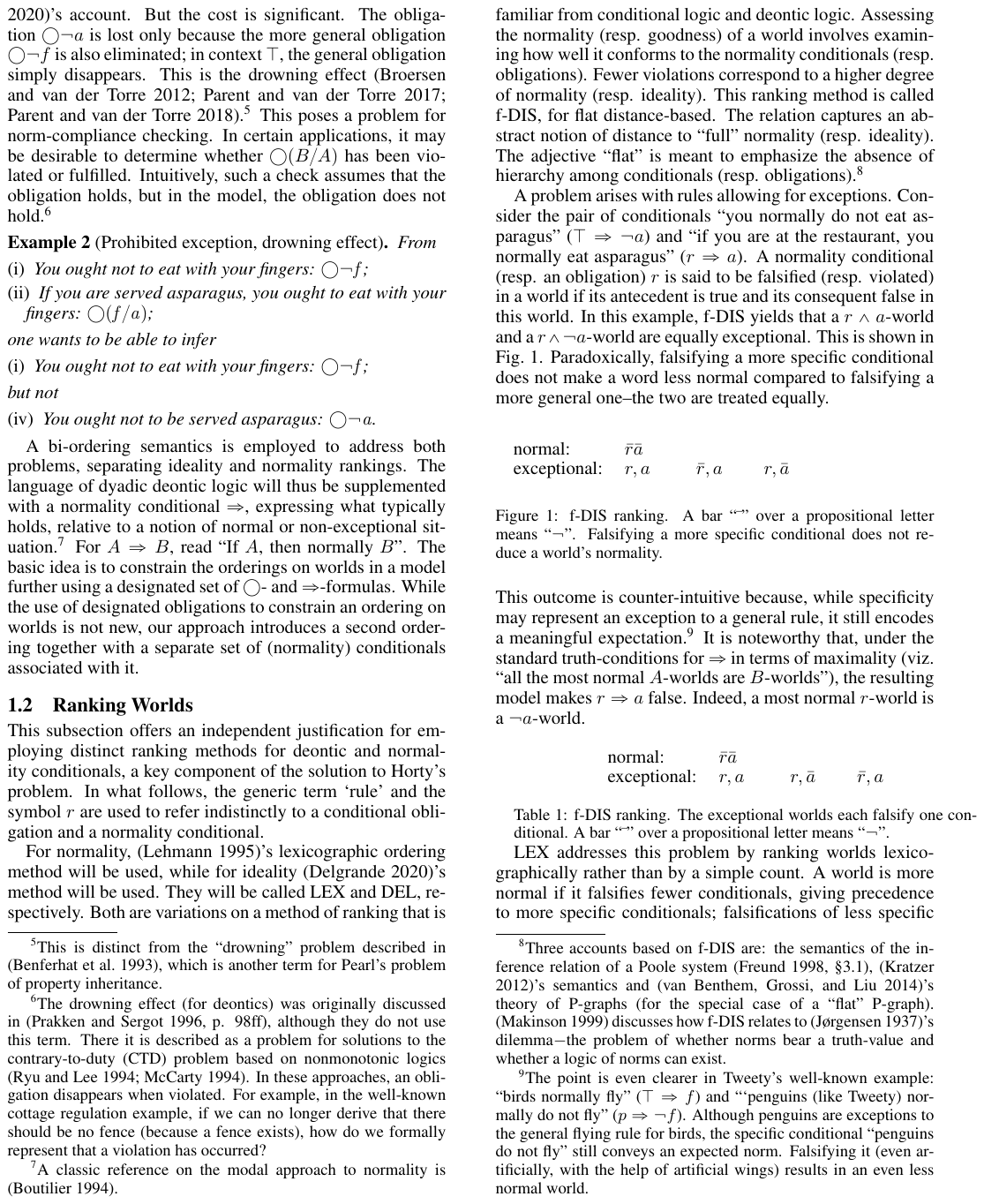}\caption{f-DIS ranking.  A bar ``$\bar{\;}$'' over a propositional letter means ``$\neg$''. Falsifying a more specific conditional does not reduce a world's normality.
		%The exceptional worlds each falsify one conditional. 	
		}\label{tab1}
	\end{figure}
\end{center}
\vspace{-0.5cm}
This outcome is counter-intuitive because, while specificity may represent an exception to a general rule, it still encodes a meaningful expectation.\footnote{The point is even clearer in Tweety's well-known example: ``birds normally fly'' ($b\Rightarrow f$) and ``penguins (like Tweety) normally do not fly''  ($p\wedge b\Rightarrow \neg f$). Although penguins are exceptions to the general flying rule for birds, the more specific conditional ``penguins do not fly'' still conveys an expectation. Falsifying  it (even artificially, with the help of artificial wings) results in an even less normal world.}  It is noteworthy that, under the  standard truth-conditions for $\Rightarrow$ in terms of maximality (viz. ``all the most normal $A$-worlds are $B$-worlds"), the resulting model makes $r \Rightarrow a$ false. Indeed, a most normal $r$-world is  a $\neg a$-world. 

%\bigskip
%\begin{minipage}{0.5\textwidth}
%\centering
%\begin{tabular}{ll}
%normal: & $ \bar{r}\bar{a}$ \\
%exceptional: & $r,a$ $\qquad $ $r, \bar{a}$   \qquad  $\bar{r}, a$
%\end{tabular}\captionof{table}{f-DIS ranking.  The exceptional worlds each falsify one conditional. A bar ``$\bar{\;}$'' over a propositional letter means ``$\neg$''.}\label{tab1}
%\end{minipage}
%%\%end{center}
%\bigskip

%\begin{tabular}{ll}
%normal: & $\neg a \neg f$ \\
%exception: & $a,f$ $\mid $ $a, \neg f$   $\mid $ $\neg a, f$
%\end{tabular}

LEX addresses this problem by ranking worlds lexicographically rather than by a simple count. A world is more normal if it falsifies fewer conditionals, giving precedence to more specific conditionals; falsifications of less specific conditionals are considered only in case of a tie on more specific conditionals. 
%worlds that satisfy more specific conditionals are considered preferable to worlds that %satisfy a greater number of less specific conditionals.
%corrects this problem as follows. Instead of a simple count, one proceeds lexicographically: satisfying as many more specific conditionals as possible is prioritized over satisfying a larger number of less specific conditionals. 
One assumes a partition of the set of (explicitly given) conditionals into levels of specificity, and applies f-DIS level by level, starting with the set of most specific conditionals and proceeding up to the less specific ones.  The ranking is shown in Fig.~\ref{table2} in Ex.~\ref{ex:norm}.

LEX is appropriate for normality conditionals, but not for deontic conditionals. Consider: ``you should not eat with your fingers" ($\bigcirc\neg f$) and ``if you are served asparagus, you should eat with your fingers" ($\bigcirc (f/a)$). f-DIS and LEX yield that the $\neg a\wedge\neg f$-world  is strictly better than the $a\wedge f$-world. This clashes with the specificity principle, which states that the more specific obligation overrides and ``suspends'' the more general one. $\bigcirc (f/a)$ states that, if $a$ is the case, then $f$ is obligatory. Intuitively, $\bigcirc\neg f$ no longer applies. Why, then, consider this obligation violated, if $a\wedge f$? This is counter-intuitive.
%, and generates the fallacy of the prohibited exception. 
DEL corrects this problem by stipulating that $\bigcirc (B/A)$ is not considered violated in world $w$—even when $A$ holds and $B$ fails—if  $w$ triggers and complies with a more specific obligation overriding it. As a result, the $\neg a\wedge\neg f$-world  and the $a\wedge f$-world are ranked as equally good. %This is not to say that, under the proposed account, $\bigcirc\neg f$ will cease to hold. 

\section{Framework} \label{account}

%\subsection{The account}

In addition to a set of propositional letters, the Boolean connectives and the modal operator $\Box$, the language  $\mathcal{L}$ has (as primitives) a normality conditional operator $\Rightarrow $ and a conditional obligation operator $\dob(-/-)$. $\Rightarrow $ and $\dob(-/-)$ are read as before. %where the subscript $d$ stands fir ``defeasible''. 
For  $\Box A$, read ``It  is necessary that $A$". %For  $\dob (B/A)$, read  ``If $A$, then it ought to be the case that~$B$". 
For simplicity's sake, it is assumed that $\Rightarrow$ and $\Box$ do not occur within the scope of $\dob$ (and vice-versa) and that $\Box$ does not occur within the scope of $\Rightarrow$  (and vice-versa).  Iterations of $\bigcirc$ or $\Rightarrow$ are ruled out too.\footnote{
An example of a logic allowing nested occurrences of $\Rightarrow$ is \cite{lam91}'s system C4. 
Such a nesting is not directly relevant to the problem at hand (see, e.g., \cite{Makinson93} for a general discussion).} 
 As mentioned, $r$ (possibly indexed by a natural number)
is a schematic symbol that may stand for either a conditional obligation or a normality conditional.
Given some $r$, $h(r)$ is the head (or consequent) of $r$, and $b(r)$ is its body (or antecedent). As usual, $\Diamond A$ is short for $\neg\Box\neg A$. 
$\models_{\mathrm{S5}}$ denotes the entailment relation in modal logic S5 \cite{ddl:C80}, and $\models_{\PL}$  the entailment relation in (classical) propositional logic.

 %supplemented with 
% normality statements. These can be thought of a expressing ``soft'' constraints within a domain, allowing for exceptions.
%One can see these are %expressing ``soft'' constraints
%$\kbc$  is the subset  of rules %of the $\Rightarrow$-type, %and   $\kbo$  is the subset  of %rules of the $\bigcirc$-type.  %A set $\mathds{C}$ of hard information or constraints is assumed. $\mathds{C}$ contains alethic formulas, like $\Box(A\rightarrow\neg B)$, capturing the fact that $A$ and $B$ are jointly impossible.

%Jack cannot both rescue the child and meet Joe.

%Unlike Delgrande, I will require $h(r)$ be consistent.\footnote{Without this requirement, his theorem 4.1 does not go through. Put $\kb=\%%{A\rightarrow \bot\}$. A $\neg A$-world is strictly better than any $ A$-world, and thus minimal. } No other restriction other is placed on the %form of the rules.

\subsection{Preliminary Notions}

This section introduces notions needed for subsequent developments.
In what follows, $\kb=(\kbc,\kbo)$, where $\kbc$ and $\kbo$ are finite sets of  $\Rightarrow$-formulas and  $\bigcirc$-formulas, respectively. The formulas in $\kbc$ can be thought of as expressing ``soft'' constraints, allowing for exceptions. 
$\kbo$ is the normative system under consideration. Alethic formulas are used to encode hard (i.e., non-defeasible) information about the domain. For instance, $\Box(A\rightarrow\neg B)$
states that $A$ and $B$ 
cannot both hold.

\begin{definition}[Overridingness or defeat] \label{over:def}Given $r_{i},r_{j}\in \kbo$, and a set $\Gamma$ of alethic formulas, $r_{j}$ is said to override $r_{i}$ (notation: $r_{j}\triangleright r_{i}$) whenever 
\begin{description}
\item [{\rm (i)}] $\{h(r_{i}), h(r_{j})\}\cup \Gamma\models_{\mathrm{S5}}\bot$ 
\item [{\rm (ii)}] $b(r_{j})  \modpl b(r_{i})$ and $b(r_{i})\not\modpl b(r_{j})$ 
%($r_{j}$ is more %specific than %$r_{i}$)
\end{description}
%For $r_{i},r_{j}\in \kbo$, the following extra condition is added:%,  where $\dashv\vdash$ means that the two statements are interderivable:
\begin{description}
\item [{\rm (iii)}] $h(r_{i}), b(r_{j})\not\modpl \bot$.
%$b(r_{j})\;\;\cancel{\dashv\vdash} \;\;\neg h(r_{i})$,  ($r_{j}$ is not a CTD of $r_{i}$).
%$h(r_{i}), b(r_{j})\not\vdash\bot$  ($r_{j}$ is not a CTD of $r_{i}$).
%the heads of the rules are consistent)$
\end{description}
 
%$h(r)$ and  
%$h(r^{\prime})$ are classically inconsistent and $b(r)\vdash b(r^{\prime})$. 
%$C\Rightarrow  D \triangleright  A\Rightarrow  B$ (resp.
%$ \dob(D/C)  \triangleright \dob(B/A)$)
% whenever $A$ and $B$ are (classically) inconsistent and $C\vdash A$, where $\vdash$ is %the classical inference operation.
% For $C\Rightarrow  D \triangleright  A\Rightarrow  B$ (resp.
%$ \dob(D/C)  \triangleright \dob(B/A)$), read  $C\Rightarrow  D$ (resp.
%$ \dob(D/C)$) overrides $ A\Rightarrow  B$  (resp.
%$ \dob(B/A)$). 
\end{definition}

(i) and (ii)  may be found in \cite{Horty1993,Delgrande20}. (i) says that
the heads of the obligations are inconsistent in S5, given the hard information at our disposal. (ii) says that the overriding obligation is more specific. 
  (iii) is proposed by \cite{TvdT95}. It may be motivated as follows.
A contrary-to-duty (CTD) obligation says what should be done if another (primary) obligation is violated. (iii) tells us that  
$r_{j}$ is not a CTD of $r_{i}$. 
Consider \cite{for84}'s gentle murder paradox: \emph{Do not kill, but if you kill, do it gently!}. Without (iii), the CTD obligation \emph{If you kill, do it gently!} ($\bigcirc(k\wedge g/k)$) would override the primary obligation  \emph{Do not kill!} ($\bigcirc\neg k$).
%because killing gently and not killing are inconsistent$-$(i) is met$-$and killing entails $\top$$-$(ii) is met. 
This is usually considered counter-intuitive.

A model (referred to as an $\kb$-ordered model) has the form $$M=(W,\succeq_N, \succeq_I, v)$$ where $W$ is a (non-empty) set of worlds, $\succeq_N$ and $\succeq_I$ denote a normality and ideality (betterness) ordering on the worlds, and $v$ a valuation assigning to each world the set of propositional letters true in this world. ``$w\succeq_N v$" is read as ``$w$ is at least as normal as $v$", and ``$w\succeq_I v$" as ``$w$ is at least as good (or ideal) as $v$". ``$\succ_N$" (the strict counterpart of  ``$\succeq_N$") is defined by $w\succ_N v$ if and only if (iff) $w\succeq_N v$ and $v\not\succeq_N w$. ``$\succ_I$" (the strict counterpart of  ``$\succeq_I$") is defined analogously. Two worlds $w$ and $v$ are equally normal iff $w\succeq_N v$ and  $v\succeq_N w$. Given $X\subseteq W$, $\max_{\succeq_N} (X)$ is $\{w\in X: \forall v\in X\, (v\succeq_N w   \rightarrowtriangle w\succeq_N v)\}$. Equal goodness %and $\max_{\succeq_I}(X)$ are 
is defined analogously. $|\raisebox{.5ex}{.}|$ is the cardinality of $\raisebox{.5ex}{.}$.

Given a model $M$, and a world $w$,  the satisfaction relation $\models\subseteq W\times \mathcal{L}$ is defined inductively by ``$w\models p \mbox{ iff } w\in v(p)$'', with the usual clauses for the Boolean connectives. Those for the other connectives are introduced in the next sections.

Occasionally the focus will be on models that are replete in the sense of \cite{ddl:lou19}. Call a Boolean formula $A$ PL-consistent, if it is consistent in (classical) propositional logic, viz. $A\not\models_{\PL}\bot$.

\begin{definition}[Repleteness] \label{assm:rep} A model $M$ is replete if, whenever a Boolean formula $A$ is PL-consistent,  there is a world $w$ in $M$ such that $M,w\models A$. 
\end{definition}

%We will see in Example \ref{cott:ex2}) that it creates a problem when combining violation and exception. 

%Possible worlds are graded in terms of normality or ideality by comparing how well, or how badly, they comply with the conditionals. %they %tallying the number of conditionals they breach. 
%Higher rankings correspond to fewer violated conditionals, indicating a more normal or ideal state.
%The fewer conditionals are violated, the higher the ranking (indicating a more normal or ideal state). 
%However, a conditional overridden by a more specific conditional  is not considered in the count, as per Delgrande's proposal.
%Possible worlds are ranked by counting the number of conditionals they each violate. The less  conditionals are violated, the better (more normal or mode ideal) the world is. However, a conditional that is overridden by a triggered conditional does not count (Delgrande's main idea). 

\begin{definition}[Falsification set] Set  \begin{flalign*}
F(w)=&\{r_{i}\in\kbc: w\models b(r_{i}) \wedge\neg h(r_{i}) \}
%\mbox{ and } w \not\models b(r_{j})  \}
%\\ & \forall r_{j}\in\kbc
%\mbox{ s.t. }  r_{j}\triangleright  r_{i}\}
\end{flalign*}
$F (w)$ gathers all the conditionals in $\kbc$ falsified by $w$. 
\end{definition} 
\begin{definition}[Violation set]\label{violation:o} Set
\begin{flalign*}
V(w)=&\{r_{i}\in\kbo: w\models b(r_{i}) \wedge\neg h(r_{i})
\mbox{ and } w \not\models b(r_{j})  \\ & \forall r_{j}\in\kbo
\mbox{ s.t. }  r_{j}\triangleright  r_{i}\}
\end{flalign*}
$V(w)$ gathers all the obligations violated by $w$. The clause  ``$w \not\models b(r_{j})  \; \forall r_{j}\in\kbo
\mbox{ s.t. }  r_{j}\triangleright  r_{i}$'' ensures that $\bigcirc (B/A)$ is not considered violated by $w$—even when $A$ holds and $B$ fails—if  $w$ triggers and complies with a more specific obligation overriding it (cf. \S \ref{rank}). 
%an obligation whose antecedent  is true at $w$ and  consequent false at $w$ is not  %considered violated, if $w$ triggers and complies with a more specific obligation %overriding this obligation. 
\end{definition} 

%\begin{defi}[Violation set  2, ideality] Set
%\begin{flalign*}
%V_{o}(w, \kb)=&\{\dob(B/A)\in\kb: w \models A\wedge\neg B 
%\mbox{ and } w \not\models C \\ & \forall \dob(D/C) \in\kb
%\mbox{ s.t. }  \dob(D/C)  \triangleright \dob(B/A) \}
%\end{flalign*}
%\end{defi} 

%The next section deals with the normality component.

\subsection{Normality} \label{sec:norm}
%

%\subsection{Normality: lexicographic ordering}

%Some further notions are needed to define the %normality relation. 

%To define the normality relation $\succeq_N$, a few more concepts must be introduced.

This section describes   \cite{Lehmann95}'s lexicographic ordering method for normality.
Given $X\subseteq\kbc$, denote by $\m(X)$
%$\kbc_{m}$
the materialization of $X$, \ie 
$\{B\rightarrow C : B \Rightarrow C \in X\}$, where $\rightarrow$ is material implication. A Boolean formula $A$ is said to be exceptional for $X$ if $\m(X)\models_{\PL}\neg A$.
% \revv{EXPLAIN} and 
A conditional $A\Rightarrow B$ is said to be exceptional for
 $X$ if its antecedent $A$ is. Let $\varepsilon (X)$ be the set of  $A\Rightarrow B$s in $\kbc$ that are exceptional for $X\subseteq\kbc$. Define a sequence $\mathcal{E}_0$, ..., $\mathcal{E}_\infty$ of subsets of $\kbc$ as follows: 
 (i) $\mathcal{E}_0 = \kbc$; (ii) $\mathcal{E}_i = \varepsilon(\mathcal{E}_{i-1})$, for $0<i < m$ and 
 $\mathcal{E}_\infty = \mathcal{E}_m$, where $m$ is the smallest $k$ for which $\mathcal{E}_k = 
 \mathcal{E}_{l}$ for all $l>m$.  We call $m$ the order of $\kbc$. That $m$ always exists follows from the fact that
$\kbc$ is finite, and the following fact: 
 
\begin{restatable}{fact}{myfact}
	\label{lem:decreasing}  $(\mathcal{E}_i)_{i\geq 0}$ is (strictly) decreasing: 
	%$\mathcal{E}_0$, ..., $\mathcal{E}_m$ is a decreasing sequence of sets:
	$$\mathcal{E}_{i+1}\subset\mathcal{E}_{i} \text{ for all } i\geq 0$$
\end{restatable}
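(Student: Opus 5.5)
The plan is to prove two things separately. First, that the sequence is decreasing under $\subseteq$, by an induction that rests on a monotonicity property of $\varepsilon$. Second, that the inclusion is strict until the sequence stabilizes: once two consecutive terms coincide, all later terms coincide. The strict inclusion $\subset$ in the statement has to be read for $i<m$, i.e.\ before the fixed point $\mathcal{E}_\infty$ is reached. After that point we have $\mathcal{E}_{i+1}=\mathcal{E}_i$; for example, once some $\mathcal{E}_i=\emptyset$ the next term is $\emptyset$ again. This is exactly the reading needed to guarantee that the order $m$ exists.

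\textbf{Step 1 (monotonicity of $\varepsilon$).} I will show that $X\subseteq Y\subseteq\kbc$ implies $\varepsilon(X)\subseteq\varepsilon(Y)$.
If $X\subseteq Y$, then $\m(X)\subseteq\m(Y)$. Classical entailment is monotonic, so $\m(X)\models_{\PL}\neg A$ gives $\m(Y)\models_{\PL}\neg A$. Hence every conditional of $\kbc$ that is exceptional for $X$ is also exceptional for $Y$.

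\textbf{Step 2 (induction for $\mathcal{E}_{i+1}\subseteq\mathcal{E}_i$).} The base case holds because $\mathcal{E}_1=\varepsilon(\kbc)\subseteq\kbc=\mathcal{E}_0$, since $\varepsilon(X)$ is by definition a subset of $\kbc$. For the inductive step, assume $\mathcal{E}_{i}\subseteq\mathcal{E}_{i-1}$. Step 1 then gives
$$\mathcal{E}_{i+1}=\varepsilon(\mathcal{E}_i)\subseteq\varepsilon(\mathcal{E}_{i-1})=\mathcal{E}_i.$$

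\textbf{Step 3 (strictness before stabilization).} Suppose $\mathcal{E}_{k+1}=\mathcal{E}_k$ for some $k$. Then $\mathcal{E}_{k+2}=\varepsilon(\mathcal{E}_{k+1})=\varepsilon(\mathcal{E}_k)=\mathcal{E}_{k+1}$, and by induction every later term equals $\mathcal{E}_k$. So the first index at which an inclusion fails to be proper is the stabilization point, which is $m$. For all $i<m$ the inclusion $\mathcal{E}_{i+1}\subset\mathcal{E}_i$ is therefore proper.

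Since $\kbc$ is finite, a chain of proper inclusions starting at $\kbc$ has length at most $|\kbc|$, so $m\le|\kbc|$ exists. The only delicate point is Step 1: $\varepsilon$ selects from all of $\kbc$, not from its argument. Inclusion in the previous term therefore does not follow from the definition alone and has to come from monotonicity together with the induction.
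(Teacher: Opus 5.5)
Your proof is correct. Its $\subseteq$ half (Steps 1--2) matches the paper's: the paper's induction step is your monotonicity of $\varepsilon$, written inline ($\m(\mathcal{E}_{i+1})\subseteq\m(\mathcal{E}_i)$, so anything exceptional for $\mathcal{E}_{i+1}$ is exceptional for $\mathcal{E}_i$).

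The routes differ on strictness. The paper derives it from coherence (Assumption~\ref{assm:cons}). If $\mathcal{E}_{i+1}\subseteq\mathcal{E}_{i+2}=\varepsilon(\mathcal{E}_{i+1})$, then $\m(\mathcal{E}_{i+1})\models_{\PL}\neg b(r)$ for every $r\in\mathcal{E}_{i+1}$, so $\mathcal{E}_{i+1}$ would be an incoherent subset of $\kbc$. You instead get strictness from the definition of $m$ alone, using the fact that once two consecutive terms are equal, all later terms are equal.

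Your route needs no assumption on $\kbc$. It also delivers what the surrounding text uses Fact~\ref{lem:decreasing} for, namely the existence of $m$, without circularity: existence already follows from the non-strict chain plus finiteness. The paper's route gives more when coherence holds. It shows the sequence keeps shrinking while it is nonempty, so it can only stabilize at $\emptyset$, which is essentially Fact~\ref{lem:decreasing2}. In your decomposition that information has to come from coherence separately, as in the paper's proof of Fact~\ref{lem:decreasing2}.

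Your caveat about the range of $i$ is right. The coherence step only has force when $\mathcal{E}_{i+1}\neq\emptyset$; for $X=\emptyset$ the conjunction $\bigwedge_{r\in X}\neg b(r)$ is $\top$. Once the sequence reaches $\emptyset$ it stays there, and the paper itself uses $\mathcal{E}_m=\mathcal{E}_{m+1}=\emptyset$ later. So the literal claim for all $i\geq 0$ fails and must be read, as you do, for $i<m$.
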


$(\mathcal{E}_i)_{i\geq 0}$ is called the LM-sequence (after \cite{LM92}). 

For technical reasons, an assumption of coherence of $\kbc$ is made. This assumption is intuitively plausible.

\begin{definition}[Coherence]\label{cons}
$\kbc$ is said to be coherent if $\neg \exists X\subseteq \kbc$ such that 
$\m (X)\models_{\PL} \bigwedge_{r\in X}\neg b(r)$, and incoherent otherwise. 
\end{definition}

\begin{assumption} \label{assm:cons} $\kbc$ is coherent. 
\end{assumption}

\noindent
%To justify my talk of consistency or satisfiability, note that it immediately follows that, if $\kbc$ is consistent, then there is a valuation %on the propositional atoms that makes $\wedge_{r\in\kbc}b(r)\wedge \wedge_{r\in\kbc}h(r)$ true
%$\wedge_{i=1}^{n}A_i \wedge \wedge_{i=1}^{n} B_i$ true. 
For instance,  $\{A\Rightarrow B, A\Rightarrow \neg B\}$ and $\{A\Rightarrow \neg A \}$ are incoherent, but $\{A\Rightarrow B, A\wedge C\Rightarrow \neg B\}$ is coherent.

%In this paper, I will assume that $\kbc$ is always consistent in the sense of Definition~\ref{cons}.

The following follows: 

%\begin{fact} \label{lem:cons} For all $r\in\kbc$, $b(r)\not\models\bot$.
%\end{fact}
%\begin{proof}Immediate. 
%\phantom\qedhere\end{proof}
\begin{fact} \label{lem:decreasing2} $\mathcal{E}_\infty =\emptyset$. 
\end{fact}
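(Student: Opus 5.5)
The plan is to argue by contradiction from Assumption~\ref{assm:cons}, using Fact~\ref{lem:decreasing} together with the way $m$ is chosen. Put $X=\mathcal{E}_\infty=\mathcal{E}_m$. By the definition of $m$ the sequence stabilises from $m$ on, so $\mathcal{E}_{m+1}=\mathcal{E}_m$. Since $\mathcal{E}_{m+1}=\varepsilon(\mathcal{E}_m)$, this gives the fixpoint equation
$$\varepsilon(X)=X.$$

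Next I unfold what this equation says. By definition, $\varepsilon(X)$ is the set of conditionals in $\kbc$ whose antecedent is exceptional for $X$. Hence every $r\in X$ satisfies
$$\m(X)\models_{\PL}\neg b(r).$$
Taking the conjunction over all $r\in X$ yields $\m(X)\models_{\PL}\bigwedge_{r\in X}\neg b(r)$.

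Now suppose $X\neq\emptyset$. Then $X$ is a subset of $\kbc$ witnessing the condition forbidden in Definition~\ref{cons}, so $\kbc$ is incoherent. This contradicts Assumption~\ref{assm:cons}, and therefore $\mathcal{E}_\infty=\emptyset$.

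The main point to watch is the empty set. For $X=\emptyset$ the condition of Definition~\ref{cons} holds trivially, since the empty conjunction is $\top$ and $\emptyset\models_{\PL}\top$. So Definition~\ref{cons} must be read as quantifying over nonempty $X$; otherwise no $\kbc$ would be coherent. The argument above only uses it for the nonempty $X$ assumed in the contradiction, so this reading causes no difficulty.

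A second point is that the fixpoint $\mathcal{E}_{m+1}=\mathcal{E}_m$ must come from the definition of $m$ (eventual stabilisation, whose existence is guaranteed by finiteness of $\kbc$), not from Fact~\ref{lem:decreasing}. Read literally, that fact says the inclusion is strict at every index, which would itself force the sets to become empty. I would rely only on stabilisation.
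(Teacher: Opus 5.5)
Your proposal is correct and follows essentially the same route as the paper's proof. From the stabilisation $\mathcal{E}_{m+1}=\varepsilon(\mathcal{E}_m)=\mathcal{E}_m$ you obtain $\m(\mathcal{E}_m)\models_{\PL}\bigwedge_{r\in\mathcal{E}_m}\neg b(r)$, so a nonempty $\mathcal{E}_m$ would witness the incoherence of $\kbc$, contradicting Assumption~\ref{assm:cons}.

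Both of your caveats are well taken, and the paper's argument is consistent with them. Coherence should be read as quantifying over nonempty $X$. The fixpoint should come from the definition of $m$ rather than from Fact~\ref{lem:decreasing}, whose literal strict form cannot hold for all $i$ since $\kbc$ is finite. The paper's proof likewise uses only the nonempty witness $\mathcal{E}_m$ and the equation $\mathcal{E}_{m+1}=\mathcal{E}_m$.
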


\begin{proof} Suppose, to reach a contradiction, that $\mathcal{E}_\infty = \mathcal{E}_m \not =\emptyset$, where $m$ is the smallest $k$ for which $\mathcal{E}_k = \mathcal{E}_{l}$ for all $l>m$.  Let $\mathcal{E}_m=\{r_1, ..., r_n\}\subseteq\kbc$. Since  $\mathcal{E}_{m}= \mathcal{E}_{m+1}$, 
$$\mathcal{E}_{m+1}=\{r_1, ..., r_n\}.$$
Since $\mathcal{E}_{m+1}= \varepsilon(\mathcal{E}_{m})$, 
$$\m (\mathcal{E}_{m})\models_{\PL}  \bigwedge_{i=1}^{n}\neg b(r_i)$$
As $\mathcal{E}_m\subseteq\kbc$, $\kbc$ is incoherent.% assump.~\ref{assm:cons}. 
%\vspace{-0.35cm}\phantom\qedhere
\end{proof}
%$\mathcal{E}_0$, ..., $\mathcal{E}_\infty$ is a decreasing sequence of sets.
 %  $\Gamma_{i+1}$ is the set of conditionals  in $\Gamma_{i}$ that are exceptional for $\Gamma_{i}$, (iii) $m$ is the %smallest  $i=1, ..., j$ such that
% $\Gamma_{i}=\Gamma_{i+1}=\emptyset$. 
% To partition $\kbc$ into levels of specificity, let us first define a decreasing sequence of sets of conditionals  $\Gamma_0 \supseteq \Gamma_1 \supseteq ...  \supseteq \Gamma_m$ by setting (i) $\Gamma_0 =\Gamma$, (ii), $\Gamma_{i+1}$ is the set of conditionals  in $\Gamma_{i}$ that are exceptional for $\Gamma_{i}$, (iii) $m$ is the smallest  $i=1, ..., j$ such that
%$\Gamma_{i}=\Gamma_{i+1}=\emptyset$. Call $m$ the order of $\Gamma$. 

\medskip
From this, we define the ranked partition of $\kbc$ as $\langle\Delta_{0}, ... , \Delta_{m}\rangle$ where
$\Delta_{i}=\mathcal{E}_{i}-\mathcal{E}_{i+1}$, for $i=0, ..., m-1$, and  $\Delta_{m}=\mathcal{E}_\infty$.
%Let us, next,  put $\Delta_{i}=\mathcal{E}_{i}-\mathcal{E}_{i+1}$, for $i=0, ..., m-1$, and let $\Delta_{\infty}$ be $\mathcal{E}%_\infty$. %The collection $\{\Delta_i \}$ defines a partition of $\kbc$. 
Intuitively, each $\Delta_{i}$ gathers the elements of $\kbc$ with the same level of specificity.
$\Delta_{0}$ is the set of less specific conditionals, $\Delta_{1}$ is the set of second-less specific ones, and so forth up to $\Delta_{m-1}$, the set of  most specific ones. Given some $\kbc$ of order $m$, to every subset $X\subseteq\kbc$ may be associated a tuple of natural numbers: $\langle
n_1,...., n_m \rangle$, where
%$n_i= |\Delta_{m-i} \cap  X|$,  for  all $i=0, ..., m$.
 %$n_0 = |\Delta_{\infty} \cap  X|$ and 
$n_i= |\Delta_{m-i} \cap  X|$,  for  all $i=1, ..., m$.  
These $X$s may be ordered using the
natural lexicographic ordering on their associated tuples.  Let the relation be denoted by $\gtrsim$. Its definition may be stated formally thus, where $\langle
n_1,...., n_m \rangle$ and $\langle
n^{\prime}_1,...., n^{\prime}_m \rangle$ are the tuples associated with $X$ and $Y$, respectively:\footnote{$n_0 = |\Delta_{m} \cap  X|$ does not need to be considered as a first coordinate, since under Assumption~\ref{assm:cons}  it is always equal to 0.}
\begin{flalign*}
X \gtrsim Y & \mbox{ iff } \langle
n_1,...., n_m \rangle \geq^{lex} \langle
n^{\prime}_1,...., n^{\prime}_m \rangle \\
& \phantom{ xxxx } \mbox{ iff }  n_i = n^{\prime}_i  \mbox{ for all } i \mbox{ such that  } 1\leq i\leq m\mbox{, or } \\
& \phantom{ xxxxxxxx }  n_i < n^{\prime}_i  \mbox{ for the first  }i \mbox{ such that  }  n_i \not= n^{\prime}_i
%& \mbox{ iff }  n_0 < n'_0  \mbox{ or } \\
%& \phantom{\mbox{ iff }}  n_0 = n'_0  \mbox{ and  }  n_1 \leq n'_1 \\
%n_i = n'_i \mbox{ for } i=0,1, ..., k-1 \mbox{ and }  n_k \leq n'_k
\end{flalign*}

The normality relation 
$\succeq_N$ can now be introduced. Intuitively,
a world is more normal if it falsifies fewer conditionals, giving precedence to more specific conditionals; falsifications of less specific conditionals are considered only in case of a tie on more specific conditionals.
%$w_1$ is at least as normal as $w_2$, if $w_1$ falsifies less conditionals than $w_2$ %does, taking into account the level of specificity of the conditionals involved. Formally,
\begin{definition}[Normality relation] \label{norm}Given a set $\kbc$  of normality conditionals, and $w_1,w_2\in W$, set
	% Define a sequence $N_1, ..., L_n,$ of normality levels in model 
	\begin{flalign*}
		w_1 \succeq_N w_2 &\mbox{ iff }  F(w_1)  \gtrsim F(w_2) 
	\end{flalign*}
	\end{definition}
 
\begin{example}[Exception, normality]\label{ex:norm}
	Let $\kbc=\{\top\Rightarrow \neg a, r\Rightarrow a\}$.
% Intuitively: normally, you do not eat asparagus; if you are at the restaurant, you normally eat asparagus. 
Fig.~\ref{table2} shows  the worlds' ranking, with tuples indicated by superscripts.
\end{example}
% The
% $r\wedge\neg a$-world (which falsifies the more specific conditional, but satisfies the more general one) is classified as less normal than the $r\wedge a$-world (which satisfies the more specific conditional, %but falsifies the more general one).  
%   Bo
% $r\wedge a$-world is classified as more normal than the  $r\wedge\neg a$-world.   Both falsifies exactly one conditional in $\kbc$, %but the former world verifies the more specific conditional, while the latter world does not.
	%one violating the more specific conditional (but satisfying the more general one).  Both violates the same number of conditionals, though. 
	
	\vspace{-0.5cm}
	\begin{center}
		\begin{figure}[h] \centering
			\includegraphics{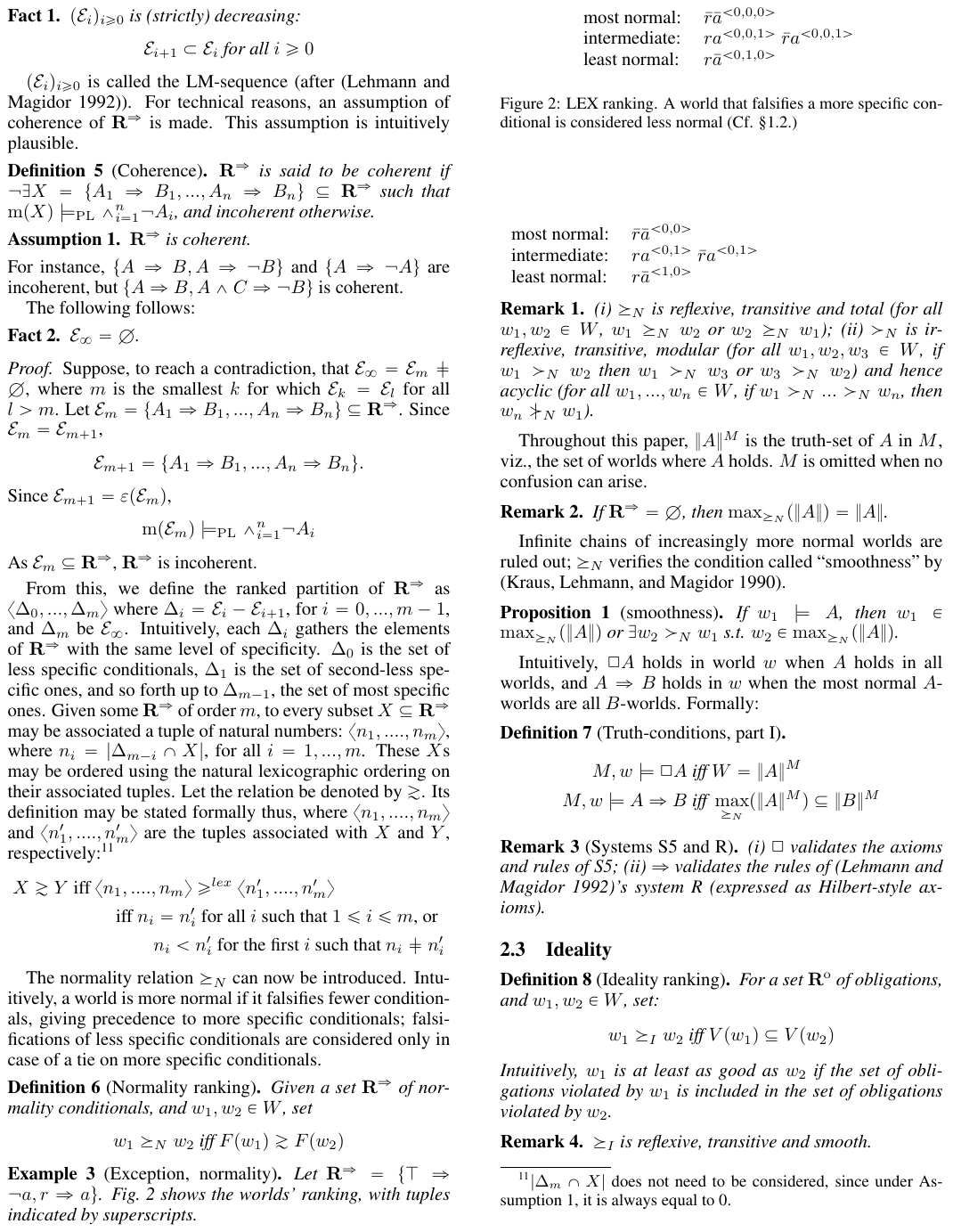}\caption{LEX ranking. 
	%		Any world that falsifies a more specific conditional is considered %even less normal than worlds falsifying only less specific conditionals.
		Falsifying a more specific conditional makes a world less normal
		 (Cf. \S \ref{rank}.)}\label{table2}
		\end{figure}
		\end{center}	

\vspace{-0.7cm}
\begin{remark} (i) $\succeq_N$ is reflexive, transitive and total (for all $w_1 , w_2 \in W$, 
$w_1 \succeq_N w_2$ or $w_2 \succeq_N w_1$); (ii)  $\succ_N$ is irreflexive, transitive and  modular (for all $w_1 , w_2, w_3 \in W$, if $w_1 \succ_N w_2$ then  $w_1 \succ_N w_3$ or  $w_3 \succ_N w_2$).
% and hence acyclic (for all $w_1 , ..., w_n \in W$, if $w_1\succ_N ... \succ_N w_n$, then $w_n\not\succ_N w_1$).  
%$\max_i =W$ if $i$ is odd, and $\max_i =\emptyset$ if $i$ is even.  
%(4) If $W$ is finite, then $\succeq_N$ is smooth, in the sense that if $w\models A$ then either $w\in\max_{\succeq_N} (\Vert A\Vert)$ or $\exists v \succ_N w$ s.t. $v\in\max_{\succeq_N} (\Vert A\Vert)$ (see e.g. \cite{DBLP:journals/jphil/Parent14}).
\end{remark}
$\Vert A\Vert^{M}$ is the truth-set of $A$ in $M$, viz., the set of worlds where $A$ holds. $M$ is omitted when no confusion can arise.
\begin{remark}\label{coll}
	If $\kbc=\emptyset$, then $\succeq_{N}=W\times W$ and $\max_{\succeq_{N}}(\Vert A\Vert)=\Vert A\Vert$.
\end{remark}

%
%When $\succeq_N$ is defined as in Remark \ref{tot:norm}, %I shall say that $\succeq_N$ is cardinality-based.
%
Infinite chains of increasingly more normal worlds are ruled out; $\succeq_N$ verifies the condition called ``smoothness'' by \cite{ddl:KLM90}  and ``stoppering'' by \cite{DBLP:conf/nmr/Makinson88}:

%\begin{proposition}[smoothness]\label{smoothn} If $w_1\models A$, then $w_1\in
%\max_{\succeq_N} (\Vert A\Vert)$ or $\exists w_2 \succ_N w_1$ s.t.  $w_2\in
%\max_{\succeq_N} (\Vert A\Vert)$.
%\end{proposition}

\begin{restatable}{proposition}{smoothness}\label{smoothn} If $w_1\models A$, then $w_1\in
\max_{\succeq_N} (\Vert A\Vert)$ or $\exists w_2 \succ_N w_1$ s.t.  $w_2\in
\max_{\succeq_N} (\Vert A\Vert)$.
\end{restatable}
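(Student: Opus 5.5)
The plan is to exploit the fact that $\succeq_N$ is induced by a map from worlds into a finite, totally ordered set of ``normality values''. Each world $w$ determines the falsification set $F(w)\subseteq\kbc$. Since $\kbc$ is finite, $F(w)$ ranges over finitely many subsets. The associated tuples $\langle n_1,\dots,n_m\rangle$ therefore range over a finite set $T$ of tuples of natural numbers, with each $n_i\le|\kbc|$. By Definition~\ref{norm}, $w_1\succeq_N w_2$ holds exactly when the tuple of $F(w_1)$ is $\geq^{lex}$ the tuple of $F(w_2)$. Now $\geq^{lex}$ is a total preorder on tuples; on tuples of equal length it is in fact a total order, reversed in the sense that smaller counts are better. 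So the key observation is that the ordering of worlds is the pullback of a total order on a \emph{finite} set.

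First, let $w_1\models A$, so $\Vert A\Vert\neq\emptyset$. Consider the nonempty finite set $T_A=\{\tau(F(w)) : w\in\Vert A\Vert\}$, where $\tau$ sends a subset of $\kbc$ to its tuple. Since $T_A$ is finite and nonempty, and $\geq^{lex}$ is total on it, $T_A$ has a $\geq^{lex}$-greatest element $t^*$, namely the lexicographically smallest count vector. Pick some $w^*\in\Vert A\Vert$ with $\tau(F(w^*))=t^*$.

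Second, show $w^*\in\max_{\succeq_N}(\Vert A\Vert)$. For any $v\in\Vert A\Vert$ we have $t^*\geq^{lex}\tau(F(v))$, hence $w^*\succeq_N v$. The defining condition ``$v\succeq_N w^*$ implies $w^*\succeq_N v$'' then holds trivially. More generally, this step also shows that membership in $\max_{\succeq_N}(\Vert A\Vert)$ is equivalent to having tuple $t^*$. This uses totality of $\succeq_N$ from the Remark: if $w$ has a tuple different from $t^*$, then $w^*\succ_N w$, so $w$ is not maximal.

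Third, split into cases. If $\tau(F(w_1))=t^*$, then $w_1$ is equally normal to $w^*$ and hence maximal by the same argument, giving the first disjunct. Otherwise $t^*>^{lex}\tau(F(w_1))$ strictly, so $w^*\succeq_N w_1$ and $w_1\not\succeq_N w^*$. That is, $w^*\succ_N w_1$, and taking $w_2=w^*$ gives the second disjunct.

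The only point requiring care is the claim that $\geq^{lex}$ as defined is reflexive, transitive and total on tuples of fixed length $m$, so that a greatest element of a finite set exists. This is standard for the lexicographic order and is already implicit in the Remark. Beyond that, the argument needs neither the coherence assumption nor repleteness: finiteness of $\kbc$ alone, which bounds the number of distinct tuples, does the work, even if $W$ is infinite.
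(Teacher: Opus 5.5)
Your proof is correct, and it takes a more direct route than the paper's. The paper argues by contradiction. Assuming no strictly more normal $A$-world is maximal, it builds an infinite chain $\cdots\succ_N w_3\succ_N w_2\succ_N w_1$. It then tries to refute the chain by asserting that the first index at which consecutive tuples differ strictly decreases as one moves up. That assertion fails for lexicographically decreasing sequences in general: for $\langle 1,1\rangle$, $\langle 0,2\rangle$, $\langle 0,1\rangle$, each better than its predecessor, the first differing index goes from $1$ to $2$. So the real reason the chain cannot exist is the fact you put at the center. Its tuples are pairwise distinct, but there are only finitely many tuples, because every coordinate is bounded by $|\kbc|$. (Alternatively, the lexicographic order on $m$-tuples of naturals is well-founded, but that needs its own induction on $m$.)

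By taking the lexicographically smallest count vector $t^*$ of the finite nonempty set $T_A$, you get three things at once without building any chain:
the witness $w^*$;
the characterization of $\max_{\succeq_N}(\Vert A\Vert)$ as the $A$-worlds with tuple $t^*$;
the final case split.

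The chain formulation's advantages are that it stays close to the KLM smoothness/stoppering presentation and, once the well-foundedness step is argued properly, applies to rankings that are well-founded without being finite-valued. For the finite $\kbc$ of the paper, your argument is simpler and more self-contained.
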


 Intuitively, $\Box A$ holds in world $w$ when $A$ holds in all worlds, and $A\Rightarrow B$ holds in $w$ when  the most normal $A$-worlds are all $B$-worlds. Formally:

\begin{definition}[Truth-conditions, part I]  \par
\begin{flalign*}M,w\models \Box A    &\mbox{ iff }  W=\Vert A\Vert^{M} \\
M,w\models A\Rightarrow B    &\mbox{ iff }  \max_{\succeq_N} (\Vert A\Vert^{M})\subseteq \Vert B\Vert^{M}
\end{flalign*}
 %The truth conditions for $\bigcirc$ will be introduced separately.
\end{definition}

\begin{remark}[Systems S5 and R] (i) $\Box$ validates the axioms and rules of S5; (ii) $\Rightarrow$  validates the rules of \cite{LM92}'s system R (recast as a Hilbert-style axiom system).
%\begin{flalign}
%%& \mbox{ If } A\Rightarrow B\in \kbc \mbox{ then } %M,w\models A\Rightarrow B \label{inc}\tag{Inclusion}\\
%&\Box \mbox{ validates the axioms and rules of S5 } \label{s5}\tag{S5}\\
%\begin{split}
%&\Rightarrow \mbox{ validates the axioms and rules of \cite{LM92}'s }\\ & \mbox{system R} 
%\end{split}\label{p}\tag{R}
%\end{flalign}
\end{remark}
 %The labels in Fig.\,\ref{systemP} are drawn from the literature on nonmonotonic
% reasoning. (\ref{lle}), (\ref{rw}), (\ref{id}),  (\ref{cm}) and  (\ref{rm}) stand for ``left logical equivalence", ``right weakening", ``reflexivity'', ``cautious monotony", and ``rational monotony", respectively. 
%R is the flat fragment of \cite{Lew}'s system CV. 
 
%\begin{proof} For $\Box$, the verification is straightforward, and is omitted. For $\Rightarrow$, the reader is referred to  \cite{LM92}.
%\end{proof}

\subsection{Ideality} \label{sec:idea}

%This section deals with the deontic component.

%
%The idea is to restrict $\succeq_I$ (ideality ranking) to worlds within the same level of %normality. I will assume that worlds belonging to different levels of normality should not be %compared in terms of ideality. This hypothesis will play a crucial role in countering Horty's %objection to classical semantics: the fallacy of the prohibited exception.  

\begin{definition}[Ideality relation] \label{i:r} For a set $\kbo$ of obligations, and $w_1,w_2\in W$, set: 
% Define a sequence $N_1, ..., L_n,$ of normality levels in model 
\begin{flalign*}
w_1 \succeq_I w_2 & \mbox{ iff } %\exists i= 1, ...,  n \mbox{ s.t. } w_1 , w_2 \in \max_{i} \mbox{ and } 
V(w_1)\subseteq V(w_2)
%$V_{\Rightarrow}(w, \mathbf{R})=\{A\Rightarrow B\in\mathbf{R}: w\models A\wedge\neg B\}$.
%\item $V_{o}(w, \mathbf{R})=\{\do{B/A}\in\mathbf{R}: w \models A\wedge\neg B\}$.
\end{flalign*}
%The subscript $I$ stands for ``ideality". 
Intuitively, $w_1$ is at least as good as $w_2$ if the set of obligations violated by $w_1$  is included in the set of obligations violated by $w_2$.
%the idea is to restrict $\succeq_I$ to worlds within the same level of %normality only. 
%The slogan is: minimize the exceptions first, and then the violations.
\end{definition}

%\begin{remark}\label{rem} The alternative definition, putting $w_1 \succeq_I w_2$ whenever $|V(w_1)|\leq |V(w_2)|$, will also be considered. When this definition is assumed, I will describe $\succeq_I$ as %cardinality-based. 
%(i) If $\kbo=\emptyset$, then all the worlds are equally good; %(ii) If $\kbc=\emptyset$, then $\succeq_I$ is smooth.
%\end{remark}

\begin{remark}\label{rem2} $\succeq_I$ is reflexive, transitive and smooth.
	%within each level of normality (for all $i=1,2, ...$ and $w\in\max_i$, $w\succeq_I w$, and  
% for all $i=1,2, ...$ and $w_1 , w_2, w_3$ s.t. $w_1 , w_2, w_3\in\max_i$,  $w_1 \succeq_I w_2$ and $w_2 \succeq_I w_3$ imply $w_1 \succeq_I w_3$);
%\footnote{By construction, $\forall i,j.$  $ i\not= j \rightarrowtriangle \max_i \cap\max_j \not=\emptyset$. Therefore, the assumptions that $w_1 \succeq_I w_2$ and $w_2 \succeq_I w_3$ imply that the three worlds belong to the same level of normality.} %\todo{\scriptsize to show. How do I know the level of normality of $w_1$ exist? And that of $w_2$ is unique?}
% (ii) If $\succeq_I$ is cardinality-based, then $\succeq_I$ is in addition total 
% within each level of normality %
%(for all $w_1 , w_2 \in W$, 
%$w_1 \succeq_I w_2$ or $w_2 \succeq_I w_1$).
\end{remark}
%The truth conditions for $\Rightarrow$ may stay as they are. Intuitively, model $M$ satisfies $A\Rightarrow B$ when all the most plausible $A$-worlds are $B$-worlds.
%\begin{defi}[Truth-conditions, normality conditional]
%\begin{flalign}
%M\models A\Rightarrow B   & \mbox{ iff } \max_{\succeq_N} (\Vert %A\Vert)\subseteq \Vert B\Vert
%\end{flalign}
%\end{defi}
%
%\begin{prop} $\succeq_I$  is smooth (within each level).
%\end{prop}
%
%
%\begin{rem}\label{tot:norm} Let $|\raisebox{.5ex}{.}|$ denote the cardinality of $\raisebox{.5ex}{.}$. The possible worlds can also %be ordered by counting the elements in their violation set, putting $w_1\succeq_I w_2$ whenever   $|V_{o}(w_1 , \mathbf{R})|\leq |%V_{o}(w_2 , \mathbf{R})|$.  Thus defined, $\succeq_I$ is total (for all  $w_1, w_2$ in $W$, $w_1\succeq_I w_2$ or  $w_2\succeq_I %w_1$). When $\succeq_I$ is defined in this manner, $\succeq_I$ will be described as cardinality-based.
%\end{rem}
%\begin{proof} The argument is virtually the same as for Prop.  \ref{smoothn}.
%\end{proof}

%The new truth-conditions for $\dob$ are a variation on a standard method used in dyadic deontic logic to connect dyadic deontic logic %with the logic of preference. It is known that  $\bigcirc(B/A)$ can be defined by $\neg ((A\wedge \neg B) \geq (A\wedge B))$, with $%\geq$ 

The truth-conditions of $\dob$ are a variation on ones commonly used in a mono-preferential setting. To indicate what the variation is, a specific lifting relation must be introduced.\footnote{ \cite[p.\,4]{Halpern97} credits this lifting relation to \cite{L73}.} Below the superscript $s$ is for ``set''.

%  from worlds to a relation on sets of worlds. For an overview, see \cite{Halpern97}. 

%The framework  is flexible, and one may choose to work with its preferred one. I choose to work with the one proposed by \cite{L73}, %because IT COINCIDES WITH THE EVALUATION PATTERN IN TERMS OF BEST. 

%because it makes more clearly

%For the purpose at hand, it does not matter much which %way is chosen, and I will stick to the one proposed by %\cite{L73}. 
%Below the superscript $s$ is for ``set''. 

\begin{definition} \label{lifting} For  $U,V\subseteq W$, 
$U\succeq_I^{s} V$ (reading: $U$ is at least as good as $V$) iff $\forall v\in V$ $\exists u\in U$ s.t. $u\succeq_I v$. %$U\succ_I^{s} V$ is a shorthand for %$U\succeq_I^{s}V$ and  
%$V\not\succeq_I^{s} U$. 
\end{definition}

\begin{proposition} $\succeq_I^{s}$ is reflexive and transitive. % (ii) if $\succeq_I$ is cardinality-based, then  $\succeq_I^{s}$  is also total ($U\succeq_I^{s}V$  or  $V\succeq_I^{s}U$ for all $U,V\subseteq W$).
\end{proposition}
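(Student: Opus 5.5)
The plan is to derive both properties of $\succeq_I^{s}$ directly from the corresponding properties of the underlying world-level relation $\succeq_I$, which Remark~\ref{rem2} records as reflexive and transitive. Both of those facts are immediate from Definition~\ref{i:r}. Since $w_1\succeq_I w_2$ iff $V(w_1)\subseteq V(w_2)$, reflexivity of $\succeq_I$ is just $V(w)\subseteq V(w)$, and transitivity is transitivity of set inclusion. The lifting in Definition~\ref{lifting} is a $\forall\exists$ condition, and such liftings inherit reflexivity and transitivity pointwise. So the argument reduces to two short verifications.

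For reflexivity, take any $U\subseteq W$ and any $v\in U$. We need some $u\in U$ with $u\succeq_I v$. Choosing $u:=v$ works, by reflexivity of $\succeq_I$. Hence $U\succeq_I^{s}U$. This covers $U=\emptyset$ as well, where the condition holds vacuously.

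For transitivity, assume $U\succeq_I^{s}V$ and $V\succeq_I^{s}Z$, and let $z\in Z$. By the second assumption there is $v\in V$ with $v\succeq_I z$. By the first assumption, applied to this $v$, there is $u\in U$ with $u\succeq_I v$. Transitivity of $\succeq_I$ (inclusion of violation sets) then gives $u\succeq_I z$. Since $z$ was arbitrary, $U\succeq_I^{s}Z$.

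There is no real obstacle here. The only point needing care is the order in which the witnesses are chosen: first pick $v$ for the given $z$, then pick $u$ for that $v$. One should also note that neither argument uses smoothness or totality, so the result holds for arbitrary subsets $U,V,Z$ of $W$, including empty ones.
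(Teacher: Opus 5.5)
Your proof is correct and follows the same route the paper implicitly takes: the paper gives no written proof, treating the claim as straightforward given the reflexivity and transitivity of $\succeq_I$ recorded in Remark~\ref{rem2}. Your argument is exactly that verification, lifting both properties through the $\forall\exists$ clause of Definition~\ref{lifting}, and it handles the order of witness choice and the empty-set cases correctly.
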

%\begin{proof} Straightforward. 
%\end{proof}

%Following \cite{DBLP:journals/jphil/MakinsonT01},  $\bigcirc (B/A)$ is said to have a consistent fulfillment, if $A\wedge B$ is PL-consistent. The following will come in handy:
%
%\begin{fa} Suppose each formula $\bigcirc (B/A)$ in $\kbo$ has a consistent fulfillment. If $w\in\max_{\succeq_I} (\Vert \top\Vert)$, then $V(w, \kb)=\emptyset$.  
%\end{fa} 
%
%\begin{proof} Suppose not. Then $\exists\bigcirc (B/A)\in \kbo$ s.t. $w\models A$ and $w\neg \models B$. By repleteness,
%  $\exists v$ s.t. $v\models A$ and $v\models B$. Since 
%
%\end{proof}

In deontic logic, it is usual to interpret  $\bigcirc (B/A)$  as  expressing a (strict) preference of  $A\wedge B$ over $A\wedge\neg B$. 
$\bigcirc (B/A)$ is equivalent with $A\wedge B > A\wedge\neg B$, where the connective  $>$ (on formulas) is defined by putting $A>B$ as true at a world iff
$\Vert B\Vert\not\succeq_I^{s} \Vert A\Vert$.
% (When $\succeq_I^{s}$ is total, it immediately follows $\Vert A\Vert\succeq_I^{s} \Vert B\Vert$.) 
%as meaning all the best $A$-worlds are $B$-worlds. Under suitable conditions this is equivalent to saying 
Intuitively, to determine the truth-value of  $\bigcirc (B/A)$ amounts to verifying whether  the set of  $A\wedge \neg B$-worlds is not at least as good as the set of $A\wedge B$-worlds, using an appropriate lifting relation. Instead of working with the set of worlds verifying the relevant formulas, this paper suggests working with the subset of those that are the most normal. 
%strictly better than the set of $A\wedge \neg B$-worlds, and hence compare under ideality  the two sets, using the lifting relation mentioned above. 
%The proposed variation consists in zooming on the subsets of their most plausible elements. 
%Rather than comparing the set of $A \wedge B$-worlds and the set of $A \wedge \neg B$-worlds, the idea is to zoom on the set of their most plausible elements. 
One gets: %the following evaluation rule. Roughly, it puts   
%It asserts that the most plausible worlds where both $A$ and $B$ are true are preferred over the most plausible worlds where $A$ is %true and $B$ is false. 
%Instead of saying the set of $A\wedge B$-worlds  is better than the set of $A\wedge\neg B$-worlds, one says that the set of most %plausible  $A\wedge B$-worlds is better than the set  of most plausible $A\wedge\neg B$-worlds.
%The following equivalent formulation will come in handy. Intuitively, it says that $U\succ_I^{s} V$ holds if the worst %elements in $U$ are all strictly better than eah best elements in $V$, or they are equally good. In other worlds, %%$U\succ_I^{s} V$ fails, when a best $V$ is strictly better than a worst $U$ (or they are incomparable). 
%\begin{prop}[min-max] If $\min_{\succeq_I} (U)\not=\emptyset$, then 
%\begin{flalign}
%U\succ_I^{s} V \Leftrightarrow \forall a\in \min_{\succeq_I} (U)  \forall b\in \max_{\succeq_I} (V) a\succ b \mbox{ %or } 
%a\approx b
%\end{flalign}
%\end{prop}
%\begin{proof} ($\Rightarrow$) Suppose $U\succ_I^{s} V$. Let $a\in \min_{\succeq_I} (U)$  and  $b\in %%\max_{\succeq_I} (V)$. 
%\end{proof}
 $\dob (B/A)$ is true, whenever the set of most normal $A\wedge \neg B$-worlds is not at least as good as the set of most normal $A\wedge B$-worlds. 
%most normal $A\wedge B$-worlds is strictly better than the set of most normal $A\wedge \neg B$-worlds. 
This kind of variant rule, where normality and ideality are combined, is familiar from the deontic logic literature$-$see e.g. \cite{jackson84,vtt97}. %When unfold, the truth-conditions for $\dob (B/A)$ read: 

%See XYZ. Formally:
\begin{definition}[Truth-conditions, part II] \label{def:o}
\begin{flalign*}
w\models\dob (B/A)   \mbox{ iff } 
%\max_{\succeq_N} (\Vert A\wedge B\Vert)  \succ_I^{s}
%\max_{\succeq_N} (\Vert A\wedge \neg B\Vert) \\
%\mbox{ iff }
%\max_{\succeq_N} (\Vert A\wedge B\Vert)  \succeq_I^{s}
%\max_{\succeq_N} (\Vert A\wedge \neg B\Vert)  \mbox{ and }\\
& \max_{\succeq_N} (\Vert A\wedge \neg B\Vert)  \not\succeq_I^{s}
\max_{\succeq_N} (\Vert A\wedge B\Vert)  
%\\
%\mbox{ iff } & 
% \forall v \in 
%\max_{\succeq_N} (\Vert A\wedge \neg B\Vert)  \;\; \exists w \in 
%\max_{\succeq_N} (\Vert A\wedge B\Vert) \;\; w\succeq_I v  \mbox{ and }\\
%&
%\exists v \in 
%\max_{\succeq_N} (\Vert A\wedge B\Vert) \;\; \forall w \in 
%\max_{\succeq_N} (\Vert A\wedge \neg B\Vert) \;\;w\not\succeq_I v
\end{flalign*}
\end{definition} 

Taken together, Def.\;\ref{lifting} and \ref{def:o} yield the following derived truth-conditions. They state that $\dob (B/A) $ holds, if there is a most normal $A\wedge B$-world that is not worse than (or weakly dominated by) any most normal $A\wedge \neg B$-world.

\begin{proposition}[Derived truth-conditions] \label{def2:o}
\begin{flalign*}
w\models\dob (B/A)   %\mbox{ iff } 
%\max_{\succeq_N} (\Vert A\wedge B\Vert)  \succ_I^{s}
%\max_{\succeq_N} (\Vert A\wedge \neg B\Vert) \\
%\mbox{ iff }
%\max_{\succeq_N} (\Vert A\wedge B\Vert)  \succeq_I^{s}
%\max_{\succeq_N} (\Vert A\wedge \neg B\Vert)  \mbox{ and }\\
%& \max_{\succeq_N} (\Vert A\wedge \neg B\Vert)  \not\succeq_I^{s}
%\max_{\succeq_N} (\Vert A\wedge B\Vert)  
%\\
\mbox{ iff } & 
% \forall v \in 
%\max_{\succeq_N} (\Vert A\wedge \nefg B\Vert)  \;\; \exists w \in 
%\max_{\succeq_N} (\Vert A\wedge B\Vert) \;\; w\succeq_I v  \mbox{ and }\\
%&
\exists v \in 
\max_{\succeq_N} (\Vert A\wedge B\Vert) \;\; \\
& \phantom{xxxxxx}\forall u \in 
\max_{\succeq_N} (\Vert A\wedge \neg B\Vert) \;\;u\not\succeq_I v
\end{flalign*}
\end{proposition}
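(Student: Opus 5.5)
This is a direct unfolding of Definition~\ref{def:o} through the lifting relation of Definition~\ref{lifting}, followed by pushing the negation through the quantifiers. No properties of $\succeq_N$ or $\succeq_I$ are needed: no smoothness (Proposition~\ref{smoothn}), no transitivity (Remark~\ref{rem2}). The argument holds pointwise for any sets $U,V\subseteq W$.

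Abbreviate $U=\max_{\succeq_N}(\Vert A\wedge \neg B\Vert)$ and $V=\max_{\succeq_N}(\Vert A\wedge B\Vert)$. By Definition~\ref{def:o}, $w\models\dob(B/A)$ iff $U\not\succeq_I^{s} V$.

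By Definition~\ref{lifting}, $U\succeq_I^{s}V$ iff $\forall v\in V\,\exists u\in U$ with $u\succeq_I v$. Negating this classically gives
\[
\exists v\in V\;\forall u\in U\; u\not\succeq_I v .
\]
Substituting back the definitions of $U$ and $V$ yields exactly the right-hand side of Proposition~\ref{def2:o}. Since each step is an equivalence, both directions are obtained at once, and the clause does not depend on $w$, matching the global character of the truth-conditions.

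There is no real obstacle. The only point worth checking is the degenerate cases, to confirm that the equivalence holds there too and not just generically.
\begin{itemize}
\item If $V=\emptyset$, then $U\succeq_I^{s}V$ holds vacuously. So $\dob(B/A)$ fails, and the existential on the right-hand side also fails.
\item If $U=\emptyset$ and $V\neq\emptyset$, then $U\not\succeq_I^{s}V$. So $\dob(B/A)$ holds, and the right-hand side holds vacuously in $u$.
\end{itemize}
Both cases agree, so no appeal to repleteness or smoothness is needed, and the proof is a two-line calculation.
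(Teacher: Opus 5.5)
Your proposal is correct and follows the paper's route exactly: the paper obtains Proposition~\ref{def2:o} by combining Definitions~\ref{lifting} and~\ref{def:o}, i.e.\ unfolding the lifting and negating the $\forall\exists$ clause. Your check of the cases $V=\emptyset$ and $U=\emptyset$ is a harmless extra; it confirms that neither smoothness nor repleteness is needed.
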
 

% {\sc ToDo: explain the restriction to the most plausible. the states known to be impossible are
%irrelevant
%from the poin}
 
%In Section \ref{benchmark}, we will see that the restriction to the most normal worlds will play a role in the elimination of the drowning effect. The assumption is that, when one utters %``It ought to be that $B$, given $A$", one assumes that things are as normal as possible. Thus, exceptional worlds are not relevant for the truth of the obligation. 

A few words about zooming in on the most normal worlds. The assumption is that when I say ``it ought to be that 
$B$, given $A$," I typically do so under the presumption that things are as normal as possible. In other words, exceptional or abnormal worlds  are treated as irrelevant when assessing whether the obligation holds. This assumption is in itself plausible. As shown in \S \ref{benchmark}, it prevents the drowning effect.

It is now possible to  explain why Def.\;\ref{lifting} was chosen. When $\kbc=\emptyset$, 
the truth-conditions for $\bigcirc$ coincide with those employed in deontic logic to accommodate unresolved conflicts, namely situations in which both $\bigcirc A$ and $\bigcirc \neg A$ hold without either taking precedence \cite{ddl:vanFraassen1972,ddl:Lew74,ddl:PS97}. This pattern of evaluation may be referred to as the $\exists\forall$ truth-conditions. The use of this approach to accommode conflicts is advocated by \cite{Gob03,ddl:gob13}.  The rule reads: $\bigcirc (B/A)$ is true in $w$ if there is some $A\wedge B$-world such that, as we go up in the ordering, the material implication $A\rightarrow B$ always holds.\footnote{
Compared to the standard evaluation pattern in terms of ``best'' \cite{H69}, the main advantage of the 
 $\exists\forall$  rule is that it can accommodate  conflicts without generating a deontic collapse. In particular, the set $\{\bigcirc A,\bigcirc \neg A\}$ 
is satisfiable in a model. But, unlike the  best-based evaluation rule, the 
 $\exists\forall$  rule does not warrant the inference from $\bigcirc A$ and $\bigcirc \neg A$ to $\bigcirc B$ for an arbitrary $B$, implying the collapse of all normative distinctions in the face of a conflict. }
%which arises in Hansson’s best-worlds semantics when normative conflicts collapse the set of ideal worlds.
%over the standard evaluation pattern in terms of ``best'' \cite{H69} is that it %accommodates (unresolved) conflicts between obligations without resulting in deontic %explosion. That is, %$\bigcirc A$ and $\bigcirc \neg A$ can bot
%h be satisfiable in a model where the betterness relation is not total, and thus they remain logically consistent. Crucially, unlike the best-based evaluation, 
 Formally, 
\begin{proposition}[$\exists\forall$ truth-conditions]\label{thm:collapse}
%\begin{restatable}[$\exists\forall$ rule]{thm}{goldbach}
%\label{thm:collapse}
 Let $\kbc=\emptyset$.  %, and $\succeq_I$  is cardinality-based. 
 %Suppose  also that  $\Vert A\Vert\not=\emptyset$. 
 % $\max_{\succeq_I} (\Vert A\Vert)\not=\emptyset$. 
 Then:
\begin{flalign*}
%\begin{split}
w\models \dob (B/A)   & \mbox{ iff } \exists v  \bigl( v\models A\wedge B  \mbox{ and }\\
& \phantom{xxxx} \forall u   \,(u\succeq_I v  \rightarrowtriangle  u\models A \rightarrow B)\bigr)
%\end{split}\tag{$\exists\forall$}\label{ea-rule}
\end{flalign*}  
%\end{restatable}
\end{proposition}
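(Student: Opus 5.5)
The plan is to reduce the claim to Proposition~\ref{def2:o} via Remark~\ref{coll}, after which only a contraposition remains. First, since $\kbc=\emptyset$, Remark~\ref{coll} gives $\max_{\succeq_N}(\Vert C\Vert)=\Vert C\Vert$ for every Boolean $C$. Instantiating $C$ as $A\wedge B$ and as $A\wedge\neg B$ in the derived truth-conditions of Proposition~\ref{def2:o} yields
\begin{flalign*}
w\models\dob(B/A) \mbox{ iff } & \exists v\in\Vert A\wedge B\Vert\;\forall u\in\Vert A\wedge\neg B\Vert\;\; u\not\succeq_I v.
\end{flalign*}
Nothing about the particular shape of $\succeq_I$ (Definition~\ref{i:r}) is needed beyond its being a relation on $W$. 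In particular, neither smoothness (Remark~\ref{rem2}) nor the emptiness of $\Vert A\wedge B\Vert$ plays a role, since both sides of the target equivalence already contain the existential over $v\models A\wedge B$.

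Second, fix a witness $v$ with $v\models A\wedge B$ and compare the two universal clauses. The clause $\forall u\in\Vert A\wedge\neg B\Vert\; u\not\succeq_I v$ says that no $u$ is both an $A\wedge\neg B$-world and satisfies $u\succeq_I v$. Contraposing, it is equivalent to: for all $u$, if $u\succeq_I v$ then $u\not\models A\wedge\neg B$. By the Boolean clauses of the satisfaction relation, $u\not\models A\wedge\neg B$ holds iff $u\models A\rightarrow B$. This is exactly the clause $\forall u\,(u\succeq_I v\rightarrowtriangle u\models A\rightarrow B)$ of the $\exists\forall$ condition.

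Since the same $v$ witnesses both sides, the equivalence holds in both directions, which proves the statement. There is no real obstacle here. The only point needing care is to invoke Remark~\ref{coll} correctly: $\succeq_N$ becomes the universal relation, so every world in $\Vert C\Vert$ is maximal, and the restriction to most normal worlds is vacuous.
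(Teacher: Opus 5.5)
Your proposal is correct and follows the paper's proof: the paper also derives the result at once from Proposition~\ref{def2:o} and Remark~\ref{coll}. Your contraposition step, turning $u\not\models A\wedge\neg B$ into $u\models A\rightarrow B$, simply spells out the routine unfolding that the paper leaves implicit.
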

\begin{proof}
This follows at once from Prop. \ref{def2:o} and Rem. \ref{coll}.%\phantom\qedhere
\end{proof}
$\bigcirc$ as defined in Prop. \ref{thm:collapse} satisfies the axioms and rules of  \cite{Gob03}'s system DP.\footnote{DP differs from \cite{L73}'s system VN in not containing the axiom $\bigcirc (B/A)\rightarrow \neg\bigcirc (\neg B/A)$ nor the principle AND, which is present in \cite{ddl:KLM90}'s system\;P. }

Looking back, this consideration is what primarily motivated the use of Def.\;\ref{lifting}. It is reasonable to expect that an established theory of conditional obligation should emerge as a limiting case when normality considerations are set aside.

\begin{comment}
 I will refer to this evaluation rule as the (\ref{ea-rule}) rule. As explained in Section \ref{nml}, the main advantage of the (\ref{ea-rule}) rule over  \cite{H69}’s truth-conditions in  terms of ``best'' is that it handles unresolved  conflicts between obligations more satisfactorily.
%\footnote{
%\revvv{The main advantage of the $\exists\forall$ rule over the standard evaluation pattern in terms of ``best'' \cite{H69} is that it allows for conflicts between obligations without resulting in deontic explosion. $\bigcirc A$ and $\bigcirc \neg A$ can both be satisfiable in a model where the betterness relation is not total, and thus they remain logically consistent. Crucially, unlike the best-based evaluation, $\bigcirc A$ and $\bigcirc \neg A$ do not entail $\bigcirc B$ for an arbitrary $B$.}
Looking back, this consideration is what primarily motivated the use of Definition \ref{lifting}. It is reasonable to expect that a widely accepted theory of conditional obligation should emerge as a limiting case when normality considerations are set aside.
\end{comment}

\subsection{Nonmonotonic Layer} \label{nml}

% is defined the way 

%My consequence relation $\Vdash$ is the modal analog of 
%The consequence or entailment relation of DP as %defined in \cite{Gob03} is monotonic. 
 This section adds a nonmonotonic layer to the framework.

It is helpful to adopt Horty’s terminology. His entailment  relation $\mid\sim$ operates on a theory $\Delta$, and outputs formulas (\eg obligations)  ``entailed'' by the theory (see, e.g., \cite[p. 80]{horty12}).
%I will stick to his terminology. Call $\Delta= (\Gamma, \kb)$ a theory
%To make this explicit, 
%and use his entailment relation $\mid\sim$. 
In what follows, $\Gamma$ is a (finite) set of ordinary (Boolean or alethic) formulas. 
%propositions, including a set $ \mathds{C}$ of hard information,\footnote{$\mathds{C}$ is the set of alethic formulas appearing in Definition \ref{over:def} (which helps to determine which rule overrides the other).} 
A theory  $\Delta$ is a pair $(\Gamma, \kb)$, where $\kb=(\kbc,\kbo)$.
$\mid\sim$ is defined as in~\cite{ddl:PS97}. 
The definition refers to  the class $\mathcal{C}(\kb)$ of all $\kb$-ordered models, whose normality and ideality orderings are determined by $\kb$ (\emph{as per} \S \ref{sec:norm} and \S \ref{sec:idea}). Accordingly, one can say that $\mid\sim$ is (indirectly) parametrized by $\kb$.
\begin{definition}\label{cons}% $\Gamma\Vdash_{\kb} A$ 
	$\Delta \mid\sim A$ (``$\Delta$ entails $A$'') iff  $\Gamma \models_{\mathcal{C}(\kb)} A$.
	%	iff $\Gamma\models_{\mathcal{C}(\kb)} A$.
	%, where $ \kb\subseteq\Gamma$. 
	%wrt to the class of models where the violation sets inducing the ordering are determined by $\kb \subseteq\Gamma$.
\end{definition}
% We say that $\Delta$ entails a formula $A$ (notation: $\Delta \mid\sim A$) if $\Gamma \models_{\mathcal{C}(\kb)} A$.
Intuitively, Def. \ref{cons} says that theory $\Delta$ entails $A$ if 
in any  $\kb$-ordered model $M$, and any world $w$ in $M$, if all the formulas in $\Gamma$ hold in $w$, then so does $A$.  
Def.\;\ref{cons} does not  require that every normality conditional $A \Rightarrow B$ in $\kbc$ be satisfied in world $w$. Prop. \ref{th:inc} will demonstrate that this requirement is automatically met. Similarly, the definition imposes no such requirement on the obligations $\bigcirc(B/A)$ in $\kbo$; Prop.~\ref{inc:cor} and~\ref{no-drow} specify when this too holds automatically.

% in the model under suitable conditions, for instance
%if $\kbo$ does not contain a more specific, overriding %obligation.

To capture the property of monotonicity  in our setting, the usual notion of containment between theories must be redefined.

\begin{definition}\label{cont}
	 $\Delta_1 = (\Gamma_1, \kb_1)$ is contained in $\Delta_2 = (\Gamma_2, \kb_2)$ (notation: $\Delta_1 \sqsubseteq \Delta_2$) whenever %$\mathds{C}_1 \subseteq \mathds{C}_2$, 
	 $\Gamma_1 \subseteq \Gamma_2$, $\kbc_1 \subseteq \kbc_2$,  and $\kbo_1 \subseteq \kbo_2$. 
	\end{definition}
%Containment between theories is %defined as follows: $\Delta_1 = %(\Gamma_1, \kb_1)$ is contained in %$\Delta_2 = (\Gamma_2, \kb_2)$ %(notation: $\Delta_1 \sqsubseteq %\Delta_2$) whenever $\mathds{C}_1 %\subseteq \mathds{C}_2$, $\Gamma_1 %\subseteq \Gamma_2$ and $\kb_1 %\subseteq \kb_2$.
 Def.\,\ref{cont} means that $\Delta_2$ contains all the facts, hard information, explicit obligations and explicit normality conditionals  already present in $\Delta_1$. Monotonicity is the property that if $\Delta_1 \mid\sim A$ and $\Delta_1 \sqsubseteq \Delta_2$, it follows that $\Delta_2 \mid\sim A$.

% When convenient, I will use Prakken and Sergot's compact ``hat" notation
%$\widehat{A}$ to indicate that $A\in\kb$. No hat will then mean that  $A\in\Gamma$.
%
%Following common practice, an obligation in $\kbo$ can be interpreted as a prima facie obligation. An obligation outputted by the model can be interpreted as
%%$A$ at the right-hand side of $\models$ will typically be a deontic formula $\bigcirc(B/A)$. This one must be interpreted as
% an ``all things considered'' obligation, resulting from after weighing all prima facie obligations against each other. Not all prima facie obligation becomes an all things considered obligation. 
% %T
% %his is why inclusion holds in a restricted form only (Corollary \ref{inc:cor}). 
% The distinction between prima facie and all things considered obligation will gain greater significance in Section \ref{priorities}, when a priority relation on obligations is introduced. 

\subsection{Example} \label{benchmark}

A specificity structure is represented as $\{\bigcirc A$, $\bigcirc (\neg A/B),\top\Rightarrow\neg B\}$.  On the technical side, $\top\Rightarrow\neg B$ is used to generate a normality ordering in the model. Conceptually, $\top\Rightarrow\neg B$ fixes  $\neg B$ as the default context. Since $\neg B$ is declared holding by default, the default obligation in that context is $\bigcirc A$, defeasibly overridden by  $\bigcirc (\neg A/B)$.

%To %illustrate further the interaction between normality and ideality, I consider an extended version of the asparagus scenario, obtained by  adding the premise ``If you are at the restaurant, you normally eat asparagus".
%\footnote{This addition was suggested by an anonymous referee.}

\begin{example}\label{horty}[The asparagus, extended version] Consider a theory $\Delta =(\Gamma,\kb)$ where $\Gamma=\emptyset$ and $\kbc=\{ \top\Rightarrow\neg \mathit{a}, r\Rightarrow\mathit{a}\}$ and  
$\kbo=\{\dob\neg \mathit{f}, \dob(\mathit{f}/\mathit{a}), \dob\mathit{n}\}$.
%\begin{flalign*}
%\dob\neg \mathit{f}\qquad \dob(\mathit{f}/\mathit{a}) \qquad\dob\mathit{n} \qquad \top\Rightarrow\neg \mathit{a}  \qquad  r\Rightarrow\mathit{a}
%\end{flalign*}
%The conditional ``$\top\widehat{\Rightarrow}\neg\mathit{a}$'' tells us that being served asparagus is exceptional. As before,
Given $\kbc$, the set of all possible worlds is partitioned into three
equivalence classes: the class of those with  $\neg a$ (=the most normal worlds), the class of those with $r\wedge a$ or $\neg r\wedge a$ (=the second most normal worlds) and 
the class of those with $r\wedge \neg a$ (=the least normal worlds). Fig. \ref{aspa} shows the resulting $\kb$-ordered model $M$.
The ``$y$-axis" represents normality. The higher up a world is, the more normal it is. The ``$x$-axis" represents ideality. The more to the right a world is, the farther away  to full ideality it is.
%\footnote{The labels ``best'', ``second-best'', and ``worst'' are used in a broad sense. Worlds at the same level of ideality may be incomparable, since the ordering is given by set-theoretic inclusion.} 
% The (rotated) symbol ``='' indicates that the related worlds are equally good. 
%We distinguish between ideal worlds, sub-ideal worlds, and worst worlds.
%, and the symbol `$\not=$" indicates that they are incomparable under ideality. 
% A world where $\neg r\wedge\neg a\wedge \neg f\wedge n$ holds is represented as $\bar{r}\bar{a}\bar{f}n$. 
%the less ideal it is.
%from lower to higher in both cases. axe represents ideality and the vertical axe %represents normality:
The account gives the expected solutions. First, each of $\dob (\mathit{f}/\mathit{a})$, 
$\dob\mathit{n}$, $\top\Rightarrow\neg \mathit{a}$ and $r\Rightarrow\mathit{a}$ holds in a given world $w$ in $M$, and so each is defeasibly entailed by $\Delta$. 
Second, we have that 
%More generally, we will see that the inclusion requirement is met. Second, we have that, where $w$ represents an arbitrarily chosen world, 

\vspace{-0.2cm}
%\smallskip
\begin{minipage}{3cm}
\setlength{\parindent}{2em}
\begin{flalign}
& \Delta\mid\sim \dob (\mathit{n}/a)\label{sa1}\\
& \Delta \bcancel{\mid\sim} \dob (\neg \mathit{f}/a)\label{sa2}
\end{flalign}
\end{minipage}
\begin{minipage}{5cm}
\setlength{\parindent}{2em}
\begin{flalign}
& \Delta \bcancel{\mid\sim} \dob \neg \mathit{a}  \mbox{ (witness: } \mathit{afn}\mbox{)} \label{kratz}\\
& \Delta\mid\sim  \dob\neg\mathit{f} \mbox{ (witness: } \mathit{\bar{r}\bar{a}\bar{f}n}\mbox{ )}\label{del}
\end{flalign}
\end{minipage}

\medskip
\noindent
(\ref{sa1}) and  (\ref{sa2}) show that the ``right'' amount of strengthening is obtained. 
%of the antecedent is warranted in the absence of evidence to the contrary, yet it is blocked when a more specific obligation supersedes the obligation. 
% Let us $A\succ_I^{s}B$ abbreviates $\Vert A\Vert\succ_I^{s}\Vert %B\Vert$. 
%We have (\ref{sa1}), because 
%$\max_{\succeq_N}(\Vert  a\wedge\neg\mathit{n}\Vert)  \not\succeq_I^{s}\max_{\succeq_N}(\Vert a\wedge \mathit{n}\Vert)$. We have (\ref{sa2}), because 
%$\max_{\succeq_N}(\Vert a\wedge\mathit{f}\Vert) \succeq_I^{s}\max_{\succeq_N}(\Vert  a\wedge\neg\mathit{f}\Vert)$.
 %The worst most plausible $an$ is still as good as the best most plausible $a\bar{n}$, but the worst most plausible $a\bar{f}$ is worse than the best most plausible $af$. 
% Semantic explanation: $n$ and $f$ are best among the set of $a$-worlds, the subset of exceptional worlds.
%\todo{the best of the most plausible $a\bar{n}$ is not strictly better than the worst most plausible $an$. }
% the most normal $a\wedge n$-worlds
%and the most plausible  $a\wedge f$-worlds are both exceptional worlds; the first arez %better than the second; they also turn out to make $f$ true. overall are better than the %most plausible  $a\wedge\neg n$-worlds,  and than the most plausible 
%(\ref{kratz}) is the case, because $\max_{\succeq_N}(\Vert \mathit{a}\Vert)\succeq_I^{s}\max_{\succeq_N}(\Vert \neg \mathit{a}\Vert) $. Therefore, 
 (\ref{kratz}) and (\ref{del})  show that the fallacy of the prohibited exception and the drowning effect are avoided. %This is because worlds belonging to different levels of normality are incomparable %under the ideality relation. 
%Worlds belonging to different levels of normality are %incomparable under $\succeq_I$. $\max_{\succeq_N}(\Vert \neg\mathit{a}\Vert) \not\succeq_I^{s}\max_{\succeq_N}(\Vert  \mathit{a}\Vert) $, and so
%$\max_{\succeq_N}(\Vert \neg\mathit{a}\Vert) %\not\succ_I^{s}\max_{\succeq_N}(\Vert  \mathit{a}\Vert) %$.
%$\geq_{I}$. 
%Normality and ideality are two distinct concepts. 
%(\ref{del}) shows that the drowning %effect does not apply.

%the prohibition of $\mathit{f}$ still holds in context $\top$. Indeed, $\max_{\succeq_N}(\Vert \mathit{f}\Vert) \not\succeq_I^{s}\max_{\succeq_N}(\Vert  \neg\mathit{f}\Vert)$.

\begin{figure}[h]\centering
\includegraphics[scale=0.3]{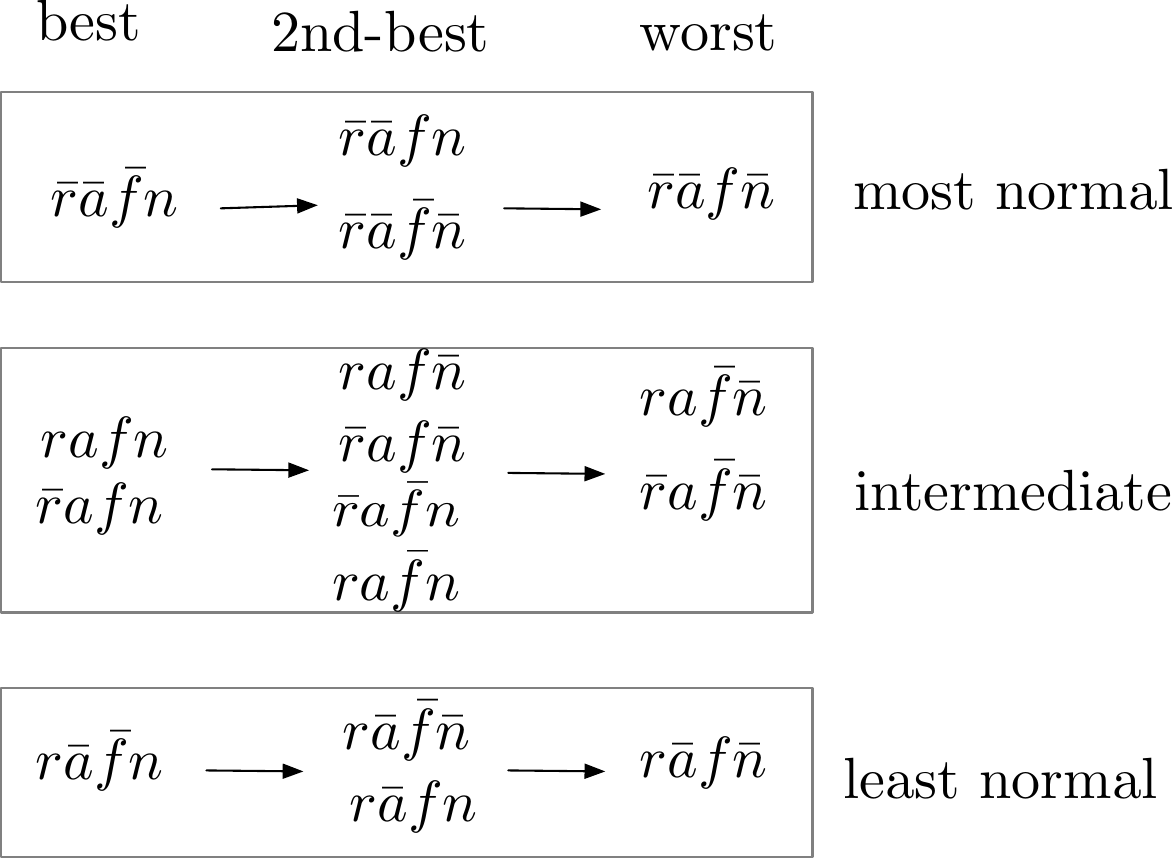}\caption{The asparagus}\label{aspa}
\end{figure}

%The two accounts fail in as much as they do not keep them apart. 
%and that the problem faced by Delgrande's account is avoided. 
%Pne does not compare 
% The account does not transform the exceptional into a prohibition.  This is essential %because worlds in different levels of normality remain incomparable under the ideality %relation. (\ref{del}) tells us that t We do not ``loose'' obligations.  
\end{example}
We can now see why zooming in on the most normal worlds in Def.\,\ref{def:o} prevents the drowning effect.  As mentioned, the idea is that, when I say e.g. ``you ought not to eat with your fingers,'' I do so assuming that circumstances are as normal as possible. Abnormal or exceptional worlds—such as those where you are served asparagus—are disregarded when evaluating whether the obligation holds. Fig.~\ref{aspa} illustrates that, if such worlds were included, $\bigcirc\neg f$ would be false. For $\Vert \mathit{f}\Vert \succeq_I^{s}\Vert \neg\mathit{f}\Vert$ while $\max_{\succeq_N}(\Vert \mathit{f}\Vert) \not\succeq_I^{s}\max_{\succeq_N}(\Vert \neg\mathit{f}\Vert)$.
Thus, obligations do not drown because normality acts as an extra filter. 
%(cf. \S \ref{rank})

In this setting, the fallacy of the prohibited exception does not arise because worlds at different levels of normality remain comparable under the ideality relation.  As mentioned in \S \ref{rank}, the key point is that,
other things being equal with respect to further obligations, a world in which the general obligation is violated and the more specific one fulfilled ($a\wedge f$) is as good as a world in which the general obligation is fulfilled ($\neg a\wedge \neg f$). 
This remains true when focusing on the most normal $a\wedge f$- and $\neg a\wedge \neg f$-worlds, even though they lie at different levels of normality.

\section{Properties} \label{laws:section}

This section generalizes the points made in the previous example, by identifying postulates characteristic of the system at the nonmonotonic layer. They serve as laws for the explicitly given obligations and normality conditionals.

The first postulate is the principle of inclusion,\footnote{The name is from \cite{CasiniMV19}.} stating that
explicit conditionals are defeasibly entailed.
%For obligation, inclusion holds only in a restricted form.
For normality conditionals, inclusion holds without restriction (subject to a consistency proviso).
%\begin{makethm}[Inclusion, normality]{thm}{inclu:cond}
%\label{cond:inclusion}

%\begin{makethm}[Inclusion]\label%{cond:inclusion}

%\newcommand{\definemythm}{
\begin{restatable}[Inclusion, normality]{proposition}{mythm}
	\label{th:inc}
	Consider a theory $\Delta$ such that $A\Rightarrow B\in\kbc$   and $A\wedge\neg B$ is PL-consistent. Then (under the assumption of repleteness)
	\begin{align*}
		&  \Delta\mid\sim A\Rightarrow B %\label{thm:inclusionin} 
	\end{align*}
\end{restatable}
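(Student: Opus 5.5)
We must show that in every $\kb$-ordered model $M$ (replete, as assumed) and every world $w$ of $M$, $\max_{\succeq_N}(\Vert A\Vert)\subseteq \Vert B\Vert$. Since the truth-condition for $\Rightarrow$ does not depend on the world of evaluation, the facts in $\Gamma$ play no role. The plan is to argue by contradiction. Suppose some $u\in\max_{\succeq_N}(\Vert A\Vert)$ has $u\models A\wedge\neg B$. We then use repleteness to produce an $A$-world $v$ with $v\succ_N u$, which contradicts maximality.

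First, locate $A\Rightarrow B$ in the ranked partition. By Fact~\ref{lem:decreasing2}, $\mathcal{E}_\infty=\emptyset$, so $\Delta_m=\emptyset$. Hence $A\Rightarrow B\in\Delta_k=\mathcal{E}_k-\mathcal{E}_{k+1}$ for a unique $k<m$. Since $A\Rightarrow B\notin\mathcal{E}_{k+1}=\varepsilon(\mathcal{E}_k)$, its antecedent $A$ is not exceptional for $\mathcal{E}_k$. That is, $\m(\mathcal{E}_k)\not\models_{\PL}\neg A$, so $A\wedge\bigwedge\m(\mathcal{E}_k)$ is PL-consistent. Because $A\rightarrow B\in\m(\mathcal{E}_k)$, this conjunction is equivalent to $A\wedge B\wedge\bigwedge\m(\mathcal{E}_k)$. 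By repleteness (Def.~\ref{assm:rep}) there is therefore a world $v$ in $M$ satisfying $A$ together with every material counterpart of a rule in $\mathcal{E}_k$. Consequently $F(v)\cap\mathcal{E}_k=\emptyset$. Since $\mathcal{E}_k=\Delta_k\cup\Delta_{k+1}\cup\dots\cup\Delta_m$ by the strict decrease of the LM-sequence (Fact~\ref{lem:decreasing}), $v$ falsifies no rule at level $k$ or at any more specific level.

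Next, compare the tuples $\langle n_1,\dots,n_m\rangle$ of $F(v)$ and $F(u)$, where $n_i=|\Delta_{m-i}\cap X|$. The coordinates run from the most specific level downward. For the coordinates corresponding to $\Delta_{m-1},\dots,\Delta_k$, the entries for $v$ are all $0$. At the coordinate for $\Delta_k$, $u$ has an entry $\geq 1$, because $A\Rightarrow B\in F(u)$. So the first coordinate where the tuples differ occurs at some level $j\geq k$, and there $v$'s entry $0$ is strictly smaller than $u$'s. By the lexicographic definition of $\gtrsim$, $F(v)\gtrsim F(u)$ but not $F(u)\gtrsim F(v)$. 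By Def.~\ref{norm}, $v\succ_N u$. Since $v\models A$, this contradicts $u\in\max_{\succeq_N}(\Vert A\Vert)$.

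The main obstacle, though modest, is the first step. One must justify that membership in $\Delta_k$ yields PL-consistency of $A$ with all of $\m(\mathcal{E}_k)$, and not merely with $\m$ of the less specific rules. This is exactly what the definition of $\varepsilon$ provides, together with the index bookkeeping between $\Delta_{m-i}$ and the tuple coordinates. The proviso that $A\wedge\neg B$ is PL-consistent is not actually needed for this direction. It only guarantees that the claim is not vacuous, and it can be noted in passing.
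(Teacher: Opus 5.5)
Your proof is correct and follows essentially the paper's argument: repleteness yields an $A$-world satisfying all of $\m(\mathcal{E}_k)$ (your $k$ coincides with the paper's $\rr(A)$), and this world lexicographically dominates any $A$-world that falsifies a rule at level $k$ or at a more specific level. The only difference is that you use directly that the falsified rule $A\Rightarrow B$ itself lies in $\Delta_k$. This lets you skip the paper's Lemma~S1 on $\rr(A)<\rr(A\wedge\neg B)$ and, as you correctly observe, makes the PL-consistency proviso superfluous.
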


The property of inclusion for obligation holds only in a qualified form, as a direct consequence of the fact that strengthening of the antecedent (SA) holds as a defeasible principle.

%The next property is 
%strengthening of the antecedent (SA). It holds as a defeasible principle. 

%A defeasible form of antecedent strengthening %is available in the following form:

\begin{proposition}[Defeasible SA, obligation]\label{laws}
	Suppose $\Delta$ is such that $\dob (B/A)\in\kbo$,
	$\neg\exists r\in\kbo$ \mbox{ s.t. } $r \triangleright \dob (B/A)$ and
	$\Diamond (A \wedge B\wedge C) \in\Gamma$. Then,  
		\begin{equation*}
	\Delta\mid\sim \dob (B/A\wedge C) 
	\label{def:sa:o} %\tag{def-SA}
	\end{equation*}

%	\begin{equation}
%		\begin{split}
%		& \mbox{ If } \dob\!(B/A)\in\kbo,  \neg\exists r\in\kbo \mbox{ s.t. } r \triangleright \dob (B/A) \mbox{ and }  \notag \\ &  \qquad \Diamond (A \wedge B\wedge C) \in\mathds{C} \subseteq\Gamma
%		 \mbox{, then }  \Gamma \Vdash_{\kb}\dob (B/A\wedge C) 
%		\end{split}\label{def:sa:o} \tag{def-SA}
%	\end{equation}
\end{proposition}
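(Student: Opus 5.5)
The plan is to unfold the derived truth-conditions of Prop.~\ref{def2:o} and exhibit the required witness directly. The key observation is that the hypothesis ``no $r\in\kbo$ overrides $\dob(B/A)$'' makes $\dob(B/A)$ count as violated at \emph{every} world where $A\wedge\neg B$ holds. That membership in the violation set is enough to block weak dominance.

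Fix an arbitrary $\kb$-ordered model $M=(W,\succeq_N,\succeq_I,v)$ and a world $w$ with $M,w\models\Gamma$. By Def.~\ref{cons} of $\mid\sim$ we must show $M,w\models\dob(B/A\wedge C)$. By Prop.~\ref{def2:o} it suffices to find some $v\in\max_{\succeq_N}(\Vert A\wedge C\wedge B\Vert)$ such that $u\not\succeq_I v$ for all $u\in\max_{\succeq_N}(\Vert A\wedge C\wedge\neg B\Vert)$.

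The first step is existence of the witness. Since $\Diamond(A\wedge B\wedge C)\in\Gamma$, and $\Box$ is evaluated globally, there is some world $w_1\in\Vert A\wedge B\wedge C\Vert$. By smoothness (Prop.~\ref{smoothn}), either $w_1$ itself is in $\max_{\succeq_N}(\Vert A\wedge B\wedge C\Vert)$, or some $w_2\succ_N w_1$ lies in that set. Either way this set is non-empty, and we pick $v$ in it. This is the step I expect to need the most care: without smoothness the max-set could be empty and the existential truth-condition would fail. Prop.~\ref{smoothn} is exactly what rules this out.

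The second step compares violation sets. Since $v\models B$, we have $\dob(B/A)\notin V(v)$. Now take any $u\in\max_{\succeq_N}(\Vert A\wedge C\wedge\neg B\Vert)$; if this set is empty the universal condition holds vacuously. Then $u\models A\wedge\neg B$, that is, $u\models b(\dob(B/A))\wedge\neg h(\dob(B/A))$. Because no $r_j\in\kbo$ satisfies $r_j\triangleright\dob(B/A)$, the extra clause in Def.~\ref{violation:o} is vacuously satisfied. Hence $\dob(B/A)\in V(u)$.

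Therefore $V(u)\not\subseteq V(v)$, so by Def.~\ref{i:r} we get $u\not\succeq_I v$. Since $u$ was arbitrary, $v$ witnesses the derived truth-condition, which gives $M,w\models\dob(B/A\wedge C)$. As $M$ and $w$ were arbitrary, $\Delta\mid\sim\dob(B/A\wedge C)$.
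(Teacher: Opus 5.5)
Your proof is correct and follows essentially the same route as the paper's. You obtain the witness $v\in\max_{\succeq_N}(\Vert A\wedge B\wedge C\Vert)$ from the $\Diamond$-premise together with smoothness, then observe that the non-overridden $\dob(B/A)$ lies in $V(u)$ for every $u\in\max_{\succeq_N}(\Vert A\wedge C\wedge\neg B\Vert)$ but not in $V(v)$, which gives $u\not\succeq_I v$.
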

\begin{proof}  We show that, for all ${\bf R}$-ordered model $M$, and all $w$ in $M$, if $w\models\bigwedge\Gamma$, then 
$w\models\dob (B/A\wedge C)$. As before, $\exists u$ s.t. $u\models A\wedge B\wedge C$.  By smoothness, $\exists v\in\max_{\succeq_N} (\Vert A\wedge B\wedge C\Vert)$.
	Let $z \in 
	\max_{\succeq_N} (\Vert A\wedge  C\wedge\neg B\Vert)$. 
	We have $v\models  A\wedge B$  and  $z\models  A\wedge \neg B$. Since $\neg\exists r\in\kbo \mbox{ s.t. } r \triangleright \dob (B/A)$, $\dob (B/A)\in V(z)$. But $\dob (B/A)\not\in V(v)$. Hence $z\not\succeq_I v$. %\vspace{-0.25cm}\phantom\qedhere
\end{proof}
The case where $C:=\top$ yields inclusion. This property holds under the assumption that the normative system contains no more specific obligation overriding the one in question. This qualification is justified, since a more specific obligation may override the general obligation in that context. This condition is sufficient, though not necessary. 
%Analogously, Prop. \ref{no-drow} provides sufficient conditions under which an overridden obligation does not ``drown.''

%Before considering inclusion for obligation 
\begin{proposition}[Inclusion, obligation]\label{inc:cor}
	Suppose $\Delta$ is such that $\dob (B/A)\in\kbo$, $\neg\exists r\in\kbo \mbox{ s.t. }  r \triangleright \dob (B/A)$ and    $\Diamond (A \wedge B) \in\Gamma$. Then  
	\begin{flalign*}\Delta\mid\sim\dob (B/A) %\label{inc:o:cor}\notag  %
%	%\tag{$\bigcirc$-%Inc} 
		\end{flalign*}
%	%	If  $\dob(B/A)\in\kbo$, $\neg\exists r\in\kbo$ s.t.  $r \triangleright \dob (B/A)$ and  $A\wedge B$ is PL-consistent, then
%	%\mbox{ and } \Diamond (A\wedge B)\in\Gamma
%%	\begin{flalign*}
%%		\begin{split}
%%			\mbox{If  }\dob\!(B/A)\in\kbo, \neg\exists r\in\kbo \mbox{ s.t. }  r \triangleright \dob (B/A)  \mbox{ and  }  \Diamond (A \wedge B) \in\mathds{C} \subseteq\Gamma, \mbox{ then } \\
%%			  \Gamma \Vdash_{\kb} \dob (B/A) %\tag{$\bigcirc$-Inc} \label{inc:o:cor} 
%%			  \end{split}
%%				%&\mbox{ If } A\Rightarrow B\in\kbc,  \mbox{ then } \Gamma\Vdash A\Rightarrow B \label{inc:i:cor} \\
%%	\end{flalign*}
\end{proposition}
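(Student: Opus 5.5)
The plan is to obtain Proposition~\ref{inc:cor} as the special case $C:=\top$ of Proposition~\ref{laws}, as the remark after that proposition suggests. All hypotheses of Proposition~\ref{laws} are then available: $\dob(B/A)\in\kbo$ and no $r\in\kbo$ with $r\triangleright\dob(B/A)$ are assumed verbatim. The third hypothesis, $\Diamond(A\wedge B\wedge\top)\in\Gamma$, is not literally in $\Gamma$, but $\Diamond(A\wedge B)\in\Gamma$ is S5-equivalent to it, and the truth of $\Diamond$ depends only on truth-sets. So Proposition~\ref{laws} gives $\Delta\mid\sim\dob(B/A\wedge\top)$. Finally, $\Vert A\wedge\top\wedge B\Vert=\Vert A\wedge B\Vert$ and $\Vert A\wedge\top\wedge\neg B\Vert=\Vert A\wedge\neg B\Vert$. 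By Definition~\ref{def:o}, $\dob(B/A\wedge\top)$ and $\dob(B/A)$ therefore have the same truth-value at every world of every $\kb$-ordered model, which yields $\Delta\mid\sim\dob(B/A)$.

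If the literal-membership issue is a concern, I would instead repeat the proof of Proposition~\ref{laws} directly. Fix $M\in\mathcal{C}(\kb)$ and $w\models\bigwedge\Gamma$. Since $w\models\Diamond(A\wedge B)$, some world satisfies $A\wedge B$. By smoothness (Proposition~\ref{smoothn}), pick $v\in\max_{\succeq_N}(\Vert A\wedge B\Vert)$. By Proposition~\ref{def2:o}, it suffices to show $u\not\succeq_I v$ for every $u\in\max_{\succeq_N}(\Vert A\wedge\neg B\Vert)$. Such a $u$ satisfies $A\wedge\neg B$, and no $r\in\kbo$ overrides $\dob(B/A)$, so the blocking clause of Definition~\ref{violation:o} is vacuous and $\dob(B/A)\in V(u)$. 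Since $v\models B$, we have $\dob(B/A)\notin V(v)$. Hence $V(u)\not\subseteq V(v)$, that is, $u\not\succeq_I v$ by Definition~\ref{i:r}.

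There is no real obstacle. The only points needing care are the vacuous blocking clause, which is exactly where the no-overriding hypothesis is used, and the nonemptiness of $\max_{\succeq_N}(\Vert A\wedge B\Vert)$, which is exactly where $\Diamond(A\wedge B)$ and smoothness are used. If $\max_{\succeq_N}(\Vert A\wedge\neg B\Vert)$ is empty, the universal condition holds vacuously.
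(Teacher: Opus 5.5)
Your proposal is correct and follows the paper's route. The paper obtains this proposition as the instance $C:=\top$ of Proposition~\ref{laws}, and that proposition's proof is exactly the direct argument you spell out: smoothness gives a most normal $A\wedge B$-world, and the no-overriding hypothesis puts $\dob(B/A)$ in the violation set of every most normal $A\wedge\neg B$-world but not in that world's violation set. Your care about $\Diamond(A\wedge B\wedge\top)$ versus $\Diamond(A\wedge B)$ is a harmless refinement, since $\mid\sim$ is defined semantically.
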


Nonmonotonicity follows from Prop. \ref{laws} and Ex. \ref{horty}. 

%To make this explicit, it is convenient to adopt Horty’s terminology and his entailment relation $\mid\sim$. Consider a pair $(\Gamma, \kb)$ as the analogue of what Horty refers to as a theory, denoted $\Delta$. We say that $\Delta$ entails a formula $A$ (notation: $\Delta \mid\sim A$) if $\Gamma \models_{\mathcal{C}(\kb)} A$.
%Containment between theories is defined as follows: $\Delta_1 = (\Gamma_1, \kb_1)$ is contained in $\Delta_2 = (\Gamma_2, \kb_2)$ (notation: $\Delta_1 \sqsubseteq \Delta_2$) whenever $\mathds{C}_1 \subseteq \mathds{C}_2$, $\Gamma_1 \subseteq \Gamma_2$ and $\kb_1 \subseteq \kb_2$. Intuitively, this means that $\Delta_2$ incorporates all the facts, explicit obligations, explicit normality conditionals, and hard information already present in $\Delta_1$. Monotonicity is then the property that if $\Delta_1 \mid\sim A$ and $\Delta_1 \sqsubseteq \Delta_2$, it follows that $\Delta_2 \mid\sim A$. To establish nonmonotonicity, it suffices to exhibit a counterexample.

\begin{example}[Nonmonotonicity] Consider $\Delta_1=(\Gamma_1, \kb_1)$, with $\Gamma_1=\{\Diamond (a\wedge \neg f)\}$ and $\kbo_1=\{\bigcirc \neg f\}$. By Prop.~\ref{laws}, $\Delta_1\mid\sim\bigcirc (\neg f/a)$. Let $\Delta_2$ be as in Ex. \ref{horty}, except that 
we now set $\Gamma_2=\{\Diamond (a\wedge \neg f)\}$. 
	%	$=(\Gamma_2, \kb_2)$, with $\Gamma_2=\{\Diamond (a\wedge \neg f)\}$ and $\kb_2=\{\bigcirc \neg f, \bigcirc (f/a),\top\Rightarrow\neg a\}$. 
	$\Delta_1 \sqsubseteq \Delta_2$, but $\Delta_2\bcancel{\mid\sim}\bigcirc (\neg f/a)$ (witness: $\mathit{afn}$).
	
	%as shown by this model:

%\begin{center}
%\begin{minipage}{0.5\textwidth}
%\begin{tabular}{ll}
%ideal: & $\neg a, \neg f$ $\quad $ $ a, f$ \\
%non-ideal: & $a, \neg f$ $\quad $ $ \neg a, f$  %\end{tabular}%\captionof{table}{VC}

%\end{minipage}
%\end{center}

\end{example}
Complementing Prop.\;\ref{inc:cor}, Prop.\,\ref{no-drow} gives sufficient conditions under which an overridden obligation does not drown. 
%For simplicity,  Prop.\,\ref{no-drow} considers a single overriding obligation; the result extends to multiple ones.
The condition $\Delta \mid\sim \neg (\top \Rightarrow (A \rightarrow B))$ indicates that the obligation in question is not expected to be necessarily fulfilled in the most normal worlds—arguably a separate issue.
%The condition $\Delta \mid\sim \neg (\top \Rightarrow (A \rightarrow %B))$ just says that one cannot expect the obligation be fulfilled in the %most normal worlds--this is a distinct concern.
%The condition  $\Delta\mid\sim\neg (\top\Rightarrow (A\rightarrow B))%$ states that
%$\kbc$ does not allow to form the expectation that the obligation will %be fulfilled$-$this is a different matter.

%, compliance with the obligation is not expected and, in this regard, %independent of normality considerations.

% says that, relative to $\kbc$, not-violating the obligation is not typical.The %assumption seems innocuous. 

%The clause ``$\m(\kbc)\not\models_{\PL} A\rightarrow B$'' says that $A\wedge\neg %B$ is not exceptional for $\kbc$.%$-$which is not to say that, relative to $\kbc$, it is %typical. 

 %implies that  $\neg (\top\Rightarrow A \rightarrow B)$ holds in the model.
%Thus, 
%$A\wedge\neg B$ is not exceptional for $\kbc$.
 %relative to 
%$\kbc$, a non-violation is not the typical situation. 

%can be understood as saying that violating the obligation is not exceptional, given $\kbc$. 
%no other obligation overrides the obligation in question;   $A\wedge C$  is exceptional for $%\kbc$. neither $A\wedge\neg B$ nor  

%Prop. \ref{no-drow} gives sufficient conditions for an %overridden obligation not to ``drown".

\begin{proposition}[No-drowning]\label{no-drow} Let $\Delta$ be such that $\bigcirc (B/A)\in\kbo$, $\exists r\in\kbo$ s.t. $r\triangleright \dob (B/A)$,
%, no obligation in $\kbo$ other than  $\bigcirc (\neg B/A\wedge C)$ overrides   $\bigcirc (B/A)$, 
%$b(r)$ is PL-consistent, 
$\top\Rightarrow\neg b(r) \in\kbc$ for all  $r\in\kbo$ s.t. $r\triangleright \dob (B/A)$,  $\Diamond (A \wedge B) \in\Gamma$ and
$\Delta \mid\sim\neg (\top\Rightarrow (A\rightarrow B))$. 
%$\m(\kbc)\not\models_{\PL} A\rightarrow B$.
% $A\wedge \neg B\not\in\epsilon(\kbc)$. 
%$A\wedge \neg B$ is not exceptional for $\kbc$. 
Then 
	\begin{flalign*}
		\Delta\mid\sim \bigcirc (B/A)
	\end{flalign*}
\end{proposition}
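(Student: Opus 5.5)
The plan is to verify the derived truth-conditions of Prop.~\ref{def2:o} directly. Fix an arbitrary $\kb$-ordered model $M$ and a world $w$ with $w\models\bigwedge\Gamma$. I need some $v\in\max_{\succeq_N}(\Vert A\wedge B\Vert)$ such that $u\not\succeq_I v$ for every $u\in\max_{\succeq_N}(\Vert A\wedge\neg B\Vert)$. Since $u\succeq_I v$ means $V(u)\subseteq V(v)$, it suffices to show two things: $\dob(B/A)\notin V(v)$, and $\dob(B/A)\in V(u)$ for every such $u$. This is the same shape as the proof of Prop.~\ref{laws}. The difference is that the overriding obligations $r$ must now be neutralised using normality rather than being assumed absent.

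\emph{Choosing $v$.} Since $\Diamond(A\wedge B)\in\Gamma$, there is a world satisfying $A\wedge B$. By smoothness (Prop.~\ref{smoothn}) there is then some $v\in\max_{\succeq_N}(\Vert A\wedge B\Vert)$. As $v\models B$, the obligation $\dob(B/A)$ is not violated at $v$, so $\dob(B/A)\notin V(v)$.

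\emph{Locating the most normal violators.} From $\Delta\mid\sim\neg(\top\Rightarrow(A\rightarrow B))$ we get $w\models\neg(\top\Rightarrow(A\rightarrow B))$. Hence some $z\in\max_{\succeq_N}(W)$ satisfies $A\wedge\neg B$. Let $u\in\max_{\succeq_N}(\Vert A\wedge\neg B\Vert)$ be arbitrary. Because $z\in\Vert A\wedge\neg B\Vert$, maximality of $u$ gives $u\succeq_N z$ by totality of $\succeq_N$. Since $z$ is $\succeq_N$-maximal in $W$, transitivity then places $u$ in $\max_{\succeq_N}(W)$.

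\emph{Showing $u$ triggers no overrider.} Now take any $r\in\kbo$ with $r\triangleright\dob(B/A)$. By hypothesis $\top\Rightarrow\neg b(r)\in\kbc$. I will invoke Prop.~\ref{th:inc} to conclude $\Delta\mid\sim\top\Rightarrow\neg b(r)$, and hence $w\models\top\Rightarrow\neg b(r)$. This gives $\max_{\succeq_N}(W)\subseteq\Vert\neg b(r)\Vert$, so $u\not\models b(r)$. Applying Prop.~\ref{th:inc} needs $b(r)$ to be PL-consistent, which follows from clause (iii) of Def.~\ref{over:def}: $h(\dob(B/A))\wedge b(r)$ is consistent. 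It also needs repleteness, which I take to be in force as in Prop.~\ref{th:inc}. Thus $u\models A\wedge\neg B$ and $u$ satisfies the body of no obligation overriding $\dob(B/A)$. By Def.~\ref{violation:o}, $\dob(B/A)\in V(u)$, so $V(u)\not\subseteq V(v)$ and $u\not\succeq_I v$.

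The main obstacle I expect is this last step, where the normality conditionals $\top\Rightarrow\neg b(r)$ must actually be true at $w$. The entailment relation does not build them into the premises, so they must come from Prop.~\ref{th:inc}, with its repleteness and consistency provisos. If that route proves awkward, I would argue directly from the LEX ordering: a most normal world falsifies no conditional of $\kbc$ that a replete model allows it to satisfy.
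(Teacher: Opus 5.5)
Your proposal is correct and follows the paper's proof essentially step for step. It finds a witness in $\max_{\succeq_N}(\Vert A\wedge B\Vert)$ by smoothness, uses $\Delta\mid\sim\neg(\top\Rightarrow(A\rightarrow B))$ to place every most normal $A\wedge\neg B$-world in $\max_{\succeq_N}(W)$, and applies Prop.~\ref{th:inc} so that these worlds trigger no overrider and violate $\dob(B/A)$ while the witness does not. Your explicit check that $b(r)$ is PL-consistent via clause (iii) of Def.~\ref{over:def}, which Prop.~\ref{th:inc} requires, is a detail the paper leaves implicit.
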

\begin{proof}  %We show that, for all ${\bf R}$-ordered model $M$, and all world $w$ in which all of $\Gamma$ holds, 
%$w\models\dob (B/A)$.
 Suppose $w\models\bigwedge\Gamma$.  As before, since $w\models \Diamond (A \wedge B)$, there is  $s\in \max_{\succeq_N}(\Vert A\wedge B \Vert) $. Let $u\in \max_{\succeq_N}(\Vert A\wedge \neg B \Vert)$. Since $\Delta \mid\sim\neg (\top\Rightarrow (A\rightarrow B))$,  $w\models \neg(\top\Rightarrow (A\rightarrow B))$. Hence
 %  $\m(\kbc)\not\models_{\PL} A\rightarrow B$, $\bigwedge\m(\kbc)\wedge A\wedge %\neg B$ is PL-consistent. By repleteness, 
  there is $v$ in $M$ s.t.
  %$v\models \bigwedge \m(\kbc)$ and $v\models A\wedge \neg B$. Since $v\models %\bigwedge \m(\kbc)$, we have
  	$v\in \max_{\succeq_N}(W)$ and $v\models A\wedge \neg B$. Let $t\in W$ be such that $t\succeq_N u$. By totality, either $t\succeq_N v$ or $v\succeq_N t$. In both cases, $v\succeq_N t$, since $v\in \max_{\succeq_N}(W)$. By totality again,  either $u\succeq_N v$ or $v\succeq_N u$. In both cases, 
	$u\succeq_N v$, since  $v\models A\wedge \neg B$ and $u\in \max_{\succeq_N}(\Vert A\wedge \neg B \Vert)$. By transitivity, $u\succeq_N t$,  and so $u\in \max_{\succeq_N}(W)$. By Prop.\,\ref{th:inc}, $w\models\top\Rightarrow \neg b(r)$  for all  $r$ s.t. $r\triangleright \dob (B/A)$. So $u\not\models b(r)$ for all such $r$.  Hence
%	Since 
%	no obligation in $\kbo$ other than  $\bigcirc (\neg B/A\wedge C)$ overrides   $%\bigcirc (B/A)$, 
$\bigcirc (B/A)\in V(u)$. Since
	$s\models A\wedge B$,  
	$\bigcirc (B/A)\not\in V(s)$. It follows that $u\not\succeq_I s$. Hence $w\models\bigcirc (B/A)$. 
	%By totality of $\succeq_N}$
%\vspace{-0.3cm}
%\phantom\qedhere	
\end{proof}

%It is with the no-drowning postulate that our account departs from Delgrande’s %original one, making the overall approach more suitable for normative reasoning. 

\section{Embedding into Constrained I/O logic}

%This section shows that the deontic fragment of the logic can be embedded into %constrained I/O logic (``cIOL'' for short).

\subsection{Basic Definitions}

%Constrained I/O logic (``cIOL'')  is defined on top of so-called 
This section begins with a brief description of unconstrained I/O logic \cite{MakinsonT00}, defined on top of classical propositional logic (PL). For readability, lowercase letters $a$, $x$, ... are used to represent Boolean formulas. A normative code is a set $N$ of conditional norms, defined as pairs of Boolean formulas $(a, x)$, where $a$ is the body of the norm and $x$ the head of the norm. The pair $(a, x)$ can
be read as ``if $a$, then $x$ is obligatory."
% In the unconstrainted I/O logic of \cite{MakinsonT00}, t
The main semantic construct is of the form $x \in \mathit{out}(N, a)$. This is interpreted as follows: given an input $a$ (state of affairs), $x$ (obligation) is included in the output under norms $N$.
$out$ is defined in terms of detachment (or modus ponens).  This paper uses the reusable throughput operation $out_4^{+}$ as the underlying I/O operation. It is based on the I/O operation  $out_4$, which is the strongest I/O operation among those studied by \cite{MakinsonT00}. $out_4^{+}$ is obtained from $out_4$ by allowing the input to reappear as output and adopting $(a, a)$ as a law of the logic. This leads to a collapse into classical logic. Specifically, $x \in out_4^{+}(N,a)$ if and only if $x \in out_4(N\cup \{(b,b) \mid b \text{ is a formula}\},a)$, which in turn is equivalent to $\{a\}\cup \m(N) \modpl x$, where $\m(N)$ is the materialization of $N$, with the definition straightforwardly adapted to the I/O setting.\footnote{Given $X\subseteq N$, $\m(X)$ is the set of all $b\rightarrow y$ with $(b,y)\in X$. When $X$ is a singleton set, curly brackets are omitted.}

%that  is, the set of all formulae $b\rightarrow y$ with $(b,y)\in N$.

To deal with contrary-to-duties (CTDs) and defeasible reasoning, constrained I/O logic (cIOL) adds a nonmonotonic mechanism that filters excess output via a set $C$ of formulas, called constraints.\footnote{See \cite{DBLP:journals/jphil/MakinsonT01,Parent11}.}  
To avoid excess output—namely, inconsistency—when too many norms apply, the set of norms is cut back just below the point of excess; this is the ``threshold'' idea,\footnote{For a comparison with Horty's approach, see \cite{Parent11}.} captured by the notions of maxfamily and outfamily (here, $out$ is $out_4^{+}$):

%The strategy for eliminating excess output is to cut back the set of norms to just %below the threshold of yielding excess—the resulting output is considered. This is the %``threshold'' idea,
%\footnote{For a comparison with Horty's approach, see \cite{Parent11}.} 
%captured by the notions of maxfamily and outfamily ):

%One ``cuts back'' the set of norms to
%just below the threshold of yielding excess, and consider the
%resulting output.  This is the ``threshold" idea.\footnote{For a comparison with %Horty's approach, see \cite{Parent11}.} It is captured by the notions of maxfamily %and outfamily (here, $out$ will be $out_4^{+}$):%Two more definitions are needed: 

\begin{itemize}
\item maxfamily: $\maxf(N,a,C)$ is the set of $\subseteq$-maximal subsets $H$ of $N$ such that $out (N,a)$ is PL-consistent with $C$;
\item outfamily: $\outf(N,a,C)=\{out(H,a)\mid H \in\maxf(N,a,C)\}$.
%\item $\cup\outfamily(G,A)=$
%\item
%$(a,x) \in \out_{\cup}(G)$ iff $x \in \cup\outfamily(G,A)$
%\\
%\indent\ \  $(a,x) \in \out_{\cap}(G)$ iff $x \in \cap\outfamily(G,a)$
\end{itemize} 
The so-called full meet constrained output under $N$ given input $a$, 
 $\cap(\outf(N,a,C))$, is the final output.  
 
Here, 
$C=\{a\}$. This choice is motivated by \cite{H69}’s view of the input as ``settled'': what has occurred cannot be undone, and the output must therefore be consistent with the input (\emph{ought} implies \emph{can}).

\subsection{Embedding}
Let $\kbc =\emptyset$ and  $\kbo =\{r_1, ..., r_n\}$.  The basic idea is to rewrite $\bigcirc (x/a)$ as an I/O pair $(a,x)$ with a more detailed body; this one also contains the negation of the bodies of all its defeaters, viz. all the more specific obligations overriding it. Let $\kbo_{\triangleright}$ represent the result of rewriting the obligations in $\kbo$ in this manner, and $\bigcirc_{\mathrm{H}} (x/a)$ stand for the Hanssonian obligation operator, defined by putting $\dob_{\mathrm{H}} (x/a)$ as true whenever $\max_{\succeq_I}(\Vert a\Vert) \subseteq \Vert x\Vert$.\footnote{Cf. \cite{H69,mak94}.} Let  $\kb =(\emptyset,\kbo)$.  The main result of this section is that  $(\emptyset,\kb)\mid\sim\bigcirc_{\mathrm{H}} (x/a)$ iff
 $x$ is in the full meet constrained output under $\kbo_{\triangleright}$ given input $a$, with 
 %the reusable throughout operation 
 $out_4^{+}$ as the underlying I/O operation and $C=\{a\}$. Thus, the embedding is faithful, viz.  it preserves and reflects validity.\footnote{
 Validity-preservation (sometimes called soundness) is the left-to-right direction of the bi-conditional in Th.\,\ref{iol:main}; validity-reflection is the right-to-left direction, for which no standard name exists.}
 % Validity-preservation is usually called soundness or correctness. The converse implication does not seem to have a standard name in the modal logic literature when considered on its own. For lack of a better name, I call it validity-reflection.}  
 Validity-reflection means that the translation does not output more obligations. From this, a partial result   (of validity-reflection)  for $\dob (-/-)$ under the $\exists\forall$ rule (cf. Prop.~\ref{thm:collapse})  follows. Even thought the result is partial, it gives some insights into the relationship between the two accounts.

The result and the main steps of the proof are given below; full details are provided in
Appendix~\ref{App2}. %The notation $\mathit{b(-)}$ and $\mathit{h(-)}$ is extended to I/O pairs in a straightforward manner. 
%The considered formulas are $\Box$-free and $\Rightarrow$-free. 
It is assumed that, on the preference side, the given model is replete.

%In what follows, I state the result, and describe the main steps of the proof. The details of the  proof are given in the appendix. The notation $\mathit{b(-)}$ and $\mathit{h(-)}$ is extended to the I/O pairs straightforwardly. I also assume that (on the preference side) a given model is always ``replete", viz., every consistent Boolean formula is satisfiable in it.

%\todo{define maxfamily}
%   as the underlying unconstrained operation.

To each $r_i:= \bigcirc (x/a)$, one associates the set of all its defeaters $D (r_i)$: $D (r_i)=\{r_j \in\kbo: r_j\triangleright r_i\}$. Define
\begin{flalign*}r_i^{\triangleright} =
\begin{cases}
 (a \wedge \bigwedge_{r_j\in D(r_i)}  \neg b(r_j), x) \text{ if } D (r_i)\not=\emptyset\\
 (a, x) \text{ otherwise }
\end{cases}
\end{flalign*}
Set $\kbo_{\triangleright}=\{ r_1^{\triangleright}, ..., r_n^{\triangleright}\}$.  %We have:
%For $H\subseteq\kb^{\triangleright}$,
%$m(H)$ is the set of all materializations of elements of $H$, viz., $\{a\rightarrow x: (a,x)\in H\}$. If $H=\emptyset$, then %$m(H)=\emptyset$ and
 %$\bigwedge m(H)$ is $\top$.

%\begin{restatable}{lemma}{littlelem} \label{little:lem1}
%$r_i\in V(w)$ iff $w\not\models m(r_i^{\triangleright})$. 
%\end{restatable}
%\begin{proof} See Appendix \ref{App2}.
%\end{proof}

%\begin{restatable}{lemma}{littlelemm} \label{little:lem2}
% If $w_1\succ_I w_2$, then $\exists r_i\in \kbo$ s.t. $w_1\models m(r_i^{\triangleright}) $ and 
%$w_2\not\models m(r_i^{\triangleright})$.
%\end{restatable}

%\begin{proof} See Appendix \ref{App2}.
%\end{proof}

%As mentioned the background input/output logic is so-called reusable throughout %$out_4^{+}$. Recall that $x\in out_4^{+}(\kb^{\triangleright},a) $ iff 
% $x\in out_4(\kb^{\triangleright}\cup\{(b,b): b \mbox{ is a formula}\},a) $.

%$a\cup m(\kb^{\triangleright})\models x$, where $\models$ is the consequence relation in %propositional logic. To obtain the reusable $out_4$, one needs not to allow the input be %resu-sed as an 

Th. \ref{ioltopref} and \ref{preftoiol} provide the central bridge: an 
$a$-world is best exactly when, for some member of the maxfamily, the world satisfies the conjunction of the elements of that member’s materialization.

 %it makes true each element in the materialization of a member of the maxfamily.
%Th. \ref{ioltopref} and \ref{preftoiol} are the key. They state that a best $A$-world is one that satisfies all the materializations of a member of the maxfamily.

\begin{restatable}[cIOL2Pref]{theorem}{bridgeRTL}
	\label{ioltopref}
	If $w\models a$ and $w\models \bigvee_{H\in\maxf (\kbo_{\triangleright},a,a)}\bigl(\bigwedge \m(H)\bigr)$, then 
	$w\in\max_{\succeq_I}(\Vert a\Vert)$, with $out_4^{+}$ as the background I/O operation. 
\end{restatable}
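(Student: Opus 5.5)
The plan is to reduce the ideality comparison between worlds to a comparison between the sets of rewritten norms whose materializations those worlds satisfy. After that, the theorem follows from the $\subseteq$-maximality of members of the maxfamily.

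\emph{Step 1: a bridging lemma.} For every world $w$ and every $r_i\in\kbo$,
\[
r_i\in V(w) \text{ iff } w\not\models \m(r_i^{\triangleright}).
\]
To prove it, write $r_i=\bigcirc(x/a_i)$ and unfold Def.~\ref{violation:o}. We have $r_i\in V(w)$ iff $w\models a_i\wedge\neg x$ and $w\not\models b(r_j)$ for every $r_j\in D(r_i)$. This holds iff $w$ satisfies the body $a_i\wedge\bigwedge_{r_j\in D(r_i)}\neg b(r_j)$ of $r_i^{\triangleright}$ and falsifies its head $x$. That is exactly the condition $w\not\models \m(r_i^{\triangleright})$. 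When $D(r_i)=\emptyset$, the conjunction is empty and the two cases of the definition of $r_i^{\triangleright}$ agree.

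Now define $S(w)=\{r_i^{\triangleright}: w\models \m(r_i^{\triangleright})\}\subseteq\kbo_{\triangleright}$. By the lemma, $V(w)$ is the complement of $S(w)$ within $\kbo$. Hence
\[
V(v)\subsetneq V(w) \text{ iff } S(w)\subsetneq S(v).
\]
If two obligations are rewritten to the same pair, I will index pairs by the $r_i$ they come from, so the correspondence stays bijective.

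\emph{Step 2: the main argument.} Assume $w\models a$ and $w\models\bigwedge\m(H)$ for some $H\in\maxf(\kbo_{\triangleright},a,a)$. Then $H\subseteq S(w)$.

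Suppose for contradiction that $w\notin\max_{\succeq_I}(\Vert a\Vert)$. Then there is $v\models a$ with $v\succeq_I w$ and $w\not\succeq_I v$. By Def.~\ref{i:r} this means $V(v)\subsetneq V(w)$, so by Step 1 we get $H\subseteq S(w)\subsetneq S(v)$.

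Next, $out_4^{+}(S(v),a)$ is consistent with $a$. The reason is that $x\in out_4^{+}(S(v),a)$ iff $\{a\}\cup\m(S(v))\modpl x$, and the world $v$ satisfies $a$ together with every member of $\m(S(v))$. So the propositional valuation of $v$ witnesses the consistency of $\{a\}\cup\m(S(v))$, and by soundness of $\modpl$ no contradiction is derivable.

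Thus $S(v)$ is a subset of $\kbo_{\triangleright}$ whose output is consistent with the constraint $C=\{a\}$, and it strictly contains $H$. This contradicts the $\subseteq$-maximality of $H$ in the maxfamily. Hence $w\in\max_{\succeq_I}(\Vert a\Vert)$.

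\emph{Expected obstacle.} The only delicate point is Step 1. One must check carefully that the negated bodies of the defeaters in $r_i^{\triangleright}$ encode exactly the exemption clause of Def.~\ref{violation:o}, including the empty-defeater case. One must also handle a possible collapse of distinct $r_i$ into identical pairs, by indexing. The remaining steps are a routine maximality argument. Repleteness is not needed for this direction.
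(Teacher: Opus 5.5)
Your proposal is correct and follows essentially the same route as the paper. The paper proves the same bridging lemma ($r_i\in V(w)$ iff $w\not\models\m(r_i^{\triangleright})$), derives $v\models\bigwedge\m(H)$ together with some $r_i^{\triangleright}\notin H$ satisfied by $v$, and lets $H\cup\{r_i^{\triangleright}\}$ contradict maximality. You package the same reasoning as $H\subseteq S(w)\subset S(v)$ (strict inclusion) with $S(v)$ as the witness, and your remarks that repleteness is unused here and that coinciding rewritten pairs cause no trouble are also right.
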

%\begin{proof} See Appendix  \ref{App2}.
%\end{proof}

\begin{restatable}[Pref2cIOL]{theorem}{bridgeLTR}\label{preftoiol}  
	If $w\in\max_{\succeq_I}(\Vert a\Vert)$, then both $w\models a$ and $w\models \bigvee_{H\in\maxf(\kbo_{\triangleright},a,a)}\bigl(\bigwedge \m(H)\bigr)$, with  $out_4^{+}$ as the background I/O operation. 
\end{restatable}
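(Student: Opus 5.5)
The plan is to let the violation set of $w$ pick out the required member of the maxfamily. The first step is a small correspondence lemma: for every world $w$ and every $r_i\in\kbo$,
\[
r_i\in V(w) \quad\text{iff}\quad w\not\models \m(r_i^{\triangleright}).
\]
This follows by unfolding Def.~\ref{violation:o} and the definition of $r_i^{\triangleright}$. Write $r_i=\bigcirc(x/a_i)$. Then $r_i\in V(w)$ iff $w\models a_i\wedge\neg x$ and $w\not\models b(r_j)$ for every $r_j\in D(r_i)$. That is exactly the statement that $w$ satisfies the body $a_i\wedge\bigwedge_{r_j\in D(r_i)}\neg b(r_j)$ of $r_i^{\triangleright}$ but not its head, i.e.\ that $w$ falsifies the material implication. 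If $D(r_i)=\emptyset$ the conjunction is empty and the claim is immediate.

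Now fix $w\in\max_{\succeq_I}(\Vert a\Vert)$, so $w\models a$ holds trivially. Define
\[
H_w=\{r_i^{\triangleright}: r_i\in\kbo,\ r_i\notin V(w)\}\subseteq\kbo_{\triangleright}.
\]
By the lemma, $w\models\bigwedge\m(H_w)$. The goal is then to show $H_w\in\maxf(\kbo_{\triangleright},a,a)$, which gives the required disjunct.

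Consistency is easy. Since $out_4^{+}(H_w,a)$ is the classical closure of $\{a\}\cup\m(H_w)$, and $w$ satisfies $a\wedge\bigwedge\m(H_w)$, this output is PL-consistent with the constraint $a$.

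Maximality is the step I expect to be the main obstacle. Suppose $H'$ satisfies $H_w\subsetneq H'\subseteq\kbo_{\triangleright}$ and $\{a\}\cup\m(H')$ is PL-consistent. Since $\kbo$ is finite, $a\wedge\bigwedge\m(H')$ is a single PL-consistent Boolean formula. Repleteness (Def.~\ref{assm:rep}) then yields a world $u$ with $u\models a\wedge\bigwedge\m(H')$.

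By the lemma, every $r_i$ with $r_i^{\triangleright}\in H'$ lies outside $V(u)$. Hence $V(u)\subseteq\{r_i: r_i^{\triangleright}\notin H'\}$.

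Pick $r_k^{\triangleright}\in H'\setminus H_w$. Because $r_k^{\triangleright}\notin H_w$, we have $r_k\in V(w)$, but $r_k\notin V(u)$. Also $\{r_i: r_i^{\triangleright}\notin H'\}\subseteq\{r_i: r_i^{\triangleright}\notin H_w\}=V(w)$. Together these give $V(u)\subsetneq V(w)$.

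One bookkeeping point needs care here: two distinct obligations might have the same rewriting $r_i^{\triangleright}$. The lemma shows they then have the same violation status at every world, so the set inclusions above stay correct under the identification. Under that convention, $V(u)\subsetneq V(w)$ gives $u\succeq_I w$ and $w\not\succeq_I u$ by Def.~\ref{i:r}. This contradicts $w\in\max_{\succeq_I}(\Vert a\Vert)$, since $u\in\Vert a\Vert$.

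Therefore $H_w$ is $\subseteq$-maximal, so $H_w\in\maxf(\kbo_{\triangleright},a,a)$. Since $w\models\bigwedge\m(H_w)$, we get $w\models\bigvee_{H\in\maxf(\kbo_{\triangleright},a,a)}\bigl(\bigwedge\m(H)\bigr)$, which completes the argument.
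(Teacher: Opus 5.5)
Your proof is correct and is essentially the paper's argument: your $H_w$ coincides with the paper's $\mathbf{R}^{+}=\{r^{\triangleright}: w\models \m(r^{\triangleright})\}$, and both proofs combine the correspondence lemma (Lemma~\ref{little:lem1}) with repleteness to turn a strictly larger $a$-consistent set of rewritten norms into an $a$-world $u$ with $V(u)\subset V(w)$, contradicting the maximality of $w$. The paper argues contrapositively instead, extending $\mathbf{R}^{+}$ to some $H\in\maxf(\kbo_{\triangleright},a,a)$ and treating $\mathbf{R}^{+}=\emptyset$ as a separate case, whereas your direct maximality argument handles both cases at once and also makes explicit the harmless issue of distinct obligations having identical rewritings.
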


The following follows:

 \begin{restatable}[Faithfulness, Hanssonian truth-conditions]{theorem}{bridgecoro}
	\label{iol:main} $(\emptyset,\kb)\mid\sim\bigcirc_{\mathrm{H}} (x/a)$ iff $x\in \cap (\outf (\kbo_{\triangleright},a,a))$, with  $out_4^{+}$ as the background I/O operation. 
\end{restatable}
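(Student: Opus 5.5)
The plan is to derive the theorem directly from the two bridge results, Th.~\ref{ioltopref} and Th.~\ref{preftoiol}. Together they say that, in any replete $\kb$-ordered model with $\kbc=\emptyset$,
\[\max_{\succeq_I}(\Vert a\Vert)=\Vert a\wedge \textstyle\bigvee_{H\in\maxf(\kbo_{\triangleright},a,a)}\bigwedge \m(H)\Vert.\]
On the I/O side, $out_4^{+}$ collapses to classical consequence: $out_4^{+}(H,a)=\{y: \{a\}\cup\m(H)\modpl y\}$. Hence
\[x\in\cap(\outf(\kbo_{\triangleright},a,a))\iff \{a\}\cup \m(H)\modpl x \text{ for every } H\in\maxf(\kbo_{\triangleright},a,a),\]
and this holds iff $a\wedge\bigvee_{H}\bigwedge\m(H)\modpl x$, since $\maxf$ is finite because $\kbo$ is. Write $\varphi$ for the Boolean formula $a\wedge\bigvee_{H}\bigwedge\m(H)$. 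The theorem then reduces to showing that $(\emptyset,\kb)\mid\sim\bigcirc_{\mathrm{H}}(x/a)$ iff $\varphi\modpl x$.

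\emph{Right-to-left.} Assume $\varphi\modpl x$. Take any $\kb$-ordered model $M$ and any world $w$; since $\Gamma=\emptyset$, there is nothing to check about $w$. By Th.~\ref{preftoiol}, every $u\in\max_{\succeq_I}(\Vert a\Vert)$ satisfies $\varphi$, and therefore satisfies $x$. So $\max_{\succeq_I}(\Vert a\Vert)\subseteq\Vert x\Vert$, that is, $w\models\bigcirc_{\mathrm{H}}(x/a)$.

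\emph{Left-to-right.} Argue by contraposition. Suppose $\varphi\not\modpl x$. Then $\varphi\wedge\neg x$ is PL-consistent. Repleteness is assumed on the preference side, so we may pick a replete $\kb$-ordered model; for instance, take $W$ to be all valuations over the letters, with $\succeq_I$ fixed by $V$. This model has a world $u$ with $u\models\varphi\wedge\neg x$. By Th.~\ref{ioltopref}, $u\in\max_{\succeq_I}(\Vert a\Vert)$, and since $u\not\models x$, $\bigcirc_{\mathrm{H}}(x/a)$ fails at every world of this model. So $(\emptyset,\kb)\not\mid\sim\bigcirc_{\mathrm{H}}(x/a)$.

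The main obstacle is bookkeeping rather than depth. First, the disjunction over $\maxf$ must be a well-formed finite formula. Second, I need the degenerate case where $\maxf(\kbo_{\triangleright},a,a)$ has only $H=\emptyset$ (so the disjunct is $\top$), or is empty because $a$ is inconsistent. In that last case both sides hold trivially: $\Vert a\Vert=\emptyset$, and the intersection over an empty outfamily is conventionally everything. I will also check that the entailment relation is read over the class of replete $\kb$-ordered models, as stipulated for this section, so that the countermodel in the left-to-right direction is legitimate. The existence of such a model needs only the full canonical valuation model, in which $V$, and hence $\succeq_I$, is determined by $\kbo$.
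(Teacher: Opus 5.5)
Your proposal is correct and follows the paper's proof. In the right-to-left direction you use Th.~\ref{preftoiol} to show that every $\succeq_I$-best $a$-world satisfies some $\bigwedge\m(H)$, and hence $x$. In the left-to-right direction you use repleteness plus Th.~\ref{ioltopref} to turn a PL-consistent $a\wedge\bigwedge\m(H)\wedge\neg x$ into a best $a$-world falsifying $x$.

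The remaining differences are cosmetic. You fold the per-$H$ conditions into the single formula $\varphi$, and you exhibit an explicit replete model instead of arguing per $H$ by reductio inside an assumed replete model. The latter has the small benefit of making explicit the existence of a replete $\kb$-ordered model, which the paper's argument tacitly relies on.
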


\begin{restatable}[$\exists\forall$ truth-conditions]{theorem}{bridgecoroEA}
	\label{iol:main2} $(\{\Diamond a\},\kb)\mid\sim\bigcirc (x/a)$ if $x\in \cap (\outf (\kbo_{\triangleright},a,a))$, with  $out_4^{+}$ as the background I/O operation. 
\end{restatable}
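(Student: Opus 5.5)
The plan is to reduce the claim to the $\exists\forall$ truth-conditions of Prop.~\ref{thm:collapse}, which apply because $\kbc=\emptyset$, and then use Th.~\ref{preftoiol} to move from the preference side to the I/O side. Throughout I work, as the section stipulates, with replete $\kb$-ordered models.

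Suppose $x\in \cap(\outf(\kbo_{\triangleright},a,a))$. Fix a $\kb$-ordered model $M$ and a world $w$ with $w\models\Diamond a$, so that $\Vert a\Vert\neq\emptyset$. By Prop.~\ref{thm:collapse}, it suffices to find some $v$ with $v\models a\wedge x$ such that every $u\succeq_I v$ satisfies $a\rightarrow x$.

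\emph{Step 1 (choose $v$).} By smoothness of $\succeq_I$ (Rem.~\ref{rem2}) and non-emptiness of $\Vert a\Vert$, there is $v\in\max_{\succeq_I}(\Vert a\Vert)$. By Th.~\ref{preftoiol}, $v\models a$, and $v\models\bigwedge \m(H)$ for some $H\in\maxf(\kbo_{\triangleright},a,a)$. In particular the maxfamily is non-empty, so the full meet is a genuine intersection. Since $x$ lies in every member of the outfamily, $x\in out_4^{+}(H,a)$, that is, $\{a\}\cup\m(H)\modpl x$. Hence $v\models x$.

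\emph{Step 2 (upward closure).} Let $u\succeq_I v$. If $u\not\models a$, then $u\models a\rightarrow x$ trivially. Otherwise $u\in\Vert a\Vert$, and I claim $u\in\max_{\succeq_I}(\Vert a\Vert)$. Take $t\in\Vert a\Vert$ with $t\succeq_I u$. By transitivity $t\succeq_I v$, so maximality of $v$ gives $v\succeq_I t$, and then $u\succeq_I v\succeq_I t$ yields $u\succeq_I t$. Applying Th.~\ref{preftoiol} to $u$ and repeating the argument of Step~1 gives $u\models x$. Therefore $w\models\dob(x/a)$, and since $M$ and $w$ were arbitrary, $(\{\Diamond a\},\kb)\mid\sim\dob(x/a)$.

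The main point needing care is Step~2: the maximal set of $\Vert a\Vert$ must be closed upward within $\Vert a\Vert$. This holds because $\succeq_I$ is a preorder (inclusion of violation sets) and the notion of maximality used is ``no strictly better element.'' The other delicate point is the existence of $v$, which rests on smoothness together with $\Diamond a\in\Gamma$. This premise is exactly why only validity-preservation, and not the converse, is obtained under the $\exists\forall$ rule.
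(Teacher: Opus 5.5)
Your proof is correct and is essentially the paper's argument: smoothness yields some $v\in\max_{\succeq_I}(\Vert a\Vert)$; Th.~\ref{preftoiol}, together with $x\in out_4^{+}(H,a)$ for every $H\in\maxf(\kbo_{\triangleright},a,a)$, makes every best $a$-world an $x$-world; upward closure of $\max_{\succeq_I}(\Vert a\Vert)$ within $\Vert a\Vert$ secures the universal clause; and Prop.~\ref{thm:collapse} concludes. Only your closing aside is off: the paper calls this direction validity-reflection, and the converse fails because the $\exists\forall$ rule tolerates unresolved conflicts, not because of the $\Diamond a$ premise (for $\kbo=\{\dob p,\dob\neg p\}$ and $a=\top$, $\dob p$ is entailed although $p\not\in\cap(\outf(\kbo_{\triangleright},\top,\top))$).
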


This embedding implies that the entailment problem for $\bigcirc_{\mathrm{H}}$ is in $\Pi^{p}_2$ (utilizing the corresponding result for I/O logic \cite{Sun17}). This matches the complexity of similar nonmonotonic formalisms, like default logic under the skeptical approach \cite{Got92}.
%while the monotonic basis
% (\AA qvist’s system {\bf E}) lies in co-NP \cite{CiabattoniOP22}.
% Second, it suggests a proof-theoretic characterization within the framework of adaptive logic \cite{StrasserBP16}.  

\section{Related and Future Work}

The combination of normality with deontic modalities has been considered before, but mostly within a monotonic setting (e.g., \cite{ddl:C80}). The strengthening problem can be related to what Lehmann calls the “presumption of typicality,” alluded to in footnote \ref{myfootnote}: properties of a class are assumed to hold for all its members unless there is evidence to the contrary. To quote Lehmann, “the presumption of typicality begins where rational monotony leaves off,” and its modeling therefore requires going beyond the standard KLM systems. One such approach is Lehmann's lexicographic entailment or the system proposed in \cite{Delgrande20}. Other non-monotonic candidates include the multi-preference closure developed in a description logic setting in, e.g., \cite{0001D20}. To my knowledge, the strengthening problem has not been explicitly addressed in deontic logic within a preference-based framework. 
%A possible case is \cite{bart98}, but the fallacy of the prohibited exception applies.

The present framework provides the basis for a number of substantial extensions. Chief among them is the design of a conflict resolution procedure based on a priority relation among obligations, intended to capture varying degrees of strength \cite{Horty07,Hansen06,serg25}. 
A key problem is to relate this priority relation to the ideality ordering on worlds in such a way as to yield prioritized counterparts of the postulates mentioned above, or suitable variants thereof \cite{ddl:PS97,Delgrande20}. 
%The most straightforward approach is to distinguish two notions of %defeat: one based on specificity and one based on strength.
%This work lays the groundwork for a range of substantial extensions. Chief among them is the design of a conflict resolution procedure %based on a priority relation among obligations, intended to capture varying degrees of strength \cite{Horty07,Hansen06,serg25}.  A key %problem is to relate this priority relation to the ideality ordering on worlds in a way that makes it amenable to identify prioritied %counterparts of the postulates mentioned above. 
%\cite{ddl:PS97,Delgrande20}.  The easiest is to distinguish two notions of defeat, specificity-based vs. strength-based. 
Another promising direction is 
the  integration of permission-as-exception \cite{Buy86}. A further direction is the generalization of the embedding$-$not only to prioritized I/O logic \cite{Parent11}, but also to other norm-based formalisms.   
Finally, the mechanization of reasoning tasks in Isabelle/HOL, building on the automation of the monotonic basis of the deontic fragment of the logic \cite{BenzmullerPT20,ParentB24}, remains an open problem.

%the mechanization of reasoning tasks in Isabelle/HOL, building on the successful automation of the monotonic basis of the logic \cite{BenzmullerPT20,ParentB24}, remains to be explored.

%Automation of reasoning tasks in Isabelle/HOL following the work of  \cite{BenzmullerPT20, ParentB24}, confined to the monotonic level) remains to be explored.
%The relationship with \cite{ddl:PS97}'s logic of explicit obligation warrants further study. 

%Parent11,

%\begin{quote}
	\section*{Acknowledgements}Research funded by the Austrian
	Science Fund (FWF) under the LuCi project [10.55776/PAT1007025].
%\end{quote}

%\begin{quote}
%	\section*{AI Declaration}
%	The author has not employed any %Generative AI tools.
%\end{quote}

%This work lends itself to a number of developments, the
%most pressing of which is extending it 
%%we would like to extend the account 
%to support conflict resolution with the use of a priority relation among obligations capturing a relation of strengths among duties \cite{Horty07,Parent11,Delgrande20}. The main difficulty lies in understanding how the priority relation on obligations affects the ideality ranking of worlds in the model.  
%%Different approaches are possible, see e.g.  \cite{Delgrande20} for some of them.  
%Another important direction is the extension with permission and extending the embedding not only to prioritized I/O logic  \cite{Parent11}, but also other norm based farmeworks like deontic default logic.

%, and permission. 
%This work lends itself to a number of developments, the
%most pressing of which is th

\appendix

\section{Proof of Th. \ref{iol:main} and  \ref{iol:main2}} \label{App2}

%Two lemmas are needed. Their proof is in the file ``Supplementary material''.
\begin{restatable}{lemma}{littlelem} \label{little:lem1}
	$r_i\in V(w)$ iff $w\not\models \m(r_i^{\triangleright})$. 
\end{restatable}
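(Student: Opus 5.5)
The argument is a direct unfolding of definitions, so the plan is to unpack both sides of the biconditional for an arbitrary obligation $r_i = \dob(x/a)\in\kbo$ and a world $w$. I first recall that, by Def.~\ref{violation:o}, $r_i\in V(w)$ holds iff three conditions hold together: $w\models a$, $w\models\neg x$, and $w\not\models b(r_j)$ for every $r_j\in\kbo$ with $r_j\triangleright r_i$, i.e.\ for every $r_j\in D(r_i)$. On the other side, $\m(r_i^{\triangleright})$ is a single material implication. It is $a\wedge\bigwedge_{r_j\in D(r_i)}\neg b(r_j)\rightarrow x$ when $D(r_i)\neq\emptyset$, and $a\rightarrow x$ otherwise. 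So $w\not\models\m(r_i^{\triangleright})$ iff $w$ satisfies the body of $r_i^{\triangleright}$ and falsifies its head $x$.

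I then split into the two cases of the definition of $r_i^{\triangleright}$. If $D(r_i)=\emptyset$, the defeater clause in the definition of $V(w)$ is vacuously satisfied. Then $r_i\in V(w)$ iff $w\models a\wedge\neg x$, iff $w\not\models a\rightarrow x$, iff $w\not\models\m(r_i^{\triangleright})$. If $D(r_i)\neq\emptyset$, say $D(r_i)=\{r_{j_1},\dots,r_{j_k}\}$, then $w\models\bigwedge_{r_j\in D(r_i)}\neg b(r_j)$ iff $w\not\models b(r_j)$ for each $r_j\in D(r_i)$, by the Boolean clauses of the satisfaction relation. This works because $D(r_i)$ is finite ($\kbo$ is finite), so the conjunction is a genuine formula. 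Hence $w\not\models\m(r_i^{\triangleright})$ iff $w\models a$, $w\models\neg x$, and $w\not\models b(r_j)$ for all $r_j\in D(r_i)$, which is exactly the membership condition for $V(w)$.

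I expect no real obstacle. The only point needing care is that the defeater set used in $V(w)$ coincides with $D(r_i)$. Both are defined via the same relation $\triangleright$ of Def.~\ref{over:def}, relative to the same (here empty or fixed) set $\Gamma$ of alethic formulas, so I would state explicitly that the same $\triangleright$ is used in both places. I would also remark that $b(r_j)$ and $x$ are Boolean, so evaluating them at $w$ involves only the valuation, and the equivalence holds world by world in any model.
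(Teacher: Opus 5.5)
Your proposal is correct and follows essentially the same route as the paper: unfold Def.~\ref{violation:o} and the definition of $r_i^{\triangleright}$, then split on whether $D(r_i)$ is empty. The only difference is presentational: you give a single chain of biconditionals, whereas the paper proves the two directions separately, handling right-to-left by contraposition through a three-way disjunction for $r_i\notin V(w)$.
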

%\begin{proof} See Appendix \ref{App2}.
%\end{proof}

\begin{proof}[Proof of Lem. \ref{little:lem1}] 
	Let $r_i$ be of the form $\bigcirc (x/a)$. 
	
	\noindent
	Left-to-right. Assume $r_i\in V(w)$. By Def.\,\ref{violation:o}, this  means that
	%	\begin{center}
		%	\begin{minipage}{6cm}
			\begin{description}
				\item [({\rm i)}] $w\models a\wedge\neg x$, and
				%\item [(ii)]$w_2\not\models x$, and
				\item [{\rm(ii)}] $w\not\models b(r_j)$ for all $r_j\in D(r_i)$
			\end{description}
			%		\end{minipage}
		%		\end{center}
	The verification is split into two cases:
	%\begin{center}
	%	\begin{minipage}{7cm}
		\begin{description}
			\item [Case 1:]  $D(r_i)=\emptyset$. In this case, $r_i^{\triangleright}$ is just $(a,x)$, and (ii) holds vacuously. By (i),  $w\not\models \m(r_i^{\triangleright})$. 
			\item [Case 2:] $D(r_i)\not=\emptyset$. In this case, $r_i^{\triangleright}$ is  $(a \wedge \bigwedge_{r_j\in D(r_i)}  \neg b(r_j), x)$.  By (i)-(ii), $w\not\models \m(r_i^{\triangleright})$ and so the claim is proved too.
		\end{description}
		%		\end{minipage}
	%		\end{center}
Right-to-left.  Assume  $r_i\not\in V(w)$. By Def.\,\ref{violation:o}, this  means that
%		\begin{center}
	%		\begin{minipage}{6cm}
		
		\begin{description}
			\item [{\rm (i)}] $w\not\models a$, or
			\item [{\rm(ii)}] $w\models x$, or
			\item [{\rm(iii)}] $w\models a\wedge\neg x$ but  $w\models b(r_j)$ for some $r_j\in D(r_i)$
		\end{description}
		%	\end{minipage}
	%	\end{center}
%$w\not\models m(r_i^{\triangleright})$. 
%while the latter means that
The argument is split into two cases:
%	\begin{center}
	%	\begin{minipage}{7cm}
		\begin{description}
			\item [Case 1:]  $D(r_i)=\emptyset$. In this case, $r_i^{\triangleright}$ is just $(a,x)$, and either (i) or (ii) applies. In both cases, $w\models \m(r_i^{\triangleright})$.
			\item [Case 2:] $D(r_i)\not=\emptyset$. In this case, $r_i^{\triangleright}$ is  $(a \wedge \bigwedge_{r_j\in D(r_i)}  \neg b(r_j), x)$.  By  (i)-(iii) $w\models \m(r_i^{\triangleright})$, so the claim is proved too.
		\end{description}
		%	\end{minipage}
	%		\end{center}
	\end{proof}
	
\begin{restatable}{lemma}{littlelemm} \label{little:lem2}
	If $w\succ_I v$, then $\exists r_i\in \kbo$ s.t. $w\models \m(r_i^{\triangleright}) $ and 
	$v\not\models \m(r_i^{\triangleright})$.
\end{restatable}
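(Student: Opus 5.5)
The plan is to unfold the strict ideality relation and then translate violation sets into materialized rewritten norms using Lemma~\ref{little:lem1}.

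First, unfold $w\succ_I v$. By Def.~\ref{i:r}, $w\succeq_I v$ gives $V(w)\subseteq V(v)$. The failure of $v\succeq_I w$ gives $V(v)\not\subseteq V(w)$. Together these mean $V(w)\subset V(v)$ is a proper inclusion. In particular, some obligation $r_i\in\kbo$ satisfies $r_i\in V(v)$ and $r_i\notin V(w)$.

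Next, translate this into the I/O vocabulary by applying Lemma~\ref{little:lem1} to that $r_i$, once at each world:
\begin{itemize}
\item at $v$, since $r_i\in V(v)$, the lemma gives $v\not\models \m(r_i^{\triangleright})$;
\item at $w$, since $r_i\notin V(w)$, the contrapositive of the same biconditional gives $w\models \m(r_i^{\triangleright})$.
\end{itemize}
This $r_i$ is the required witness, so the argument is complete.

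There is no real obstacle. The only points needing care are checking that the strict part of $\succeq_I$ really amounts to proper inclusion of violation sets, which holds because $\succeq_I$ is defined by plain set inclusion, and noting that the witness must be taken from $V(v)\setminus V(w)$ rather than the other way round. Lemma~\ref{little:lem1} is a biconditional, so both directions are available.
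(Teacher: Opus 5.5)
Your proof is correct and follows essentially the same route as the paper's: from $v\not\succeq_I w$ you take some $r_i\in V(v)\setminus V(w)$ and apply Lemma~\ref{little:lem1} at both worlds. Your extra observation that $V(w)\subseteq V(v)$ is true, but the argument does not need it.
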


\begin{proof}[Proof of Lem. \ref{little:lem2}]  
	Suppose $w\succ_I v$. This implies $v\not\succeq_I w$. Hence, $V(v)\not\subseteq V(w)$. So $\exists r_i\in \kbo$ s.t.  $r_i\in V(v)$ and $r_i\not\in V(w)$. Lem. \ref{little:lem1} then establishes the claim.
	%On the other hand, $r\not\in V_{o}(w_1, \kb)$ implies
	%\begin{itemize}
	%\item  [(i)]  $w_1\not\models a$, or
	%\item   [(ii)]  $w_1 \models b$, or
	%\item   [(iii)]  $w_1\models b(r_j)$ for some $r_j\in D(r_i)$.
	%\end{itemize}
	%A similar case analysis yields that $w_1\models m(r_i^{\triangleright}) $. 
\end{proof}

\bridgeRTL*

\begin{proof}%[Proof of Prop. \ref{ioltopref}]

Let $w\models a$ and $w\models \bigvee_{H\in\mathit{maxf} (\kbo_{\triangleright},a,a)}\bigl(\bigwedge \m(H)\bigr)$. So 
 $w\models \bigwedge \m(H)$ for some $H\in\mathit{maxf}(\kbo_{\triangleright},a,a)$.
% [The case where $H=\emptyset$ seems to be covered by the argument below]

Assume
$w\not\in\max_{\succeq_I}(\Vert a\Vert)$. Hence $\exists v\models a$ s.t. $v\succ_I w$. So $V (v)\subseteq V (w)$. By Lem. \ref{little:lem2}, 
$\exists r_i\in \kbo$ s.t. $v\models m(r_i^{\triangleright}) $ and 
$w\not\models \m(r_i^{\triangleright})$.  Hence, $\{a\}\cup \m (r_i^{\triangleright})\not\modpl\bot$, and so $H\not=\emptyset$. (If not, $H$ would not be $\subseteq$-maximal, since $\emptyset\subseteq \{r_i^{\triangleright}\}$.) 
% I break the argument into two cases:
%\begin{description}
%\item [Case 1:] 
%$H=\emptyset$.  In this case, for all $r_j\in\kbo$, $\{a\}\cup \m (r_j^{\triangleright})\modpl\bot$. If not, $H$ would not be $\subseteq$-maximal, since for such a rule $r_j^{\triangleright}$, $\{a\}\cup \m (r_j^{\triangleright})$ would be PL-consistent. 
%Suppose $w\not\in\max_{\succeq_I}(\Vert a\Vert)$. Hence there is $v\models a$ with $v\succ_I w$. By Lem. \ref{little:lem2}, 
%$\exists r_i\in \kbo$ s.t. $v\models \m(r_i^{\triangleright}) $ and 
%$w\not\models \m(r_i^{\triangleright})$. This contradicts the fact that $\{a\}\cup \m (r_i^{\triangleright})\modpl\bot$. So $w\in\max_{\succeq_I}(\Vert a\Vert)$ as required.
%\item [Case 2:]  
%$H\not=\emptyset$.  %By Rem. \ref{rem1} 
Suppose to reach a contradiction that  $v\not\models \bigwedge \m(H)$. There is  
$r_j^{\triangleright}\in H$ such that $v\not\models \m(r_j^{\triangleright})$ while $w\models \m(r_j^{\triangleright})$.
% The former  implies that 
%$v\models b(r_j)$, $v\models\neg h(r_j)$
 %and  $v\not\models b(r_k)$  for all $r_k \in\kbo$ s.t.
% $r_k\triangleright  r_j$. 
By Lem. \ref{little:lem1}, $r_j\in V(v)$ while  $r_j\not\in V(w)$. This contradicts the fact that $V(v)\subseteq V(w)$. Hence
 $v\models \bigwedge \m(H)$.  But $v\models a$ and $v\models \m(r_i^{\triangleright})$. So  $out_4^{+}(H\cup\{r_i^{\triangleright}\},a)$ is PL-consistent with $a$. Since $w\models \bigwedge \m(H)$ and $w\not\models \m(r_i^{\triangleright})$, 
$r_i^{\triangleright}\not\in H$. As a result, $H$ is not $\subseteq$-maximal. Contradiction. Hence $w\in\max_{\succeq_I}(\Vert a\Vert)$. 
%\end{description}
%\vspace{-0.1cm}\phantom\qedhere
\end{proof}

\bridgeLTR*

\begin{proof}%[Proof of Prop. \ref{preftoiol}]
%I show the contrapositive. 
Let $w\models a$ and $w\not\models \bigvee_{H\in\mathit{maxf} (\kbo_{\triangleright},a,a)}\bigl(\bigwedge \m(H)\bigr)$.
So 
 ($\star$) $w\not\models\bigwedge \m(H)$ for all $H\in\mathit{maxf}(\kbo_{\triangleright},a,a)$. 
%We have $\mathit{maxf}(\kb^{\triangleright},a,a)\not=\{\emptyset\}$. Otherwise  by Rem. \ref{little:lem2} ($\star$) would mean that %$w\not\models \top$$-$an impossibility. 
%\begin{description}
%\item [Case 1:]  
%$\mathit{maxf}(\kb^{\triangleright},a,a)=\emptyset$.  In this case, for all $r\in\kb$, $\{a\}\cup m %(r^{\triangleright})\models\bot$. Suppose $w\not\in\max_{\succeq_I}(\Vert a\Vert)$. Hence there is $v\models a$ %with $v\succ_I w$. By Lem \ref{l1}, 
%$\exists r_i\in \kbo$ s.t. $v\models m(r_i^{\triangleright}) $ and 
%$w\not\models m(r_i^{\triangleright})$. This is not possible, because  $\{a\}\cup m (r_i^{\triangleright})%%\models\bot$. So $w\not\in\max_{\succeq_I}(\Vert a\Vert)$ as required.
%\item [Case 2:]  
%We have $\mathit{maxf}(\kb^{\triangleright},a,a)\not =\emptyset$. 
Let  $\kb^{+}=\{r^{\triangleright}: 
w\models \m(r^{\triangleright})\}$. 
%I break the argument into two cases:
\begin{description}
\item [Case 1:]  $ \kb^{+}\not=\emptyset$. Since $w\models a$,  $\bigwedge\kb^{+}\wedge a$ is satisfiable, and hence PL-consistent. Hence  $\kb^{+}\subseteq H$ for some $H\in \mathit{maxf}(\kbo_{\triangleright},a,a)$.  By ($\star$),
$\kb^{+}\subset H$. By $H\in \mathit{maxf}(\kbo_{\triangleright},a,a)$, $a\wedge \m(H)$ is PL-consistent. By repleteness, $\exists v$ s.t. $v\models a\wedge \bigwedge \m(H)$. 

Let $r_i:=\bigcirc (y/b)\in V(v)$. By  Lem.\,\ref{little:lem1}, $v\not\models \m(r_i^{\triangleright})$. Suppose $w\models \m(r_i^{\triangleright})$. By definition, $r_i^{\triangleright}\in \kb^{+}\subseteq H$. Since $v\models \bigwedge \m(H)$, $v\models \m(r_i^{\triangleright})$, a contradiction. Hence $w\not\models \m(r_i^{\triangleright})$ and so $r_i\in V(w)$, by  Lem.\,\ref{little:lem1}.  Hence $V(v)\subseteq V(w)$, and so $v\succeq_I w$.

Since $\kb^{+}\subset H$, there  is $r_j^{\triangleright}$ s.t. $r_j^{\triangleright}\in H$  and $r_j^{\triangleright}\not\in \kb^{+}$. Since $v\models\bigwedge \m(H)$, $v\models \m(r_j^{\triangleright})$, and so by Lem.\;\ref{little:lem1}, $r_j\not\in V(v)$.
Since $r_j^{\triangleright}\not\in \kb^{+}$, $w\not\models \m(r_j^{\triangleright})$. So  again by Lem. \ref{little:lem1}, $r_j\in V(w)$. Hence $V(w)\not\subseteq V(v)$, viz., $w\not\succeq_I v$.

The two imply $v\succ_I w$, so that  $w\not\in\max_{\succeq_I}(\Vert a\Vert)$.

\item [Case 2:]   $ \kb^{+}=\emptyset$. By Lem.\,\ref{little:lem1}, $V(w)=\kbo$. Intuitively, $w$ violates all the obligations in $\kbo$. Without loss of generality, assume that $\exists H\not=\emptyset\in \mathit{maxf}(\kbo_{\triangleright},a,a)$.  For, if $\mathit{maxf}(\kbo_{\triangleright},a,a)=\{\emptyset\}$, then 
one would get $w\models \bigwedge \emptyset$ (\emph{aka} $\top$), contradicting ($\star$). By definition,  $a\wedge \bigwedge \m(H)$ is PL-consistent, and thus by repleteness $\exists v$ s.t. $v\models a\wedge \bigwedge \m(H)$. $V(v)\subseteq V(w)=\kbo$, so $v\succeq_I w$. But $H\not=\emptyset$, and so $\exists r_k^{\triangleright}\in H $ s.t. $v\models \m(r_k^{\triangleright})$. By Lem.\,\ref{little:lem1},   $ r_k\not\in V(v)$, and so $V(w)\not\subseteq V(v)$. Hence  $w\not\succeq_I v$, and so $v\succ_I w$, which suffices for $w\not\in\max_{\succeq_I}(\Vert a\Vert)$.
\end{description}
%\vspace{-0.5cm}\phantom\qedhere
\end{proof}

\bridgecoro*

\begin{proof}%[Proof of Prop. \ref{iol:main}]

Left-to-right. Let $(\{\top\},\kb)\mid\sim\bigcirc_{\mathrm{H}}(x/a)$. Hence, for all $\kb$-ordered model $M$ and $w$ in $M$, $w\models\bigcirc_{\mathrm{H}} (x/a)$. Let $H\in \mathit{maxf} (\kbo_{\triangleright},a,a)$. Suppose, for a reductio, that $\{a\} \cup \m(H)\cup\{\neg x\}$ is PL-consistent. 
By repleteness, $\exists v$ s.t. $v\models \bigwedge \m(H)$, $v\models a $ and $v\not\models x$. Clearly, $v\models \bigvee_{H\in \mathit{maxf}(\kbo_{\triangleright},a,a)} \bigr(\bigwedge \m(H)\big)$.
By Th.\,\ref{ioltopref},  $v\in\max_{\succeq_I}(\Vert a\Vert)$, and so $v\models x$, a contradiction. Hence $\{a\} \cup \m(H)\cup\{\neg x\}$ is not PL-consistent. In other words,  
$\{a\} \cup \m(H)\modpl x$, which suffices for $x\in out_4^+ (H,a)$. Hence $x\in \cap (\mathit{outf} (\kbo_{\triangleright},a,a))$.

Right-to-left. Assume $x\in \cap (\mathit{outf} (\kbo_{\triangleright},a,a))$. Let  $M$ be an $\kb$-ordered model, and $w$ in $M$. Let $v\in\max_{\succeq_I}(\Vert a\Vert)$. By Th.~\ref{preftoiol}, $v\models a$ and $v\models \bigvee_{H\in \mathit{maxf}(\kbo_{\triangleright},a,a)} \bigr(\bigwedge \m(H)\big)$. So $\exists H\in \mathit{maxf}(\kbo_{\triangleright},a,a)$ s.t. $v\models\bigwedge \m(H)$. 
%Suppose $H=\emptyset$. In this case, $\mathit{outf} (\kb^{\triangleright},a,a))=out_4^{+}(\emptyset,a)= Cn_PL(x)$, and so $v\models x$.   Suppose $H\not=\emptyset$.  In this case, 
But $x\in out_4^+ (H,a)$, viz. $\{a\}\cup \m(H)\modpl x$. So $v\models x$. This shows that  $w\models\bigcirc_{\mathrm{H}}  (x/a)$. Hence, $(\{\top\},\kb)\mid\sim\bigcirc_{\mathrm{H}}(x/a)$.
%\phantom\qedhere
\end{proof}

\bridgecoroEA*

\begin{proof} Assume $x\in \cap (\mathit{outf} (\kbo_{\triangleright},a,a))$.  Let  $M$ be an $\kb$-ordered model, and $w$ in $M$ be s.t. $w\models \Diamond a$. Hence, $\exists v$ s.t. $ v\models a$. By smoothness, $\exists u\in \max_{\succeq_I}(\Vert a\Vert)$.  The same argument as used for the right-to-left direction of Th.\,\ref{iol:main} yields $w\models \bigcirc_{\mathrm{H}}  (x/a)$. Hence, $u\models a\wedge x$.
%$w\models \bigcirc_{\mathrm{H}}  (x/a)$, so  $\max_{\succeq_I}(\Vert a\Vert)%\subseteq \Vert x\Vert$. Thus,  $u\models a\wedge x$.
Let $s$ be s.t. $s\succeq_I u$ and $s\models a$.  We claim that $s\in \max_{\succeq_I}(\Vert a\Vert) $. Let $t$ be s.t. $t\succeq_I s$ and $t\models a$.  By transitivity, 
$t\succeq_I u$. Hence $u\succeq_I t$,  since $u\in \max_{\succeq_I}(\Vert a\Vert)$. By transitivity again, $s\succeq_I t$. So $s\in \max_{\succeq_I}(\Vert a\Vert) $, and 
so $s\models x$.  By Prop.\,\ref{thm:collapse}, $w\models \dob (x/a)$. Hence, $(\{\diamond a\},\kb)\mid\sim\bigcirc (x/a)$.
%\vspace{-0.cm}\phantom\qedhere
\end{proof}

\author{% 
Xavier Parent
   % Author name
    \affiliations
    %Affiliation
    Institute of Logic and Computation, TU Wien
    \emails
    xavier.parent@tuwien.ac.at
    %email@example.com    % email
}

%\fi
% Multiple author syntax
%\author{%
%First Author$^1$\and
%Second Author$^2$\and
%Third Author$^{2,3}$\and
%Fourth Author$^4$ \\
%\affiliations
%$^1$First Affiliation\\
%$^2$Second Affiliation\\
%$^3$Third Affiliation\\
%$^4$Fourth Affiliation \\
%\emails
%\{first, second\}@example.com,
%third@other.example.com,
%fourth@example.com

%\loadtheorems{kr26-instructions.thm}

%\externaldocument{kr26-instructi%ons}

%\begin{document}

%\maketitle

%\section{Proofs}

%This document contains proofs of claims in %\cite{Parent26b}.
%The document will be made available on the author’s website upon acceptance.

%contains proofs of propositions and facts stated without proof in the submitted paper.

\section{Proof of Fact \ref{lem:decreasing} and Prop. \ref{smoothn}}

\setcounter{fact}{0}
%\setbox0=\vbox{\input{fact1.tex}} 
%\ref{fact:first}

%\myfact

	\begin{proof} By induction.%\vspace{-0.3cm}
	\begin{description}	
	\item [Base case.]
For $i = 0$, we show $\mathcal{E}_1 \subset \mathcal{E}_{0}$.
		Let $A\Rightarrow B\in \mathcal{E}_{1}$. By definition, $A\Rightarrow B\in \kbc$ and  $\kbc =\mathcal{E}_0$. So $\mathcal{E}_1 \subseteq \mathcal{E}_{0}$.
Suppose, to reach a contradiction, that	$\mathcal{E}_0 \subseteq \mathcal{E}_{1}$. Hence $\kbc\subseteq \varepsilon (\kbc)$. So $\m (\kbc)\models_{\PL} \bigwedge_{r\in\kbc} \neg b(r)$, and thus $\kbc$ is not coherent, contrary to Assumption \ref{assm:cons}.  Hence, $\mathcal{E}_0 \not\subseteq \mathcal{E}_{1}$ and so $\mathcal{E}_1 \subset \mathcal{E}_{0}$.

		%		$\mathcal{E}_1 = \varepsilon(\mathcal{E}_{0})$, and thus $A\Rightarrow %B\in \kbc = \mathcal{E}_0$ as required. 
%\vspace{-0.5cm}	

\item [Induction step.] Let $i \ge 0$ and suppose (inductive hypothesis, I.H.) that 
%\paragraph{\it Induction hypothesis.}
%We assume that, for some $i \ge 0$,
\[
 \mathcal{E}_{i+1} \subset  \mathcal{E}_{i}.
\]
We show 
\[
 \mathcal{E}_{i+2} \subset  \mathcal{E}_{i+1}.
\]
Let $A\Rightarrow B\in\mathcal{E}_{i+2}$. By definition, $A\Rightarrow B\in\kbo$ and 
		$\m(\mathcal{E}_{i+1})\models_{\PL}\neg A$. By the I.H., 
$\mathcal{E}_{i+1}\subseteq\mathcal{E}_i$. Hence $\m(\mathcal{E}_{i+1})\subseteq\m(\mathcal{E}_{i})$, and so $\m(\mathcal{E}_{i})\models_{\PL}\neg A$. It follows that $A\Rightarrow B\in\mathcal{E}_{i+1}$. Hence $\mathcal{E}_{i+2} \subseteq  \mathcal{E}_{i+1}$.

Suppose, to reach a contradiction, that $\mathcal{E}_{i+1} \subseteq  \mathcal{E}_{i+2}$. Thus, $\mathcal{E}_{i+1} \subseteq \varepsilon (\mathcal{E}_{i+1})$. So $\m (\mathcal{E}_{i+1})\models_{\PL} \bigwedge_{r\in\mathcal{E}_{i+1}} \neg b(r)$, and thus $\kbc$ is not coherent,  contrary to Assumption \ref{assm:cons}.  Hence, $\mathcal{E}_{i+1} \not\subseteq  \mathcal{E}_{i+2}$, and so $\mathcal{E}_{i+2} \subset  \mathcal{E}_{i+1}$.

\end{description}

		\end{proof}

\setcounter{proposition}{0}

\begin{proof} The proof proceeds by contradiction.
 Let $w_1\models A$ and $w_1\not\in
\max_{\succeq_N} (\Vert A\Vert)$. Suppose that $\forall v.$ $v\succ_N w_1 \rightarrowtriangle v\not\in
\max_{\succeq_N} (\Vert A\Vert)$. By the irreflexivity and transitivity of  $\succ_N$,   $w_1$ starts an infinite sequence $\sigma$ of increasingly more normal worlds (all distinct) $... \succ_N w_n \succ_N ... w_2\succ_N w_1$.\footnote{Cf. e.g. \cite[Prop. 4]{Parent14}.}
Fix any world $w_i$ in this sequence, and let its associated tuple be
\[
%t(w_i)=
\langle n^{w_i}_1,\ldots,n^{w_i}_m\rangle .
\]
Since worlds are ordered lexicographically by their associated tuples (i.e., comparing the first coordinate at which two tuples differ), for any two adjacent worlds $w_{k+1}\succ_N w_k$ in the sequence there exists a smallest index $j$ for which 
%least index $j$ such that
\[
n^{w_{k+1}}_j < n^{w_k}_j,
\]
and for all $i< j$,
\[
n^{w_{k+1}}_{i} = n^{w_j}_{i}.
\]
As we move upward along the sequence $\sigma$, the smallest $j$ at which tuples differ strictly decreases.
%otherwise, an increase or repetition of $j$ would violate the strict lexicographic ordering. 
Hence  $\sigma$ induces an infinite strictly decreasing sequence of natural numbers:
%an infinite strictly decreasing sequence of indices
\[
j_1 > j_2 > j_3 > \cdots .
\]
However, since tuples have finite length $m$, no such infinite sequence of natural numbers exists. Contradiction.

\end{proof}
\section{Proof of Prop. \ref{th:inc}}

%\ref{th:inc}

%\input{theorems.tex}

\setbox0=\vbox{} 
% First statement appears %here too
\setcounter{proposition}{4} 
\mythm

%According to Theorem~\ref{cond:inclusion} from %the other document

%~\repthm{cond:inclusion}

%\repthm{cond:inclusion}
%\begin{proposition}[Inclusion, normality]\label{thm:inclusion} Consider a theory $\Delta$ such that $A\Rightarrow B\in\kbc$   and $A\wedge\neg B$ is PL-consistent. Then (under the assumption of repleteness)
%	\begin{align}
%		&  \Delta\mid\sim A\Rightarrow B \label{inc:o:cor} 
%	\end{align}
%	
%	% and $ A\Rightarrow B\in \kbc $. Then,  for all  $w$ \mbox{in }$M$, 
%	%\begin{flalign}
%	%	&  M,w\models A\Rightarrow %B \label{inc}\tag{Inc}
%	%\end{flalign}
%	
%\end{proposition}
\begin{proof} The proof proceeds via three lemmas.\footnote{I have drawn on \cite[Th. 2 and 3]{Lehmann95}.}
The rank $\rr$ of a Boolean $A$ is defined to be the smallest $i$ for which $A$ is not exceptional for $\mathcal{E}_{i}$, viz 
$\rr (A)= \min\{i | \mathrm{m}({\mathcal{E}_i}) \not\models_{\mathrm{PL}} \neg A\}$.  A
formula $A$ that is exceptional for all  $\mathcal{E}_{i}$  has no rank. An example of a formula with no rank is $\bot$. %and $\rr (A)=\infty$. 

%case where $A$ has no rank?

\setcounter{lemma}{0}
\renewcommand{\thelemma}{S\arabic{lemma}}
\begin{lemma} \label{inc:l1}  $\rr(A)< 
	\rr(A\wedge\neg B) \leq m$, where $m$ is the order of $\kbc$.
	% If $A\Rightarrow B\in\kbc$, then $\rr(A)< 
	%	\rr(A\wedge\neg B)$ or $A$ (and hence $A\wedge\neg B$) has no rank. %(What if $A\wedge\neg B$ has no rank?).
\end{lemma}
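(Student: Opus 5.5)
The plan is to locate $A\Rightarrow B$ in the ranked partition $\langle\Delta_0,\ldots,\Delta_m\rangle$ and read both ranks off that position, using only Fact~\ref{lem:decreasing}, Fact~\ref{lem:decreasing2} and the definition of $\varepsilon$. This uses the hypotheses of Prop.~\ref{th:inc}, namely $A\Rightarrow B\in\kbc$ and $A\wedge\neg B$ PL-consistent. By Fact~\ref{lem:decreasing2}, $\mathcal{E}_m=\mathcal{E}_\infty=\emptyset$ and so $\Delta_m=\emptyset$. Hence there is a unique $k<m$ with $A\Rightarrow B\in\Delta_k$, that is, $A\Rightarrow B\in\mathcal{E}_k$ and $A\Rightarrow B\notin\mathcal{E}_{k+1}$.

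The first step is to show $\rr(A)=k$. Since $A\Rightarrow B\notin\mathcal{E}_{k+1}=\varepsilon(\mathcal{E}_k)$ while it belongs to $\kbc$, $A$ is not exceptional for $\mathcal{E}_k$, i.e.\ $\m(\mathcal{E}_k)\not\models_{\PL}\neg A$. Now take $k\geq 1$ and $i<k$. Membership $A\Rightarrow B\in\mathcal{E}_k=\varepsilon(\mathcal{E}_{k-1})$ gives $\m(\mathcal{E}_{k-1})\models_{\PL}\neg A$. By Fact~\ref{lem:decreasing}, $\mathcal{E}_{k-1}\subseteq\mathcal{E}_i$, so $\m(\mathcal{E}_{k-1})\subseteq\m(\mathcal{E}_i)$, and monotonicity of $\models_{\PL}$ yields $\m(\mathcal{E}_i)\models_{\PL}\neg A$. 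Hence $k$ is the least index at which $A$ is non-exceptional, and $\rr(A)=k$. The case $k=0$ needs only the first observation.

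The second step is the strict inequality $\rr(A\wedge\neg B)>k$. For every $i\leq k$ we have $A\Rightarrow B\in\mathcal{E}_k\subseteq\mathcal{E}_i$, so $A\rightarrow B\in\m(\mathcal{E}_i)$. It follows that $\m(\mathcal{E}_i)\models_{\PL}\neg(A\wedge\neg B)$, so $A\wedge\neg B$ is exceptional for every $\mathcal{E}_i$ with $i\leq k$.

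For the upper bound, $\m(\mathcal{E}_m)=\m(\emptyset)=\emptyset$. Since $A\wedge\neg B$ is PL-consistent, $\emptyset\not\models_{\PL}\neg(A\wedge\neg B)$. So $A\wedge\neg B$ has a rank, and that rank is at most $m$. Combining the two steps gives $\rr(A)=k<\rr(A\wedge\neg B)\leq m$.

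I expect no real obstacle. The only delicate point is indexing: showing that $A\Rightarrow B$ sits in some $\Delta_k$ with $k<m$, and that the rank is a genuine minimum. Both are settled by Fact~\ref{lem:decreasing2} together with the monotonicity supplied by Fact~\ref{lem:decreasing}.
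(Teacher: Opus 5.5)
Your proof is correct and is essentially the paper's argument: Fact~\ref{lem:decreasing2} forces $A$ to become non-exceptional at some stage below $m$; the fact that $A\Rightarrow B\in\mathcal{E}_{\rr(A)}$ makes $A\wedge\neg B$ exceptional for every $\mathcal{E}_j$ with $j\leq\rr(A)$; and $\m(\mathcal{E}_m)=\emptyset$, together with PL-consistency of $A\wedge\neg B$, gives the upper bound. The differences are only organizational (you start from the level $k$ of $A\Rightarrow B$ in the ranked partition and prove $\rr(A)=k$, while the paper starts from $\rr(A)$ and derives $A\Rightarrow B\in\mathcal{E}_{\rr(A)}$), and your separate handling of $k=0$ covers a boundary case that the paper's appeal to $\mathcal{E}_{i-1}$ passes over.
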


%What is $A$ has a rank, but not $A\wedge\neg B$?

\begin{proof}[Proof of Lem. \ref{inc:l1}] 
%By Fact \ref{A-lem:decreasing2}, 
By definition of a LM-sequence, 
%$\mathcal{E}_{m=\infty}=\emptyset$.  By definition of $\varepsilon$, 
$\mathcal{E}_{\infty=m}=\varepsilon (\mathcal{E}_{m-1})$. Suppose, to reach a contradiction, that $\m(\mathcal{E}_{m-1})\models_{\PL}\neg A$. Then, $A\Rightarrow B\in \mathcal{E}_{\infty}$ would follow, contradicting Fact \ref{lem:decreasing2}, stating that $\mathcal{E}_{\infty}=\emptyset$.  Hence, $\m(\mathcal{E}_{m-1})\not\models_{\PL}\neg A$, and so $\exists i$ s.t. $ \rr (A)=i \leq m-1$. 
%s.t. $\rr (A)=i$
 We have $\m(\mathcal{E}_{i})\not\models_{\PL} \neg A$ and $\m (\mathcal{E}_{j}) \models_{\PL} \neg A$ for all $j<i$. 
 
 Since $A\wedge\neg B$ is PL-consistent,  
$\m(\mathcal{E}_{m}) =\emptyset \not\models_{\PL} \neg (A \wedge \neg B)$. Hence,  $A\wedge\neg B$ is assigned a rank. 
Since $\neg A\models_{\PL} \neg (A \wedge \neg B)$,   $\m ({\mathcal{E}_{j}}) \models_{\PL} \neg (A\wedge\neg B)$ for all $j<i$.  Also $\m(\mathcal{E}_{i}) \models_{\PL} \neg (A \wedge \neg B)$, since 
$\m (\mathcal{E}_{i-1}) \models_{\PL} \neg A$ so that
	$A\Rightarrow B\in\varepsilon(\mathcal{E}_{i-1})=\mathcal{E}_{i}$ (and thus $A\rightarrow B\in\m(\mathcal{E}_{i})$).
Hence  $i<\rr (A\wedge\neg B)\leq m$.
	\end{proof}

\begin{lemma}\label{inc:f0} 
	\begin{flalign}
			&\forall i \in \{0, ..., m\}, \Delta_{i}\subseteq \mathcal{E}_{i} \tag{$\alpha$}\label{inc:f1}
			\\
		&	\forall i \in \{0, ..., m\},	\mathcal{E}_{i} = \bigcup_{j=i}^{m} \Delta_j  \tag{$\beta$} \label{inc:f2}	
	\end{flalign}
	\end{lemma}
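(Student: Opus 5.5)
Both parts follow directly from the definition of the ranked partition, $\Delta_i=\mathcal{E}_i-\mathcal{E}_{i+1}$ for $i<m$ and $\Delta_m=\mathcal{E}_\infty=\mathcal{E}_m$, together with Fact~\ref{lem:decreasing}. No semantic reasoning is needed; everything is set-theoretic bookkeeping on the LM-sequence.

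For (\ref{inc:f1}), fix $i\in\{0,\dots,m\}$. If $i<m$, then $\Delta_i=\mathcal{E}_i-\mathcal{E}_{i+1}\subseteq\mathcal{E}_i$ by the definition of set difference. If $i=m$, then $\Delta_m=\mathcal{E}_m$, so the inclusion is trivial.

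For (\ref{inc:f2}), I will use downward induction on $i$, from $m$ down to $0$.
\begin{itemize}
\item Base case $i=m$: the right-hand side is $\bigcup_{j=m}^{m}\Delta_j=\Delta_m=\mathcal{E}_m$.
\item Inductive step: for $i<m$, assume $\mathcal{E}_{i+1}=\bigcup_{j=i+1}^{m}\Delta_j$. By Fact~\ref{lem:decreasing}, $\mathcal{E}_{i+1}\subseteq\mathcal{E}_i$. Hence
\[\mathcal{E}_i=(\mathcal{E}_i-\mathcal{E}_{i+1})\cup\mathcal{E}_{i+1}=\Delta_i\cup\bigcup_{j=i+1}^{m}\Delta_j=\bigcup_{j=i}^{m}\Delta_j.\]
\end{itemize}
Only the weak inclusion $\mathcal{E}_{i+1}\subseteq\mathcal{E}_i$ from Fact~\ref{lem:decreasing} is used here, not strictness.

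The only delicate point is the boundary convention at index $m$: one must use $\Delta_m=\mathcal{E}_\infty=\mathcal{E}_m$ rather than a difference. By Fact~\ref{lem:decreasing2} this set is in fact empty, but the argument does not depend on that. Given this, I expect no real obstacle. As a by-product, the argument also shows that the $\Delta_j$ are pairwise disjoint and partition $\mathcal{E}_0=\kbc$, which is presumably what the subsequent lemmas require.
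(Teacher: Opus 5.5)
Your proposal is correct and matches the paper's proof: (\ref{inc:f1}) comes straight from the definition of $\Delta_i$, and (\ref{inc:f2}) is proved by reverse induction on the LM-sequence using $\mathcal{E}_i=(\mathcal{E}_i-\mathcal{E}_{i+1})\cup\mathcal{E}_{i+1}$. The one difference is that you start the induction at $i=m$ rather than at $m-1$; this avoids the paper's appeal to $\mathcal{E}_m=\emptyset$ (Fact~\ref{lem:decreasing2}) and covers the case $i=m$ explicitly.
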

		
(\ref{inc:f2})  says that $\mathcal{E}_{i}$ is the union of all $\Delta_j$ from $i$ to $m$.

\begin{proof}[Proof of Lem. \ref{inc:f0}]  %(\ref{inc:f1})(\ref{inc:f2})
(\ref{inc:f1}) follows from the definition: $\forall i \in \{0, ...,m-1\}$, $\Delta_{i}= \mathcal{E}_{i}-\mathcal{E}_{i+1}$ ; $\Delta_{m}=\mathcal{E}_{\infty}$.  The proof  of (\ref{inc:f2}) is by induction on the number of nested sets in the LM-sequence when read in reverse order.  By Fact \ref{lem:decreasing}, 
$ \mathcal{E}_{m}\subseteq ... \subseteq \mathcal{E}_{i} \subseteq ... \subseteq \mathcal{E}_{0}$. 
%$\mathcal{E}_{0}\supseteq ... \supseteq \mathcal{E}_{i}\supseteq ... \supseteq \mathcal{E}_{m=\infty}$.  

\begin{description}
	\item[Base case:] $m-1$ (the second-to-last set).  Since 	$\mathcal{E}_{m}=\mathcal{E}_{m+1}=\emptyset$, the equality is verified:
	\begin{flalign*}
		\mathcal{E}_{m-1}=
		 (\underbrace{\mathcal{E}_{m-1}-\emptyset}_{=\Delta_{m-1}}) \cup \underbrace{\emptyset}_{=\Delta_{m}}	  
	\end{flalign*}

	\item[Induction step: ] as an I.H., we assume (\ref{inc:f2}) holds for 
	$\mathcal{E}_{i+1}$, viz. 
	\begin{flalign*}
		\mathcal{E}_{i+1}=
		\bigcup_{j=i+1}^{m} \Delta_j		
	\end{flalign*}
	We argue (\ref{inc:f2}) holds for 	$\mathcal{E}_{i}$:
	\begin{flalign*}
		\mathcal{E}_{i}=&	(\mathcal{E}_{i}-	\mathcal{E}_{i+1})\cup	\mathcal{E}_{i+1} \qquad \mbox{ (set-theory)}\\
					=&	(\mathcal{E}_{i}-	\mathcal{E}_{i+1})\cup	\bigcup_{j=i+1}^{m} \Delta_j\qquad\mbox{ (I.H.)} \\
			=&	\Delta_{i}\cup	\bigcup_{j=i+1}^{m} \Delta_j\qquad\mbox{ (def of } 	\Delta_{i}) \\
			=&	\bigcup_{j=i}^{m} \Delta_j\qquad
	\end{flalign*}

\end{description}
\end{proof}

\begin{lemma} \label{inc:l2} 
	$\max_{\succeq_{N}} (\Vert A\Vert)\subseteq\Vert B\Vert$.
\end{lemma}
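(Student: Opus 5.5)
The plan is to argue by contradiction, comparing an arbitrary most normal $A$-world with a world built by repleteness. Fix a model $M$ and a world $w\in\max_{\succeq_N}(\Vert A\Vert)$, and suppose $w\models\neg B$. Then $w\models A\wedge\neg B$, so $A\Rightarrow B\in F(w)$. The goal is to produce some $v\models A$ with $v\succ_N w$, which contradicts the maximality of $w$.

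\emph{Step 1: build $v$.} By Lemma~\ref{inc:l1}, $A$ has a rank. Let $i=\rr(A)$, so $\m(\mathcal{E}_i)\not\models_{\PL}\neg A$. Hence $A\wedge\bigwedge\m(\mathcal{E}_i)$ is PL-consistent, and repleteness gives a world $v$ with $v\models A\wedge\bigwedge\m(\mathcal{E}_i)$. In particular $v$ falsifies no conditional of $\mathcal{E}_i$. By (\ref{inc:f2}) of Lemma~\ref{inc:f0}, this means $F(v)\cap\Delta_j=\emptyset$ for all $j\geq i$.

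\emph{Step 2: locate $A\Rightarrow B$.} I claim $A\Rightarrow B\in\mathcal{E}_i$. If $i=0$ this is trivial, since $\mathcal{E}_0=\kbc$. If $i>0$, then $\m(\mathcal{E}_{i-1})\models_{\PL}\neg A$ by minimality of the rank, so $A\Rightarrow B\in\varepsilon(\mathcal{E}_{i-1})=\mathcal{E}_i$; this is the same step as in the proof of Lemma~\ref{inc:l1}. By (\ref{inc:f2}), $A\Rightarrow B\in\Delta_k$ for some $k$ with $i\leq k\leq m$. Since $A\rightarrow B\in\m(\mathcal{E}_i)$, we also get $v\models B$, although this is not strictly needed.

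\emph{Step 3: compare tuples.} Recall that the coordinate $n_j$ counts falsifications in $\Delta_{m-j}$, so the most specific levels come first. For $v$, all coordinates $n_j$ with $m-j\geq i$, i.e.\ $j\leq m-i$, are $0$. For $w$, the coordinate $n_{m-k}$ is $\geq 1$, and $m-k\leq m-i$. Hence the first coordinate at which the two tuples differ has index at most $m-k\leq m-i$, and at that coordinate $v$ has $0$ while $w$ has a positive value. So $F(v)\gtrsim F(w)$ strictly, i.e.\ $v\succ_N w$. Since $v\models A$, this contradicts $w\in\max_{\succeq_N}(\Vert A\Vert)$. Therefore $w\models B$, as required.

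The main obstacle is Step 3, namely getting the reversed indexing $n_j=|\Delta_{m-j}\cap X|$ right, so that emptiness on the levels $\geq i$ really corresponds to an initial segment of the tuple. A secondary point is the coordinate for $\Delta_m$: by Fact~\ref{lem:decreasing2} we have $\Delta_m=\emptyset$, so it can be ignored, and $k\leq m-1$ in Step 2.
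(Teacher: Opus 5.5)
Your proof is correct and is essentially the paper's argument. It uses the same repleteness witness $v\models A\wedge\bigwedge\m(\mathcal{E}_i)$ with $i=\rr(A)$, and the same lexicographic comparison showing $v\succ_N w$ whenever a maximal $A$-world $w$ falsifies a conditional in some $\Delta_j$ with $j\geq i$. The only difference is that you assume $w\models\neg B$ and place $A\Rightarrow B$ itself in such a level; the paper instead first concludes $w\models\bigwedge\m(\mathcal{E}_i)$ and then applies $\m(\mathcal{E}_i)\models_{\PL}\neg(A\wedge\neg B)$ from Lemma~\ref{inc:l1}, which rests on the same fact $A\Rightarrow B\in\mathcal{E}_i$.
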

\begin{proof}[Proof of Lem. \ref{inc:l2}] %Let  $A\Rightarrow B\in \kbc$ and $\rr (A)< 
	%\rr (A\wedge\neg B)$. Suppose
	
%	\begin{comment}
	Let $w\in\max_{\succeq_{N}} (\Vert A\Vert)$.
	As before let $\rr (A)=i$. By definition, $\m(\mathcal{E}_{i})\not\models_{\mathrm{PL}} \neg A$. Hence, 
$\bigwedge\m(\mathcal{E}_{i})\wedge A$ is PL-consistent. By repleteness, there is a world $v$ in $M$ s.t. $v\models\bigwedge\m(\mathcal{E}_{i})\wedge A$. To help the reader, Table \ref{relevant} presents the tuple associated with each world.
%$\langle  | \Delta_{m} \cap F(v) |,  \langle  | \Delta_{m} \cap F(w) |,
\begin{center}
\begin{table}[h]\centering
\begin{tabular}{ll}
& Tuple \\
& {\scriptsize $\langle |\Delta_{m-1} \cap F(v) |, ..., |\Delta_{i} \cap F(v) |, ... ,|\Delta_{0} \cap F(v) |\rangle$}\\
& {\scriptsize $\langle  |\Delta_{m-1} \cap F(w) |, ..., |\Delta_{i} \cap F(w) |, ... ,|\Delta_{0} \cap F(w) |\rangle$ }
\end{tabular}\caption{Relevant worlds and their tuples. %The level of specificity decreases from left to right. 
$i $ is the rank of $A$. }\label{relevant}
\end{table}
\end{center}
By Fact \ref{lem:decreasing}, $v\models\bigwedge\m(\mathcal{E}_{j})$ for all $j>i$.  By Lem.\,\ref{inc:f0} (\ref{inc:f1}), 
	$v\models\bigwedge\m(\Delta_{j})$ for all $j\geq i$. Hence,
	\begin{flalign*}
		& \forall j\geq i: \Delta_{j} \cap F(v) =\emptyset  
	\end{flalign*} 
	so that
	\begin{flalign}
		& \forall j\geq i: | \Delta_{j} \cap F(v)|= 0  \label{ff1}\tag{\#}
	\end{flalign} 
%Either (i) $F(w)=\emptyset$ or (ii)  $F(w)\not=\emptyset$. In case (i), one immediately gets $w\models B$, since $w\models A$ and $A\Rightarrow B\in\kbc$. So suppose case (ii) applies. Either (ii.1) $F(v)=\emptyset$ or (ii.2)  $F(v)\not=\emptyset$. In case (ii.1) one immediately gets $v\succ_{N}w$, contradicting the assumption that $w\in\max_{\succeq_{N}} (\Vert A\Vert)$. Suppose case (ii.2) applies. 
Assume that 
	\begin{flalign}
		& \exists j\geq i: | \Delta_{j}\cap F(w)| \not= 0 \label{ff2}\tag{\#\#}
	\end{flalign} 
	(\ref{ff1}) and  (\ref{ff2}) imply  $v\succ_{N}w$, a  contradiction. Hence 
\begin{flalign*}
		& \forall j\geq i: | \Delta_{j}\cap F(w)| = 0 
	\end{flalign*} 	
So	
	\begin{flalign*}
		& \forall j\geq i: \Delta_{j}\cap F(w) =\emptyset 
	\end{flalign*} 	
	This means that
	$\forall j\geq i$ and  $\forall r\in\Delta_j$,  $r\not\in F(w)$, and hence $w\models \m(r)$. In other worlds, \begin{flalign*}
		w\models\bigwedge \m (\bigcup_{j=i}^{m} \Delta_j)	 
	\end{flalign*} 
 By Lem. \ref{inc:f0} (\ref{inc:f2}),
	\begin{flalign*}
		w\models \bigwedge\m(\mathcal{E}_{i})
	\end{flalign*} 	
%In particular, $w\models\wedge\m(\mathcal{E}_{i})$.
 By Lem.\,\ref{inc:l1}, $i< \rr (A\wedge\neg B)$, and so $\m(\mathcal{E}_{i})\models_{\PL}\neg (A\wedge\neg B)$.  Hence $w\models \neg (A\wedge\neg B)$, from which $w\models B$ follows since $w\models A$. Hence $\max_{\succeq_{N}} (\Vert A\Vert) \subseteq\Vert B\Vert$ as required.
\end{proof}
%This completes the proof of  Prop. \ref{th:inc}.
\end{proof}

\bibliographystyle{kr}
\bibliography{horty}

\end{document}